\documentclass[journal,onecolumn]{IEEEtran}
\usepackage{cite}

\usepackage{perlaza}
\usepackage{tikz}
\usepackage{amsmath}
\usepackage[hidelinks]{hyperref}
\usepackage{url}
\usepackage{pifont}
\usepackage{longtable}
\usepackage{tabularx}
\usepackage{makecell, cellspace, caption}
\setlength\cellspacetoplimit{3pt}
\setlength\cellspacebottomlimit{3pt}
\usepackage{array}
\newcolumntype{L}[1]{>{\raggedright\let\newline\\\arraybackslash\hspace{0pt}}m{#1}}
\newcolumntype{C}[1]{>{\centering\let\newline\\\arraybackslash\hspace{0pt}}m{#1}}
\newcolumntype{R}[1]{>{\raggedleft\let\newline\\\arraybackslash\hspace{0pt}}m{#1}}

\hyphenation{op-tical net-works semi-conduc-tor IEEE-Xplore}

\makeatletter
\DeclareFontFamily{U}{tipa}{}
\DeclareFontShape{U}{tipa}{m}{n}{<->tipa10}{}
\newcommand{\arc@char}{{\usefont{U}{tipa}{m}{n}\symbol{62}}}%

\newcommand{\arc}[1]{\mathpalette\arc@arc{#1}}

\newcommand{\arc@arc}[2]{%
  \sbox0{$\m@th#1#2$}%
  \vbox{
    \hbox{\resizebox{\wd0}{\height}{\arc@char}}
    \nointerlineskip
    \box0
  }%
}
\makeatother
\newcommand{\ts}{\textsuperscript}

\DeclareMathOperator{\sinc}{sinc}

\begin{document}
\title{On the Trade-offs Between Information and Energy Transmission in the Finite Block-Length Regime with Finite Channel Inputs}

\author{Sadaf ul Zuhra,~\IEEEmembership{Member,~IEEE,}
Samir M. Perlaza,~\IEEEmembership{Senior Member,~IEEE,}
H. Vincent Poor,~\IEEEmembership{Fellow,~IEEE,}
Mikael Skoglund,~\IEEEmembership{Fellow,~IEEE}
\thanks{Sadaf ul Zuhra, Samir M. Perlaza, and H. Vincent Poor are with the Department of Electrical and Computer Engineering, Princeton University, 08540 Princeton, NJ, USA. $\lbrace$sadaf.zuhra, poor$\rbrace$@princeton.edu\newline
Samir M. Perlaza is with INRIA, Centre Inria d'Universit\'{e} C\^{o}te d'Azur, 2004  Route des Lucioles, 06902 Sophia Antipolis, France. $\lbrace$samir.perlaza$\rbrace$@inria.fr\newline
Mikael Skoglund is with the School of Electrical Engineering and Computer Science, Malvinas V\"{a}g 10,  KTH Royal Institute of Technology, 11428 Stockholm, Sweden. (skoglund@kth.se)\newline
Samir M. Perlaza is also with the Laboratoire de Math\'{e}matiques GAATI, Universit\'{e} de la Polyn\'{e}sie Fran\c{c}aise,  BP 6570, 98702 Faaa, French Polynesia. \newline
%
%
This research was supported in part by the European Commission through the H2020-MSCA-RISE-2019 program under grant 872172; in part by the Agence Nationale de la Recherche (ANR) through the project MAESTRO-5G (ANR-18-CE25-0012); in part by the U.S. National Science Foundation under Grant CCF-1908308; and in part by the French Government through the ``Plan de Relance" and ``Programme d’investissements d’avenir". \newline
This paper was presented in part at the IEEE Information Theory Workshop (ITW) 2021, IEEE International Symposium on Information Theory 2022 and IEEE ITW 2022.
}
}

%

\maketitle

\begin{abstract}
This paper characterizes the trade-offs between information and energy transmission over an additive white Gaussian noise channel in the finite block-length regime with finite sets of channel input symbols.
These trade-offs are characterized using impossibility and achievability bounds on the information transmission rate, energy transmission rate, decoding error probability (DEP) and energy outage probability (EOP) for a finite block-length code.
Given a set of channel input symbols, the impossibility results identify the tuples of information rate, energy rate, DEP and EOP that cannot be achieved by any code using the given set of channel inputs.
A novel method for constructing a family of codes that satisfy a target information rate, energy rate, DEP and EOP is also proposed.
The achievability bounds identify the set of tuples of information rate, energy rate, DEP and EOP that can be simultaneously achieved by the constructed family of codes.
The proposed construction matches the impossibility bounds for the information rate, energy rate, and the EOP.
However, for a given information rate, energy rate and EOP, the achieved DEP  is higher than the impossibility bound due to the choice of the decoding sets made during the code construction.
\end{abstract}

\begin{IEEEkeywords}
Simultaneous information and energy transmission, SIET, SWIPT, information-energy trade-offs, achievability, impossibility bounds, finite block-length, discrete channel inputs, finite channel inputs, finite constellations
\end{IEEEkeywords}

\section{Introduction} \label{sec:intro}
Simultaneous information and energy transmission (SIET) (also known as simultaneous wireless information and power transfer (SWIPT)) employs radio frequency (RF) signals to simultaneously accomplish the tasks of conveying information and providing energy to (possibly different) devices. In the following, for the sake of correctness, the denomination ``energy transmission'' is preferred against ``power transfer'', and thus, the acronym SIET is adopted in the remainder of this paper. 
Given a certain code, a fundamental issue of interest in SIET is, to characterize the fundamental limits on the information rate, energy rate, decoding error probability (DEP) and energy outage probability (EOP) achievable by such a code and to study the trade-offs between these parameters.
The trade-offs between information and energy transmission rates in SIET have previously been studied in the asymptotic regime~\cite{varshney2008transporting,amor2016fundamental,NizarItw2021,GroverSahai,amor2016feedback,KhalfetGIC} where the assumption of infinitely long transmissions guarantees that the DEP and the EOP can be made arbitrarily close to zero. Thus, the focus of the asymptotic results is only on the information and energy transmission rates. In the finite block-length regime, however, the DEP and EOP are bounded away from zero and pose additional constraints on the fundamental limits of SIET. 

A common assumption in the study of SIET over an additive white Gaussian noise (AWGN) channel is that the channel inputs are derived independently from a Gaussian distribution. In a departure from this norm, this work considers the more practical, discrete set of channel input symbols that define the channel. This work identifies the fundamental limits of SIET for such channels in the finite block-length regime in the presence of AWGN. Unlike Gaussian inputs, the fundamental limits obtained for finite discrete sets of channel input symbols provide insights for practical systems that operate with finite block-lengths and finite sets of channel inputs. 

\subsection{State of the art}
The existing body of work in SIET falls into three main categories. The first is the study of the the trade-offs between the information and energy rates that can be simultaneously transmitted by an RF signal. 
Earlier research in this area is focused primarily on the asymptotic regime~\cite{varshney2008transporting,GroverSahai,amor2016fundamental, NizarItw2021}. In this case, the notion of the information-energy region generalizes to the set of all information and energy rate tuples that can be simultaneously achieved in the asymptotic block-length regime~\cite{amor2016fundamental}. To capture the trade-off between the information and energy rates,~\cite{varshney2008transporting} defines a capacity-energy function for various channels including the discrete memoryless channel, binary symmetric channel, and the AWGN channel. In~\cite{GroverSahai}, the information-energy trade-off is studied for a coupled-inductor circuit that models a  slow frequency-selective fading channel.
The information-energy capacity region of the Gaussian multiple access channel is characterized in~\cite{amor2016feedback}, whereas the information-energy capacity region of the Gaussian interference channel is approximated in~\cite{KhalfetGIC}. Nonetheless, the information-energy trade-off is not the only trade-off involved in SIET. In the finite block-length regime, several other trade-offs appear (between the information rate, energy rate, DEP and EOP) which are taken into consideration in this paper. Within the finite block-length regime,~\cite{perlaza2018simultaneous} and~\cite{khalfet2019ultra} provide a characterization of the information-energy capacity region with binary antipodal channel inputs. 
Converse and achievability bounds on SIET for a given arbitrary number of channel inputs is presented in~\cite{zuhraITW} and~\cite{zuhraISIT}, respectively. 
The impact of energy harvester non-linearities on the fundamental limits of SIET in the finite block-length regime has been studied in~\cite{zuhraITW2}.


The second area of research that has received considerable attention in SIET is the modeling of the energy harvester (EH) circuits with the aim of providing accurate estimates of the energy harevested from an RF signal. This line of inquiry has revealed that, due to the presence of non-linear elements such as diodes in the EH circuits, the expected energy harvested from a signal is a function of the fourth power of the signal magnitude~\cite{varasteh2017wireless} in addition to the squared magnitude as was conventionally assumed (see~\cite{amor2016fundamental}, \cite{amor2016feedback} and~\cite{khalfet2019ultra}). Recent research on EH non-linearities~\cite{8115220,7547357} has shown that energy models that do not account for these non-linearities result in inaccurate estimates of the harvested energy. This work accounts for the EH non-linearities by adopting the models proposed in the literature for determining the energy harvested from the transmitted RF signals. 

The third category of research deals with aspects related to the design and implementation of SIET such as signal and system design, resource allocation, receiver architectures, energy harvester circuits, and decoding strategies. Optimal waveform design for SIET from a multi-antenna transmitter to multiple single antenna receivers is studied in~\cite{9447959}. Signal and system design exclusively for wireless energy transmission has been studied in~\cite{7547357,9411899,9184149,7867826, 9447237} and~\cite{9153166}. In~\cite{9377479}, the authors optimize resource allocation and beamforming for intelligent reflecting surfaces aided SIET. The memory of non-linear elements in the EH circuit is modeled as a Markov decision process in~\cite{9241856} and a learning based model is proposed for the EH circuit.
 An algorithm for designing a circular quadrature amplitude modulation scheme for SIET that maximizes the peak-to-average power ratio has been proposed in~\cite{9593249}. This paper contributes to this line of research by providing a method of code construction for SIET that satisfies a given feasible tuple of information rate, energy rate, DEP and EOP. 
 
More comprehensive overviews of the work on SIET in the second and third categories detailed above, can be found in~\cite{survey,surveyA},~\cite{8476597}, and~\cite{ClerckxFoundations}. A comparison of relevant aspects of the existing literature on SIET with this work is provided in Table~\ref{TableSOA} below. A tick mark indicates that the specific factor has been taken into consideration in the paper while a dash indicates that the concerned feature has not been taken into consideration in that reference.

\begin{table}[!htbp] 
\captionof{table}{Summary of the state-of-art} 
\centering
\begin{tabular}{ | m{1.5 cm} | m{1.3 cm} |m{1.7 cm} |m{1.5 cm} | m{3cm}| m{1cm}| m{1cm}|} 
 \hline
  \textbf{Reference}  & \textbf{Channel inputs} & \textbf{Block-length} & \textbf{EH non-linearities} & \textbf{Channel} & \textbf{DEP} & \textbf{EOP} \\
 \hline
 \hline
\cite{varshney2008transporting}       & Infinite                    & Asymptotic			& -				& DMC, AWGN & - & -\\
\hline
\cite{GroverSahai}  		      &Infinite			 & 	Asymptotic			& 	-			&  AWGN + Frequency-selective fading & - & - \\
\hline
\cite{NizarItw2021} 			& Infinite                    &     Asymptotic                            & \checkmark 		& Rayleigh fading & - & - \\
\hline
\cite{6373669} 				& Infinite 	               & Asymptotic				& 	-			& Flat fading & - & - \\
\hline
\cite{9241856} 		                & Infinite		       	       & Asymptotic		& -				& AWGN& - & - \\
 \hline
 \cite{9149424} 		        & Finite		       	       & 	Asymptotic			& -				& AWGN + Fading& - & - \\
 \hline
 \cite{9377479} 		        & Finite		       	       & 	Asymptotic			& -				& AWGN + Flat-fading& - & - \\
 \hline
 \cite{7998252} 		        & Finite		       	              & -			& 	-			& AWGN + Rayleigh fading & \checkmark & - \\
 \hline
  \cite{9593249} 		& Finite		       	              & -			& -				& AWGN& -  & - \\
 \hline
    \cite{varasteh2017wireless} 	& Infinite		       	              & Asymptotic	& \checkmark			& AWGN& - & - \\
 \hline
     \cite{8115220} 		        & Infinite		       	              & Asymptotic	& \checkmark			& Multi-path fading & - & - \\
     \hline
   \cite{9741251}			&	Uncountable	&	Asymptotic	&	-		&	AWGN + Rayleigh fading	& \checkmark & -\\
  \hline
   \cite{9734045}			&	Uncountable	&	Asymptotic	& -		&	AWGN + Rayleigh and Rician fading	& - & - \\
  \hline
   \cite{liu2022joint}			&	Uncountable	&	Asymptotic	&	-		&	AWGN	& -& -\\
   \hline
   \cite{9502719}			&	Uncountable	&	Asymptotic	&	\checkmark		&	Multi-path fading	& - & -\\     
 \hline
    \cite{6489506}			&	Uncountable	&	Asymptotic	&	-		&	Multi-path fading	& - & -\\     
 \hline
     \cite{ 7063588}			&	Uncountable	&	Asymptotic	&	-		&	AWGN + Flat-fading	& - & -\\     
 \hline
      \cite{9169700 }			&	Finite	&	Finite	&	 \checkmark			&	AWGN 	& - & -\\     
 \hline
  \cite{perlaza2018simultaneous}               & Finite                         & Finite              & 	-			& BSC & \checkmark & \checkmark\\
\hline
 \cite{khalfet2019ultra} 	& Finite		       	              & Finite			& 	-			& BSC& - & -\\
 \hline
      \textbf{This work} 	& Finite		       	              & Finite	& \checkmark			& AWGN& \checkmark & \checkmark\\
        \hline
 \end{tabular} \label{TableSOA}
 \end{table}

\subsection{Contributions}
This work studies the fundamental limits of the joint transmission of information and energy in the finite block-length regime for a given finite and discrete set of channel input symbols (\ie, constellations, such as QPSK, 16-QAM, 64-QAM etc).
In the current literature, no techniques exist for analyzing such a setup. 
The analysis in this work breaks away from the traditional method of taking a DEP and then bounding the information rate in terms of the DEP. Instead, in a new technique developed here, all the parameters of concern, namely, the energy transmission rate, the information transmission rate, DEP and EOP are evaluated as a function of the type. A type is understood in the sense of  the empirical frequency with which each channel input symbol appears in the codewords~\cite{CsiszarMoT}. The trade-offs between the energy transmission rate, the information transmission rate, DEP and EOP can then be easily analyzed through the type of the code. 
%

The main contributions of this work are summarized below.
\begin{itemize}
\item A new technique for analyzing the fundamental limits of SIET in the finite block-length regime with finite and discrete set of channel inputs is developed using the method of types~\cite{CsiszarMoT}. It is shown that the bounds on all the parameters of SIET can be expressed as a function of the type induced by the code. Therefore, the trade-offs between these parameters can be controlled by adjusting the type. 

It is noteworthy that though this technique is developed in the context of SIET, its applicability is not limited to SIET. The technique   provides an elegant method of defining the fundamental limits in any information theoretic problem where the parameters under consideration can be expressed as a function of the type induced by the code.
\item Impossibility bounds for finite block-length SIET over an AWGN channel with a peak-amplitude constraint are characterized for a given finite set of channel input symbols. Impossibility bounds define the set of information rate, energy rate, DEP and EOP tuples that cannot be simultaneously achieved by any code that employs the given set of channel input symbols.
\item Based on the insights provided by the impossibility results, a method of constructing codes for SIET over an AWGN channel in the finite block-length regime is proposed.
\item The achievable information-energy region for the constructed family of codes is characterized. The achievable region defines the set of information rate, energy rate, DEP and EOP tuples that can be simultaneously achieved by at least one code from the constructed family of codes.
\item The trade-offs between the information rate, energy rate, DEP and EOP are illustrated through various instructive examples. The information-energy region is also illustrated for a given set of channel input symbols. The study of the gaps between the impossibility and achievability bounds shows that the bounds match in terms of the information rate, the energy rate and the EOP. For a given information rate, energy rate and EOP, however, the achieved DEP  is higher than the impossibility bound due to the choice of the decoding regions made during the construction.
\end{itemize}

The notation used in this paper is summarized below.
  
\subsection{Notation} \label{SecNotation}
The sets of natural, real and complex numbers are denoted by $\ints$, $\reals$ and $\complex$, respectively. In particular, $0 \notin \ints$.
Random variables and random vectors are denoted by uppercase letters and uppercase bold letters, respectively. Scalars are denoted by lowercase letters and vectors by lowercase bold letters. The real and imaginary parts of a complex number $c \in \complex$ are denoted by  $\Re(c)$ and $\Im(c)$, respectively. The complex conjugate of $c \in \complex$ is denoted by $c^\star$ and the magnitude of $c$ is denoted by $|c|$.  The imaginary unit is denoted by $\mathrm{i}$, \ie, $\mathrm{i}^2 = -1$.  
The empty set is denoted by $\phi$. The $\sinc$ function is defined as follows
\begin{IEEEeqnarray}{rCl} \label{EqSinc}
\sinc(t) \triangleq \frac{\sin(\pi t)}{\pi t},
\end{IEEEeqnarray}
and the $\mathrm{Q}$ function is given by the following:
\begin{IEEEeqnarray}{rCl} \label{DefQfunc}
\mathrm{Q}(x) = \int_x^\infty \frac{1}{\sqrt{2 \pi}} \exp \left(-\frac{t^2}{2} \right) \mathrm{d}t.
\end{IEEEeqnarray}
The rest of this paper is organized as follows. The system model is presented in Section~\ref{sec:system_model}. The impossibility results for finite block-length SIET are characterized in Section~\ref{sec:results}. Section~\ref{SecAchievability} provides the construction of codes for finite block-length SIET followed by the characterization of an achievable region. The trade-offs between various parameters of finite block-length SIET are elucidated using various examples in Section~\ref{SecDiscussion}. Section~\ref{SecDiscussion} also provides a comparison of the proposed bounds with the existing state of the art.

\section{System Model}  \label{sec:system_model}
Consider a communication system formed by a transmitter, an information receiver (IR), and an energy harvester (EH). The objective of the transmitter is to simultaneously send information to the IR at a rate of $R$ bits per second; and energy to the EH at a rate of $B$ Joules per second over an AWGN channel. The transmission takes place over a finite duration of $n \in \ints$ channel uses. The transmitter uses $L$ symbols from the set of channel input symbols
\begin{equation}\label{EqCIsymbols}
\mathcal{X} \triangleq \{x^{(1)}, x^{(2)}, \ldots, x^{(L)}\} \subset \mathds{C},
\end{equation} 
where,
\begin{equation}\label{EqL}
L \triangleq \left| \mathcal{X} \right|.
\end{equation}
For all $m \inCountK{n}$, denote by $\nu_m \in \mathcal{X}$, the symbol to be transmitted during channel use $m$.
Denote the vector of channel input symbols transmitted over $n$ channel uses by
\begin{IEEEeqnarray}{rCl} \label{Eqnu}
\boldsymbol{\nu} = (\nu_1, \nu_2, \ldots, \nu_n )^{\sf{T}}.
\end{IEEEeqnarray}
The baseband frequency of the transmitter in Hertz (Hz) is denoted by $f_w$. Denote by $T = \frac{1}{f_w}$, the duration of a channel use in time units. Hence, the transmission takes place over $nT$ time units.
The complex baseband signal at time $t$, with $t \in [0,nT]$ is given by
\begin{IEEEeqnarray}{rCl}
x(t) &=& \sum_{m=1}^n  \nu_m \sinc \left( f_w \left(t - (m-1)T\right) \right), \label{Eq4} 
\end{IEEEeqnarray}
where the $\sinc$ function is in~\eqref{EqSinc}. The signal $x(t) $ in~\eqref{Eq4} has a bandwidth of $\frac{f_w}{2} > 0$ Hz. Let $f_c > \frac{f_w}{2}$ denote the center frequency of the transmitter. The RF signal input to the channel at time $t$, denoted by $\tilde{x}(t)$, is obtained by the frequency up-conversion of the baseband signal $x(t)$ in~\eqref{Eq4} as follows:
\begin{IEEEeqnarray}{rcl}
\tilde{x}(t) &=& \Re \left(x(t) \sqrt{2} \exp(\mathrm{i} 2 \pi f_c t) \right), \label{EqXrf} 
\end{IEEEeqnarray}
where $\mathrm{i}$ is the complex unit.
The RF outputs of the AWGN channel at time $t \in [0,nT]$ are the random variables
\begin{subequations}
\begin{IEEEeqnarray}{rCl}
   \label{EqYt} Y(t) & = & \tilde{x}(t) + N_1(t), \mbox{ and }  \\
    Z(t) & = & \tilde{x}(t) + N_2(t), \label{Eq7b}
\end{IEEEeqnarray}
\end{subequations}
where, for all $t \in [0,nT]$, the random variables $N_1(t)$ and $N_2(t)$ represent real white Gaussian noise with zero mean and variance $\sigma^2$; $Y(t)$ and $Z(t)$ are the inputs to the IR and the EH, respectively. 

At the IR, the received signal $Y(t)$ in~\eqref{EqYt} is first multiplied with $\sqrt{2} \exp(-\mathrm{i} 2 \pi f_c t)$ to obtain the down-converted output. The down-converted output is then passed through a unit gain low pass filter with impulse response $f_w \sinc \left( f_w t \right)$ that has a cut-off frequency of $\frac{f_w}{2}$ Hz to obtain the complex baseband equivalent of $Y(t)$. This is followed by ideally sampling the complex baseband output at intervals of $1/f_w$. 
The resulting discrete time baseband output at the end of $n$ channel uses is given by the following random vector~\cite[Section $2.2.4$]{tseV}:
\begin{IEEEeqnarray}{rCl} \label{EqChannelModel}
{\boldsymbol Y} & = & \boldsymbol{\nu} + {\boldsymbol N},
\end{IEEEeqnarray}
where the vector ${\boldsymbol Y} = (Y_1,Y_2, \ldots, Y_n)^{\sf{T}} \in \mathds{C}^{n}$ is the input to the IR; $\boldsymbol{\nu}$ is the vector of channel input symbols in~\eqref{Eqnu}; and $\boldsymbol{N} = (N_{1}, N_{2}, \ldots, N_{n})^{\sf{T}}\in \mathds{C}^{n}$ is the noise vector such that, for all $m \inCountK{n}$, the random variable $N_{m}$ is a complex circularly symmetric Gaussian random variable whose real and imaginary parts have zero means and variances $\frac{1}{2}\sigma^2$. Moreover, the random variables $N_1, N_2, \ldots, N_n$ are mutually independent (see~\cite[Section $2.2.4$]{tseV}). That is, for all $\boldsymbol{y} = (y_1, y_2, \ldots, y_n)^{\sf{T}} \in \mathds{C}^{n}$, and all $\boldsymbol{\nu} = (\nu_1, \nu_2, \ldots, \nu_n)^{\sf{T}} \in \mathds{C}^{n}$, the conditional probability density function of the channel output $\boldsymbol{Y}$ in~\eqref{EqChannelModel} is given by
\begin{IEEEeqnarray}{rCl} \label{EqYXdistribution}
    f_{\boldsymbol{Y}|\boldsymbol{X}}(\boldsymbol{y}|\boldsymbol{x}) & = & \prod_{m=1}^n f_{Y|X}(y_m|\nu_m),
\end{IEEEeqnarray}
where, for all $m \in \lbrace 1,2, \ldots, n \rbrace$,
\begin{IEEEeqnarray}{rCl}
\label{Eq4a}
f_{Y|X}(y_m|\nu_m) &=& \    \frac{1}{\pi \sigma^2}\exp \left( - \frac{\left| y_m - \nu_m \right|^2}{\sigma^2} \right) \\
 \label{EqDensities}
 &=&   \frac{1}{\pi \sigma^2} \exp \left(- \frac{(\Re(y_m)-\Re(\nu_m))^2 +(\Im(y_m) - \Im(\nu_m))^2}{\sigma^2} \right).
 \end{IEEEeqnarray}
The EH does not down-convert or filter the received input $Z(t)$ (see~\cite{7547357} and~\cite{8115220}). The RF signal $Z(t)$ in~\eqref{Eq7b} is used as is for harvesting the energy contained in it.

Within this framework, the two tasks of information and energy transmission must be simultaneously accomplished.

\subsection{Information and Energy Transmission} \label{SubsecInformationEnergyTransmission}
%
Let $M$ be the cardinality of the set of indices from which a message is transmitted within $n$ channel uses. That is, 
\begin{equation} \label{eq:M_L}
  M \leqslant 2^{n \log L},
\end{equation}
where $L$ is the number of channel input symbols in~\eqref{EqL}. To accomplish the tasks of information and energy transmission, the transmitter makes use of a code for representing these $M$ messages. The encoding, decoding and energy harvesting operations are defined as follows.
\subsubsection{Encoding} 
To reliably transmit a message index, the transmitter uses an $(n,M,\mathcal{X},P)$-code defined as follows.
\begin{definition}[$(n,M,\mathcal{X},P)$-code] \label{DefNmCode}
An $(n,M,\mathcal{X},P)$-code for the random transformation in~\eqref{EqChannelModel} with the set of channel input symbols $\mathcal{X} \subset \complex$ in~\eqref{EqCIsymbols} is a system:
\begin{equation}
    \left \lbrace ({\boldsymbol u}(1),\mathcal{D}_1), ({\boldsymbol u}(2),\mathcal{D}_2), \ldots, ({\boldsymbol u}(M),\mathcal{D}_M)\right \rbrace,
\end{equation}
where, for all $(i,j) \in \{1,2, \ldots, M\}^2, i\neq j$ and all $m \inCountK{n}$,
\begin{subequations}\label{EqCodeProperties}
\begin{align}
\label{eq:u_i}  &{\boldsymbol u}(i) = (u_1(i), u_2(i), \ldots, u_n(i)) \in \mathcal{X}^n, \\
        &\mathcal{D}_i \cap \mathcal{D}_j = \phi,\\
        &\bigcup_{i = 1}^M \mathcal{D}_i \subseteq \mathds{C}^n, \\
        & \left| \mathcal{X} \right| = L, \mbox{ and }\\
  \label{EqPCriteria}      &|u_m(i)| \leqslant P,
    \end{align}
\end{subequations}
where the real $P > 0$ is the peak-amplitude constraint.
\end{definition}
Assume that the transmitter uses the $(n,M,\mathcal{X},P)$-code
\begin{equation} \label{Eqnm_code}
    \mathscr{C} \triangleq \{({\boldsymbol u}(1),\mathcal{D}_1), ({\boldsymbol u}(2),\mathcal{D}_2), \ldots, ({\boldsymbol u}(M),\mathcal{D}_M)\},
\end{equation}
that satisfies~\eqref{EqCodeProperties}.
The results in this paper are presented in terms of the types induced by the codewords of the given code $\mathscr{C}$ in~\eqref{Eqnm_code}. The type induced by the codeword $\boldsymbol{u}(i)$, with $i \inCountK{M}$, is a probability mass function (pmf) whose support is equal to or a subset of $\mathcal{X}$ in~\eqref{EqCIsymbols}. This pmf is denoted by $P_{\boldsymbol{u}(i)}$ and for all $x \in \mathcal{X}$,
\begin{equation} \label{eq:u_measure}
    P_{\boldsymbol{u}(i)}(x) \triangleq \frac{1}{n} \sum_{m=1}^n \mathds{1}_{\{u_m(i) = x\}}.
\end{equation}
The type $P_{\boldsymbol{u}(i)}$ captures the empirical codeword distribution~\cite{polyanskiy2013empirical} of code $\mathscr{C}$.

The type induced by all the codewords in $\mathscr{C}$ is also a pmf on the set $\mathcal{X}$ in~\eqref{EqCIsymbols}. This pmf is denoted by  $P_{\mathscr{C}}$ and  for all $x \in \mathcal{X}$,
\begin{equation} \label{eq:p_bar}
    P_{\mathscr{C}}(x) \triangleq \frac{1}{M} \sum_{i=1}^M P_{\boldsymbol{u}(i)}(x).
\end{equation}
Using~\eqref{eq:u_measure} and~\eqref{eq:p_bar}, a class of codes called constant composition codes can be defined as follows.  
\begin{definition}[Constant Composition Codes]\label{DefHC}
An $(n,M,\mathcal{X},P)$-code $\mathscr{C}$ for the random transformation in~\eqref{EqChannelModel} of the form in~\eqref{Eqnm_code} is said to have constant composition if for all $i \inCountK{M}$ and for all $x \in \mathcal{X}$, with $\mathcal{X}$ in~\eqref{EqCIsymbols} it holds that
\begin{equation}\label{EqHomogeneousCodes}
 P_{\boldsymbol{u}(i)}(x) =  P_{\mathscr{C}}(x),
\end{equation}
where, $P_{\boldsymbol{u}(i)}$ and $P_{\mathscr{C}}$ are the types defined in~\eqref{eq:u_measure} and~\eqref{eq:p_bar}, respectively.
\end{definition}
Therefore, constant composition codes are $(n,M,\mathcal{X},P)$-codes in which  a given channel input symbol is used the same number of times in all codewords. Constant composition codes are of interest to this study because, as we will see in the following sections, several upper and lower bounds are either simplified or made tractable for the  case of constant composition codes.

The information transmission rate of any $(n,M,\mathcal{X},P)$-code $\mathscr{C}$ is given by
\begin{equation} \label{EqR}
    R(\mathscr{C}) = \frac{\log_2 M}{n}
\end{equation}
in bits per channel use.
\subsubsection{Information Decoding}
To transmit the message index $i$, with $i \in \lbrace 1,2, \ldots, M \rbrace$, the transmitter uses the codeword ${\boldsymbol u}(i)= (u_1(i), u_2(i), \ldots, u_n(i))$. That is, at channel use $m$, with $m \in \lbrace 1,2, \ldots, n\rbrace$, the transmitter inputs the RF signal corresponding to symbol $u_{m}(i)$ into the channel. At the end of $n$ channel uses, the IR observes a realization of the random vector ${\boldsymbol Y} = (Y_1, Y_2, \ldots, Y_n)^{\sf{T}}$ in~\eqref{EqChannelModel}.
Let $W$ be a random variable that denotes the message index transmitted. For all $i \inCountK{M}$, the probability of transmitting message index $i$ is given by
\begin{IEEEeqnarray}{rCl} \label{Eqprior}
P_W \left(i \right) = \frac{1}{M}.
\end{IEEEeqnarray}
 The IR decides that message index $i$, with $i \inCountK{M}$, was transmitted, if the following event takes place:
\begin{equation}
    {\boldsymbol Y} \in \mathcal{D}_i,
\end{equation}
with $\mathcal{D}_i$ in~\eqref{Eqnm_code}.
For each $i \inCountK{M}$, the set $\mathcal{D}_i \subseteq \mathds{C}^n$ that minimizes the DEP for codeword $\boldsymbol{u}(i)$ is defined by the maximum a posteriori (MAP) decision rule~\cite[Chapter $21$]{lapidoth} as follows
\begin{IEEEeqnarray}{rCl} \label{EqMAPregion}
\mathcal{D}_i &=& \lbrace \boldsymbol{y} \in \complex : \forall \ i' \inCountK{M}, P_W \left(i\right) f_{\boldsymbol{Y}|\boldsymbol{X}}(\boldsymbol{y}|\boldsymbol{u}(i)) \geq P_W \left(i' \right) f_{\boldsymbol{Y}|\boldsymbol{X}}(\boldsymbol{y}|\boldsymbol{u}(i')) \rbrace \\
&=& \lbrace \boldsymbol{y} \in \complex : \forall \ i' \inCountK{M}, f_{\boldsymbol{Y}|\boldsymbol{X}}(\boldsymbol{y}|\boldsymbol{u}(i)) \geq f_{\boldsymbol{Y}|\boldsymbol{X}}(\boldsymbol{y}|\boldsymbol{u}(i')) \rbrace, \label{EqMAPregion2}
\end{IEEEeqnarray}
where~\eqref{EqMAPregion2} follows due to~\eqref{Eqprior}.
Therefore, the DEP associated with the transmission of message index $i$ is given by
\begin{IEEEeqnarray}{rCl}
\label{EqDEPi}
    \gamma_i(\mathscr{C}) 
    &\triangleq& 1 - \int_{\mathcal{D}_i} f_{\boldsymbol{Y}|\boldsymbol{X}}(\boldsymbol{y}|\boldsymbol{u}(i)) \mathrm{d}\boldsymbol{y},
\end{IEEEeqnarray}
and the average DEP for code $\mathscr{C}$ is given by
\begin{IEEEeqnarray}{rCl} 
    \label{eq:gamma} 
    \gamma(\mathscr{C}) &\triangleq& \frac{1}{M}\sum_{i = 1}^M \gamma_i(\mathscr{C}). 
\end{IEEEeqnarray}

%
%

%
\subsubsection{Energy Harvesting}
The channel output observed at the EH while transmitting message $i \inCountK{M}$ is $Z_{i}(t)$ in~\eqref{Eq7b}, with $t \in [0,nT]$. From~\eqref{EqXrf} and~\eqref{Eq7b}, the channel output $Z_{i}(t)$  is given by
\begin{IEEEeqnarray}{rCl} 
Z_{i}(t) &=&  \Re \Big(\sqrt{2} \sum_{m=1}^n u_m(i) \sinc \left( f_w (t - (m-1)T) \right) \exp \left(\mathrm{i} 2 \pi f_c t \right) \Big) + N_2(t) \label{Eq37} \\
&=& x_{i}(t) + N_2(t), \label{Eq38}
\end{IEEEeqnarray}
where, for all $m \inCountK{n}$, the complex $u_m(i)$ is the $m$\ts{th} symbol of the codeword $\boldsymbol{u}(i)$ in~\eqref{eq:u_i}; for all $t \in [0,nT]$, the signal $x_{i}(t)$ in~\eqref{Eq38} is 
\begin{IEEEeqnarray}{rCl}
 x_{i}(t) &=& \Re \Big(\sqrt{2} \sum_{m=1}^n u_m(i) \sinc \left( f_w (t - (m-1)T) \right) \exp \left(\mathrm{i} 2 \pi f_c t \right) \Big); 
 \end{IEEEeqnarray}
 and the random variable $N_2(t)$ is a real Gaussian random variable with zero mean and variance $\sigma^2$ in~\eqref{EqYXdistribution}.
 For all $t \in [0,nT]$, the channel output $Z_i(t)$ in~\eqref{Eq38} is a real Gaussian random variable with mean $x_i(t)$ and variance $\sigma^2$.
 
The non-linear energy model in~\cite{8115220} and~\cite{7547357} states that the energy harvested from a signal is proportional to the DC component of the second and fourth powers of the signal. Using this model, for all $i\inCountK{M}$, the expected energy harvested from the channel output $Z_i(t)$ in~\eqref{Eq37} during the time $t \in [0,nT]$ is given by the following: 
\begin{IEEEeqnarray}{rCl} \label{Eqei}
e_i &\triangleq& k_1 \sum_{m=1}^n \left| u_m(i) \right|^2 + k_2 \sum_{m=1}^n \left| u_m(i) \right|^4 \\
&=& k_1 \sum_{x \in \mathcal{X}} n P_{\boldsymbol{u}(i)} \left( x \right) \left| x \right|^2 + k_2 \sum_{x \in \mathcal{X}} n P_{\boldsymbol{u}(i)} \left( x \right) \left| x \right|^4, \label{Eq27c}
\end{IEEEeqnarray}
where $\mathcal{X}$ is in~\eqref{EqCIsymbols}; $k_1$ and $k_2$ are positive constants, with $k_1 = 0.0034$ and $k_2 = 0.3829$~\cite{8115220}; for all $m \inCountK{n}$, the complex $u_m(i)$ is in~\eqref{eq:u_i} and $P_{\boldsymbol{u}(i)}$ is the type defined in~\eqref{eq:u_measure}. 

 From~\eqref{Eq27c}, for a constant composition code $\mathscr{C}$ (Definition~\ref{DefHC}), the expected energy harvested from the channel output $Z_i(t)$ in~\eqref{Eq38} during the time $t \in [0,nT]$ is equal for all $i \inCountK{M}$. That is, for all $i \inCountK{M}$,
 \begin{subequations}\label{EqeiHomogeneous}
 \begin{IEEEeqnarray}{rCl} 
 e_i = e_\mathscr{C},
  \end{IEEEeqnarray}
  where,
 \begin{IEEEeqnarray}{rCl} \label{EqeiHomogeneousb}
e_\mathscr{C} \triangleq k_1 \sum_{x \in \mathcal{X}} n P_{\mathscr{C}} \left( x \right) \left| x \right|^2 + k_2 \sum_{x \in \mathcal{X}}n P_{\mathscr{C}} \left( x \right) \left| x \right|^4,
 \end{IEEEeqnarray}
 \end{subequations}
with $\mathcal{X}$ in~\eqref{EqCIsymbols} and $P_{\mathscr{C}}$ is the type defined in~\eqref{eq:p_bar}.
 
Denote by $M' \leq M$, the number of unique values in the vector $\left(e_1, e_2, \ldots, e_M \right)^{\sf{T}}$ with $e_i$ in~\eqref{Eqei}. The $M'$ unique energy levels are given by $\left\lbrace \bar{e}_1, \bar{e}_2, \ldots, \bar{e}_{M'} \right\rbrace$. Assume without loss of generality that the following holds:
\begin{IEEEeqnarray}{rCl} \label{EqUniqueLevels}
0 < \bar{e}_1 < \bar{e}_2 < \ldots < \bar{e}_{M'}.
\end{IEEEeqnarray}
More specifically, for all $i \inCountK{M}$, there exists $j \inCountK{M'}$ such that $e_i = \bar{e}_j$, with $e_i$ in~\eqref{Eqei}. 

For all $j \inCountK{M'}$, define variables $y_j$ to be the number of codewords that carry energy $\bar{e}_j$. More precisely, for all $j \inCountK{M'}$, $y_j$ is given by
\begin{IEEEeqnarray}{rCl} \label{EqYindicator}
y_j = \sum_{i=1}^M \mathds{1}_{\left\lbrace e_i = \bar{e}_j \right\rbrace }.
\end{IEEEeqnarray}
It follows that $\sum_{j=1}^{M'} y_j = M$. 

The energy harvested during time $t \in [0, nT]$ is a random variable denoted by $E$.
The probability of harvesting energy $e$ given that message index $i$ was transmitted is given by
\begin{IEEEeqnarray}{rCl} 
P_{E | W} \left(e|i \right) = \mathds{1}_{\lbrace e = e_i \rbrace},
\end{IEEEeqnarray}
where $W$ is the random variable that denotes the transmitted message index in~\eqref{Eqprior} and $e_i$ is in~\eqref{Eqei}. The pmf of the random variable $E$, denoted by $P_E$ is given by the following:
\begin{IEEEeqnarray}{rCl} \label{Eq35}
P_{E} \left(e\right) &=& \sum_{i=1}^M P_{E | W} \left(e|i \right) P_W \left(i \right) \\
&=& \frac{1}{M} \sum_{i=1}^M \mathds{1}_{\lbrace e = e_i \rbrace}. \label{Eq35b}
\end{IEEEeqnarray}
Let $B \in \reals$ be the energy rate required at the EH. Then, using~\eqref{Eq35}, the EOP associated with code $\mathscr{C}$ is defined as follows:
\begin{IEEEeqnarray}{rCl} \label{eq:theta_def}
 \theta(\mathscr{C},B) \triangleq \mathrm{Pr} \left( E<B  \right) &=&  \sum_{i \in \lbrace j \inCountK{M}: e_j < B \rbrace}  P_E (e_i) \\
 &=& \sum_{i \in \lbrace j \inCountK{M}: e_j < B \rbrace} \frac{1}{M} \sum_{i=1}^M \mathds{1}_{\lbrace e = e_i \rbrace} \label{Eq37b} \\
  &=& \frac{1}{M} \Big| \lbrace i \inCountK{M}: e_i<B \rbrace \Big| \\
  & = & \frac{1}{M} \sum_{i=1}^M \mathds{1}_{\lbrace e_i < B \rbrace}, \label{Eq25}
 \end{IEEEeqnarray}
 where, $P_E$ is the pmf defined in~\eqref{Eq35}, $e_i$ is defined in~\eqref{Eqei} and, the equality in~\eqref{Eq37b} follows from~\eqref{Eq35b}. 

From~\eqref{Eq25}, an interesting insight can be derived about $\theta(\mathscr{C},B)$. The EOP $\theta(\mathscr{C},B)$ can only take discrete values in the set $\lbrace 0, \frac{1}{M}, \frac{2}{M}, \ldots, 1 \rbrace$. Moreover, for constant composition codes, it holds that $\theta(\mathscr{C},B) \in \lbrace 0,1 \rbrace$. 

Definition~\ref{DefNmCode} can now be refined to include the DEP and the EOP as follows.
\begin{definition}[$(n,M,\mathcal{X},P,\epsilon,B,\delta)$-code] \label{def:nmed_code}
An $(n,M,\mathcal{X},P)$-code $\mathscr{C}$ for the random transformation in~\eqref{EqChannelModel} is said to be an $(n,M,\mathcal{X},P,\epsilon,B,\delta)$-code  if  the following hold:
\begin{IEEEeqnarray}{rCl}
\label{EqGammaUpperbound}
\gamma(\mathscr{C}) &\leq& \epsilon, \mbox{and} \\
\label{eq:delta}
    \theta(\mathscr{C},B) &\leq& \delta.
\end{IEEEeqnarray}
\end{definition}
%
%
\section{Impossibility Bounds} \label{sec:results}
This section characterizes the \emph{impossibility region} of $(n,M,\mathcal{X},P,\epsilon,B,\delta)$-codes. The term \emph{impossibility region} is used to refer to the set of all information rate, energy rate, DEP, EOP tuples $(R,B,\epsilon,\delta) \in \reals^2 \times [0,1]^2$ that are not achievable by any code that uses a given set of channel input symbols $\mathcal{X}$.
In the interest of clarity, we refrain from using the term \emph{`converse'} to refer to these bounds since a converse is understood to be independent of the channel inputs.
However, determining the optimal set of channel inputs is known to be a very difficult problem~\cite{8613368,8878162,8849318}, even without energy considerations. 
The process of characterizing the information-energy regions for given sets of channel input symbols reveals the dependence of the information rate, the energy rate, the DEP and the EOP on various parameters of the code which, in turn, provides insights into which parameters should be changed and how in order to achieve a certain performance objective in SIET.

In the following subsection, a lower bound on the DEP and an upper bound on the information transmission rate for an $(n,M,\mathcal{X},P,\epsilon,B,\delta)$-code are characterized.
 \subsection{Information Transmission Rate and DEP} \label{SubsecConverseR}
Given the channel input $\boldsymbol{u}(i) = (u_1(i), u_2(i), \ldots, u_n(i)) \in \mathds{C}^{n}$ in~\eqref{eq:u_i}, the channel output vector $\boldsymbol{Y} = (Y_1,Y_2, \ldots, Y_n)^{\sf{T}} \in \mathds{C}^{n}$ in~\eqref{EqChannelModel} is a complex Gaussian vector with mean $\boldsymbol{u}(i)$. Therefore, the conditional probability density function of $\boldsymbol{Y}$ is given by
\begin{IEEEeqnarray}{rCl} 
f_{\boldsymbol{Y}|\boldsymbol{X}}(\boldsymbol{y}|\boldsymbol{u}(i)) & = & \frac{1}{\left(\pi \sigma^2\right)^n}\exp \left( - \frac{\left|\boldsymbol{y} - \boldsymbol{u}(i) \right|^2}{\sigma^2} \right).
\end{IEEEeqnarray}
It is known that the DEP for this vector Gaussian channel is minimized by the MAP decoder~\cite[Chapter $21$]{lapidoth} which decides that message $i$ was transmitted if the channel output vector $\boldsymbol{Y} \in \mathcal{D}_i$ in~\eqref{EqMAPregion}. That is,
\begin{IEEEeqnarray}{rCl} \label{Eq57}
i \in \arg \max_{i' \inCountK{M}} f_{\boldsymbol{Y}|\boldsymbol{X}}(\boldsymbol{y}|\boldsymbol{u}(i')).
\end{IEEEeqnarray}
By simple mathematical manipulations, the decision rule in~\eqref{Eq57} simplifies to the minimum distance decoder~\cite[Chapter 4]{proakis} which decides that message $i$ was transmitted if
\begin{IEEEeqnarray}{rCl} \label{Eq58b}
i \in \arg \min_{i' \inCountK{M}} \left| \boldsymbol{y} - \boldsymbol{u}(i') \right|.
\end{IEEEeqnarray}
Using~\eqref{Eq58b}, the following theorem provides a lower bound on the DEP for a constant composition $(n,M,\mathcal{X},P,\epsilon,B,\delta)$-code $\mathscr{C}$.
%
%
\begin{theorem} \label{LemmaImpossibleDEP}
Given a constant composition $(n,M,\mathcal{X},P,\epsilon,B,\delta)$-code $\mathscr{C}$ for the random transformation in~\eqref{EqChannelModel} of the form in~\eqref{Eqnm_code}, for all $\ell \inCountK{L}$, the complex $\bar{x}^{(\ell)} \in \mathcal{X}$ in~\eqref{EqCIsymbols} is given by
\begin{IEEEeqnarray}{rCl} \label{EqNeighbor}
\bar{x}^{(\ell)} \in \arg\max_{x \in \mathcal{X} \setminus \lbrace x^{(\ell)} \rbrace} \left|x^{(\ell)} - x \right|.
\end{IEEEeqnarray}
Then, the DEP $\epsilon$ satisfies the following:
\begin{IEEEeqnarray}{rCl} \label{EqboundDEP}
\epsilon \geq (M-1) \mathrm{Q} \left( \sqrt{\frac{\sum_{\ell=1}^L n P_{\mathscr{C}}(x^{(\ell)}) \left| x^{(\ell)} - \bar{x}^{(\ell)} \right|^2}{2 \sigma^2}} \right),
\end{IEEEeqnarray}
where, $P_{\mathscr{C}}$ is the type defined in~\eqref{eq:p_bar}; $L$ is the number of symbols in~\eqref{EqL}, and the real $\sigma^2$ is the noise variance in~\eqref{Eq4a}.
\end{theorem}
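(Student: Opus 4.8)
It suffices to show that the average DEP $\gamma(\mathscr{C})$ in~\eqref{eq:gamma} is itself bounded below by the right-hand side of~\eqref{EqboundDEP}, since Definition~\ref{def:nmed_code} requires $\gamma(\mathscr{C})\leq\epsilon$ through~\eqref{EqGammaUpperbound}, so the claimed bound on $\epsilon$ follows immediately. Throughout I would work with the minimum-distance decoder~\eqref{Eq58b}, which is optimal for the channel~\eqref{EqChannelModel}; since no decoder can do better, a lower bound established for it is a legitimate impossibility statement.

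The first ingredient is a pairwise computation. Fix a transmitted codeword $\boldsymbol{u}(i)$ and a competing codeword $\boldsymbol{u}(j)$ with $j\neq i$. Writing $\boldsymbol{Y}=\boldsymbol{u}(i)+\boldsymbol{N}$ and projecting the circularly-symmetric complex Gaussian noise $\boldsymbol{N}$, viewed as a real $2n$-dimensional vector whose coordinates have variance $\sigma^{2}/2$, onto the unit vector along $\boldsymbol{u}(j)-\boldsymbol{u}(i)$, one obtains, conditioned on $W=i$,
\begin{IEEEeqnarray}{rCl}
\mathrm{Pr}\big( \left|\boldsymbol{Y}-\boldsymbol{u}(j)\right| \leq \left|\boldsymbol{Y}-\boldsymbol{u}(i)\right| \mid W=i \big) &=& \mathrm{Q}\!\left( \frac{\left|\boldsymbol{u}(i)-\boldsymbol{u}(j)\right|}{\sqrt{2}\,\sigma} \right),
\end{IEEEeqnarray}
with $\mathrm{Q}$ as in~\eqref{DefQfunc}. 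The second, and to my mind central, ingredient is a geometric lemma that uses the constant-composition hypothesis (Definition~\ref{DefHC}) together with the definition~\eqref{EqNeighbor} of the farthest symbol $\bar{x}^{(\ell)}$. For any two codewords of $\mathscr{C}$ and any coordinate $m$, if $u_{m}(i)=x^{(\ell)}$ then $u_{m}(j)\in\mathcal{X}$ and hence $\left|u_{m}(i)-u_{m}(j)\right|\leq\left|x^{(\ell)}-\bar{x}^{(\ell)}\right|$; summing over $m$ and grouping coordinates by input symbol, and using that constant composition forces $x^{(\ell)}$ to occur exactly $nP_{\mathscr{C}}(x^{(\ell)})$ times in \emph{every} codeword (with $P_{\mathscr{C}}$ in~\eqref{eq:p_bar}), yields the codeword-independent estimate
\begin{IEEEeqnarray}{rCl}
\left|\boldsymbol{u}(i)-\boldsymbol{u}(j)\right|^{2} &\leq& D^{2} \triangleq \sum_{\ell=1}^{L} nP_{\mathscr{C}}\!\left(x^{(\ell)}\right)\left|x^{(\ell)}-\bar{x}^{(\ell)}\right|^{2},
\end{IEEEeqnarray}
valid for every pair $i\neq j$; this is exactly what makes the right-hand side of~\eqref{EqboundDEP} independent of the code and equal to $(M-1)\,\mathrm{Q}\big(\sqrt{D^{2}/(2\sigma^{2})}\big)$. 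Since $\mathrm{Q}$ is monotonically decreasing, each pairwise probability above is therefore at least $\mathrm{Q}\big(\sqrt{D^{2}/(2\sigma^{2})}\big)$.

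Finally I would assemble these pieces. Because the decoding sets $\mathcal{D}_{1},\ldots,\mathcal{D}_{M}$ in~\eqref{Eqnm_code} are pairwise disjoint, the error event $\{\boldsymbol{Y}\notin\mathcal{D}_{i}\}$ splits as the disjoint union $\bigcup_{j\neq i}\{\boldsymbol{Y}\in\mathcal{D}_{j}\}$, so $\gamma_{i}(\mathscr{C})=\sum_{j\neq i}\mathrm{Pr}(\boldsymbol{Y}\in\mathcal{D}_{j}\mid W=i)$; combining this decomposition over the $M-1$ competing codewords with the pairwise lower bound of the previous step introduces the factor $M-1$, and averaging over $i$ as in~\eqref{eq:gamma} gives $\gamma(\mathscr{C})\geq(M-1)\,\mathrm{Q}\big(\sqrt{D^{2}/(2\sigma^{2})}\big)$, which is~\eqref{EqboundDEP}. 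I expect this last accumulation step to be the main obstacle: the ordinary union bound controls $\gamma_{i}(\mathscr{C})$ from \emph{above} by a sum of pairwise terms, so producing a lower bound that carries the clean factor $M-1$ forces one to work with the disjoint minimum-distance regions directly — tracking which half-space each $\mathcal{D}_{j}$ is contained in and the mass that $f_{\boldsymbol{Y}|\boldsymbol{X}}(\cdot|\boldsymbol{u}(i))$ places on it — rather than invoking a union-bound inequality. The remaining steps, namely the Gaussian projection, the monotonicity of $\mathrm{Q}$, and the counting behind the distance estimate, are routine.
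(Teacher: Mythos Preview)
Your argument tracks the paper's proof essentially step by step: the pairwise probability computed by projecting the complex Gaussian noise onto $\boldsymbol{u}(j)-\boldsymbol{u}(i)$ to obtain $\mathrm{Q}\!\big(|\boldsymbol{u}(i)-\boldsymbol{u}(j)|/(\sqrt{2}\,\sigma)\big)$, the uniform distance bound $|\boldsymbol{u}(i)-\boldsymbol{u}(j)|^{2}\leq D^{2}$ extracted from constant composition together with~\eqref{EqNeighbor}, monotonicity of $\mathrm{Q}$, and the factor $M-1$ from summing over the competing codewords. These are exactly the ingredients the paper uses, in the same order.

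The step you isolate as the ``main obstacle'' is precisely the point where the paper is least careful. The paper does \emph{not} work with the disjoint decoding regions $\mathcal{D}_{j}$ as you propose; instead it writes directly
\[
\gamma_i(\mathscr{C})\;=\;\sum_{j\neq i}\Pr\!\Big(\,|\boldsymbol{Y}-\boldsymbol{u}(j)|^{2}<|\boldsymbol{Y}-\boldsymbol{u}(i)|^{2}\ \Big|\ \boldsymbol{u}(i)\Big)
\]
as an equality (its display~\eqref{Eq663}) and then lower-bounds each summand. Your unease is warranted: the half-space events on the right overlap for different $j$, so that sum is in general the union-bound \emph{upper} estimate of $\gamma_i(\mathscr{C})$, not an exact decomposition, and lower-bounding its terms does not by itself lower-bound $\gamma_i(\mathscr{C})$. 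Your alternative route through the genuinely disjoint regions gives the correct identity $\gamma_i(\mathscr{C})=\sum_{j\neq i}\Pr(\boldsymbol{Y}\in\mathcal{D}_j\mid W=i)$, but then the containment $\mathcal{D}_{j}\subseteq\{|\boldsymbol{Y}-\boldsymbol{u}(j)|\leq|\boldsymbol{Y}-\boldsymbol{u}(i)|\}$ pushes the pairwise comparison the wrong way, exactly as you anticipated. In short, your proposal coincides with the paper's strategy, and the accumulation step you flag as the obstacle is the very step the paper asserts without further justification.
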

\begin{proof}
The decoding error probability $\gamma_i \left( \mathscr{C} \right)$ in~\eqref{EqDEPi} for the code $\mathscr{C}$, given that the message index $i \inCountK{M}$ is transmitted is
{\allowdisplaybreaks
\begin{IEEEeqnarray}{rCl} 
\gamma_i \left( \mathscr{C} \right) &=& \mathrm{Pr} \left( \boldsymbol{Y} \notin \mathcal{D}_i \big| \boldsymbol{u}(i) \right) \label{Eq662} \\
&=& \sum_{\substack{j =1 \\ j\neq i}}^M \mathrm{Pr} \left( \left| \boldsymbol{Y} - \boldsymbol{u}(j) \right|^2 < \left| \boldsymbol{Y} - \boldsymbol{u}(i) \right|^2 \big| \boldsymbol{u}(i) \right)  \label{Eq663} \\
&=& \sum_{\substack{j =1 \\ j\neq i}}^M \mathrm{Pr} \left( \left| \boldsymbol{u}(i) + \boldsymbol{N} - \boldsymbol{u}(j) \right|^2 < \left| \boldsymbol{u}(i) + \boldsymbol{N} - \boldsymbol{u}(i) \right|^2 \right) \\
&=& \sum_{\substack{j =1 \\ j\neq i}}^M \mathrm{Pr} \left( \left| \boldsymbol{u}(i) + \boldsymbol{N} - \boldsymbol{u}(j) \right|^2 < \left|\boldsymbol{N} \right|^2 \right) \\
&=& \sum_{\substack{j =1 \\ j\neq i}}^M \mathrm{Pr} \left( \sum_{m=1}^n \left| u_m(i) + N_m - u_m(j) \right|^2 < \sum_{m=1}^n \left| N_m \right|^2 \right) \\
&=& \sum_{\substack{j =1 \\ j\neq i}}^M \mathrm{Pr} \left( \sum_{m=1}^n (\Re(u_m(i) + N_m - u_m(j)))^2 + (\Im(u_m(i) + N_m - u_m(j)))^2 < \sum_{m=1}^n \Re(N_m)^2 + \Im(N_m)^2 \right) \\
&=& \sum_{\substack{j =1 \\ j\neq i}}^M \mathrm{Pr} \Big( \sum_{m=1}^n (\Re(u_m(i)) - \Re(u_m(j)))^2 + (\Im(u_m(i)) - \Im(u_m(j)))^2 \nonumber \\
&& + 2 \Re(N_m)(\Re(u_m(i)) - \Re(u_m(j))) + 2 \Im(N_m)(\Im(u_m(i)) - \Im(u_m(j))) < 0 \Big) \\
&=& \sum_{\substack{j =1 \\ j\neq i}}^M \mathrm{Pr} \Big( \sum_{m=1}^n \Re(N_m)(\Re(u_m(i)) - \Re(u_m(j))) + \Im(N_m)(\Im(u_m(i)) - \Im(u_m(j))) \nonumber \\
&& < - \frac{1}{2} \sum_{m=1}^n (\Re(u_m(i)) - \Re(u_m(j)))^2 + (\Im(u_m(i)) - \Im(u_m(j)))^2 \Big), \label{Eq68b}
\end{IEEEeqnarray}
}
where, the equality in~\eqref{Eq663} follows from~\eqref{Eq58b}, and $\boldsymbol{N} = (N_{1}, N_{2}, \ldots, N_{n})^{\sf{T}}$ is the AWGN noise vector in~\eqref{EqChannelModel} such that, for all $m \inCountK{n}$, the random variable $N_{m}$ is a complex circularly symmetric Gaussian random variable whose real and imaginary parts have zero means and variances $\frac{1}{2}\sigma^2$. 
Therefore, the random variable 
\begin{IEEEeqnarray}{rCl} \label{Eq69}
\overline{N}_{i,j} = \sum_{m=1}^n \Re(N_m)(\Re(u_m(i)) - \Re(u_m(j))) + \Im(N_m)(\Im(u_m(i)) - \Im(u_m(j)))
\end{IEEEeqnarray}
in~\eqref{Eq68b} is a linear combination of $2n$ Gaussian random variables, each with mean zero and variance $\frac{1}{2}\sigma^2$. Thus, $\overline{N}_{i,j}$ is also a zero mean Gaussian random variable with variance given by  
\begin{IEEEeqnarray}{rCl}  \label{Eq700}
\sigma_{\overline{N}_{i,j}}^2 =  \frac{\sigma^2}{2} \sum_{m=1}^n (\Re(u_m(i)) - \Re(u_m(j)))^2 + (\Im(u_m(i)) - \Im(u_m(j)))^2.
\end{IEEEeqnarray}
From~\eqref{Eq68b},\eqref{Eq69} and~\eqref{Eq700}, it follows that,
\begin{IEEEeqnarray}{rCl} \label{Eq72}
\gamma_i \left( \mathscr{C} \right) &=& \sum_{\substack{j =1 \\ j\neq i}}^M \mathrm{Pr} \left( \overline{N}_{i,j} < - \frac{\sigma_{\overline{N}_{i,j}}^2}{\sigma^2} \right).
\end{IEEEeqnarray}
Since $\overline{N}_{i,j} \sim \mathcal{N}(0,\sigma_{\overline{N}_{i,j}}^2)$, from~\eqref{Eq72} it follows that
{\allowdisplaybreaks
\begin{IEEEeqnarray}{rCl} 
\gamma_i \left( \mathscr{C} \right) &=&  \sum_{\substack{j =1 \\ j\neq i}}^M \mathrm{Q} \left( \frac{\sigma_{\overline{N}_{i,j}}}{\sigma^2} \right) \\
 &=& \sum_{\substack{j =1 \\ j\neq i}}^M \mathrm{Q} \left( \sqrt{\frac{1}{2 \sigma^2} \sum_{m=1}^n (\Re(u_m(i)) - \Re(u_m(j)))^2 + (\Im(u_m(i)) - \Im(u_m(j)))^2} \right) \\
&=& \sum_{\substack{j =1 \\ j\neq i}}^M \mathrm{Q} \left( \sqrt{\frac{\sum_{m=1}^n \left| u_m(i) - u_m(j) \right|^2}{2 \sigma^2}} \right), \label{Eq70}
\end{IEEEeqnarray}
}
where, the $\mathrm{Q}$ function is in~\eqref{DefQfunc}.
For all $i \inCountK{M}$ and all $m \inCountK{n}$, $u_m(i) \in \mathcal{X} = \lbrace x^{(1)}, x^{(2)}, \ldots, x^{(L)}\rbrace$ in~\eqref{EqCIsymbols}. For $u_m(i) = x^{(\ell)}$, from~\eqref{EqNeighbor} it follows that
\begin{IEEEeqnarray}{rCl} \label{Eq76a}
\left| u_m(i) - u_m(j) \right| \leq \left| x^{(\ell)} - \bar{x}^{(\ell)} \right|
\end{IEEEeqnarray}
Since $\mathscr{C}$ is a constant composition code, from~\eqref{eq:u_measure} and Definition~\ref{DefHC}, it follows that, for all $i \inCountK{M}$, the symbol $x^{(\ell)} \in \mathcal{X}$ appears in the codeword $\boldsymbol{u}(i)$, $n P_{\mathscr{C}}(x^{(\ell)})$ number of times.
Therefore, from~\eqref{Eq70} and~\eqref{Eq76a}, it follows that,
\begin{IEEEeqnarray}{rCl} 
\gamma_i \left( \mathscr{C} \right) &\geq& \sum_{\substack{j =1 \\ j\neq i}}^M \mathrm{Q} \left( \sqrt{\frac{\sum_{\ell=1}^L n P_{\mathscr{C}}(x^{(\ell)}) \left| x^{(\ell)} - \bar{x}^{(\ell)} \right|^2}{2 \sigma^2}} \right) \\
& = & (M-1) \mathrm{Q} \left( \sqrt{\frac{\sum_{\ell=1}^L n P_{\mathscr{C}}(x^{(\ell)}) \left| x^{(\ell)} - \bar{x}^{(\ell)} \right|^2}{2 \sigma^2}} \right).
\end{IEEEeqnarray}
The average DEP $\gamma(\mathscr{C})$ in~\eqref{eq:gamma} is given by
\begin{IEEEeqnarray}{rCl}
 \gamma \left(\mathscr{C} \right)  &\geq& \frac{1}{M} \sum_{i=1}^M (M-1) \mathrm{Q} \left( \sqrt{\frac{\sum_{\ell=1}^L n P_{\mathscr{C}}(x^{(\ell)}) \left| x^{(\ell)} - \bar{x}^{(\ell)} \right|^2}{2 \sigma^2}} \right) \\
   &=& (M-1) \mathrm{Q} \left( \sqrt{\frac{\sum_{\ell=1}^L n P_{\mathscr{C}}(x^{(\ell)}) \left| x^{(\ell)} - \bar{x}^{(\ell)} \right|^2}{2 \sigma^2}} \right). \label{Eq74}
        \end{IEEEeqnarray}
    
    Finally, from~\eqref{EqGammaUpperbound} and~\eqref{Eq74}, it follows that
\begin{IEEEeqnarray}{rCl}
\epsilon & \geq & (M-1) \mathrm{Q} \left( \sqrt{\frac{\sum_{\ell=1}^L n P_{\mathscr{C}}(x^{(\ell)}) \left| x^{(\ell)} - \bar{x}^{(\ell)} \right|^2}{2 \sigma^2}} \right),
\end{IEEEeqnarray}
which completes the proof.
\end{proof}

A first upper bound on the information rate is obtained by upper bounding the number of codewords that a code might possess given the particular types $P_{\boldsymbol{u}(1)}$, $P_{\boldsymbol{u}(2)}$, $\ldots$, $P_{\boldsymbol{u}(n)}$ in~\eqref{eq:u_measure}; or the average type  $P_{\mathscr{C}}$ in~\eqref{eq:p_bar}.
The following lemma introduces such an upper bound for the case of a constant composition code.

\begin{lemma} \label{lemma:R_upperbound}
Given a constant composition $(n,M,\mathcal{X},P,\epsilon,B,\delta)$-code $\mathscr{C}$ for the random transformation in~\eqref{EqChannelModel} of the form in~\eqref{Eqnm_code}, the  information transmission rate $R(\mathscr{C})$ in~\eqref{EqR} is such that
\begin{equation}
\label{EqRbound}
R(\mathscr{C}) = \frac{1}{n} \log_2 \left( \frac{n!}{\prod_{\ell=1}^L (nP_{\mathscr{C}}(x^{(\ell)}))!}\right)  \leq \log_2 L,\end{equation}
where, $P_{\mathscr{C}}$ is the type defined in~\eqref{eq:p_bar} and $L$ is the number of channel input symbols in~\eqref{EqL}.
\end{lemma}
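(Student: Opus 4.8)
The plan is to count, purely combinatorially, how many distinct length-$n$ strings over the alphabet $\mathcal{X}$ can serve as codewords of a constant composition code with type $P_{\mathscr{C}}$, and then to bound $M$ by this count. First I would invoke Definition~\ref{DefHC}: by~\eqref{EqHomogeneousCodes} together with~\eqref{eq:u_measure}, every codeword $\boldsymbol{u}(i)$, $i\inCountK{M}$, contains the symbol $x^{(\ell)}\in\mathcal{X}$ exactly $nP_{\mathscr{C}}(x^{(\ell)})$ times, for each $\ell\inCountK{L}$; in particular each $nP_{\mathscr{C}}(x^{(\ell)})$ is a nonnegative integer and $\sum_{\ell=1}^L nP_{\mathscr{C}}(x^{(\ell)})=n$. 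The number of distinct sequences in $\mathcal{X}^n$ with this prescribed composition is the multinomial coefficient $\frac{n!}{\prod_{\ell=1}^L (nP_{\mathscr{C}}(x^{(\ell)}))!}$, obtained by choosing which $nP_{\mathscr{C}}(x^{(1)})$ of the $n$ coordinates carry $x^{(1)}$, then which of the remaining ones carry $x^{(2)}$, and so on.

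Since the decoding regions $\mathcal{D}_1,\ldots,\mathcal{D}_M$ are pairwise disjoint, the codewords $\boldsymbol{u}(1),\ldots,\boldsymbol{u}(M)$ may be taken to be distinct, and all of them belong to this set of type-$P_{\mathscr{C}}$ sequences; hence
\begin{equation}
M \leq \frac{n!}{\prod_{\ell=1}^L (nP_{\mathscr{C}}(x^{(\ell)}))!},
\end{equation}
and by~\eqref{EqR} this yields $R(\mathscr{C}) = \frac{\log_2 M}{n} \leq \frac{1}{n}\log_2\!\left(\frac{n!}{\prod_{\ell=1}^L (nP_{\mathscr{C}}(x^{(\ell)}))!}\right)$, with equality attained by the largest code that employs every sequence of type $P_{\mathscr{C}}$.

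For the remaining bound by $\log_2 L$, I would observe that the multinomial coefficient is a single nonnegative summand in the expansion $L^n = (\underbrace{1+\cdots+1}_{L})^n = \sum_{k_1+\cdots+k_L=n}\binom{n}{k_1,\ldots,k_L}$, so $\frac{n!}{\prod_{\ell=1}^L (nP_{\mathscr{C}}(x^{(\ell)}))!}\leq L^n$; equivalently, the sequences of a fixed type form a subset of $\mathcal{X}^n$, which has $L^n$ elements. Taking $\frac{1}{n}\log_2(\cdot)$ of both sides gives $\frac{1}{n}\log_2\!\left(\frac{n!}{\prod_{\ell=1}^L (nP_{\mathscr{C}}(x^{(\ell)}))!}\right)\leq \log_2 L$, and chaining this with the previous inequality completes the proof.

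I do not expect a genuine obstacle here: this is the standard ``size of a type class'' estimate from the method of types. The only points meriting a word of care are the integrality of the counts $nP_{\mathscr{C}}(x^{(\ell)})$, which is immediate from~\eqref{eq:u_measure} and constant composition, and the reading of the displayed identity in the statement as the rate of the maximal such code; if $M$ is instead kept free, the ``$=$'' there should be read as ``$\leq$'' and the argument above applies verbatim.
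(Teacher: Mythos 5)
Your proposal is correct and follows essentially the same route as the paper: count the sequences of the prescribed type by the multinomial coefficient to obtain the expression for $R(\mathscr{C})$, and bound it by $\log_2 L$ via the fact that all such sequences lie in $\mathcal{X}^n$ (equivalently, $M \leq L^n$), which is exactly the paper's two-step argument. Your added care about reading the displayed ``$=$'' as ``$\leq$'' unless the code uses every sequence of type $P_{\mathscr{C}}$ is a fair observation (the paper asserts the equality directly), but it does not change the substance of the argument.
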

\begin{proof}
The largest number of codewords of length $n$ that can be formed using $L$ channel input symbols is $L^n$. In this case, the codewords do not exhibit the type $P_{\mathscr{C}}$. 
Hence, from~\eqref{EqR}, in the absence of a constraint on the type $P_{\mathscr{C}}$, the largest information transmission rate is $\log_2 L$ bits per channel use which is the inequality on the right-hand side of~\eqref{EqRbound}.

Alternatively, given a code type $P_{\mathscr{C}}$ that satisfies~\eqref{EqHomogeneousCodes}, the number of codewords that can be constructed is given by 
\begin{IEEEeqnarray}{l}
\label{Rfactorials}
M = \binom{n}{nP_{\mathscr{C}}(x^{(1)})} \binom{n-nP_{\mathscr{C}}(x^{(1)})}{nP_{\mathscr{C}}(x^{(2)})}  \ldots \binom{n-\sum_{\ell=1}^{L-1} nP_{\mathscr{C}}(x^{(\ell)})}{nP_{\mathscr{C}}(x^{(L)})}
= \frac{n!}{\prod_{\ell=1}^L (nP_{\mathscr{C}}(x^{(\ell)}))!}.
\end{IEEEeqnarray}
Therefore, the information rate $R(\mathscr{C})$ in~\eqref{EqR} satisfies
\begin{IEEEeqnarray}{rCl}
R(\mathscr{C}) &=& \frac{1}{n} \log_2 \left( \frac{n!}{\prod_{\ell=1}^L (nP_{\mathscr{C}}(x^{(\ell)}))!}\right),
\end{IEEEeqnarray}
which completes the proof.
\end{proof}
 Equality in~\eqref{EqRbound} holds when the type induced by the codewords of the code $\mathscr{C}$ is uniform, i.e., for all $x \in \mathcal{X}$ it holds that
 \begin{equation} 
    P_{\mathscr{C}}(x) = \frac{1}{L}.
\end{equation}
 Though the bound on $R$ in~\eqref{EqRbound} is tight, it can quickly become computationally infeasible as the block-length $n$ increases. 
 The following theorem provides a tractable approximation of the bound in Lemma~\ref{lemma:R_upperbound}  in terms of the entropy of the type $P_{\mathscr{C}}$. 
 
 \begin{theorem} \label{CorRUpperBoundRelax}
Given a constant composition $(n,M,\mathcal{X},P,\epsilon,B,\delta)$-code $\mathscr{C}$ for the random transformation in~\eqref{EqChannelModel} of the form in~\eqref{Eqnm_code}, the information transmission rate $R(\mathscr{C})$ in~\eqref{EqR} is such that
\begin{IEEEeqnarray}{rCl} \label{EqCorRUpperBoundRelax}
R(\mathscr{C}) &\leq& 
H\left( P_{\mathscr{C}} \right)  + \frac{1}{n^2} \left( \frac{1}{12} - \sum_{\ell=1}^L \frac{1}{12 P_{\mathscr{C}}(x^{(\ell)}) +1}\right) 
 + \frac{1}{n} \left(\log \left( \sqrt{2\pi}\right)   - \sum_{\ell=1}^L \log\sqrt{2\pi P_{\mathscr{C}}(x^{(\ell)})} \right) - \frac{\log n}{n} \left( \frac{L-1}{2} \right), \IEEEeqnarraynumspace
\end{IEEEeqnarray}
where, $P_{\mathscr{C}}$ is the type defined in~\eqref{eq:p_bar} and $L$ is the number of channel input symbols in~\eqref{EqL}.
\end{theorem}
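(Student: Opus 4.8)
The plan is to derive~\eqref{EqCorRUpperBoundRelax} directly from the exact identity in Lemma~\ref{lemma:R_upperbound} by invoking a fully explicit, two-sided form of Stirling's approximation. Write $p_\ell \triangleq P_{\mathscr{C}}(x^{(\ell)})$, so that $\sum_{\ell=1}^L p_\ell = 1$ and, since $\mathscr{C}$ has constant composition, each $n p_\ell$ is a nonnegative integer with $\sum_{\ell=1}^L n p_\ell = n$; restricting attention to the support of $P_{\mathscr{C}}$ (the indices $\ell$ with $p_\ell > 0$), all quantities below are well defined. I would use Robbins' bounds, valid for every integer $k \geq 1$,
\[
\sqrt{2\pi k}\,\Big(\tfrac{k}{\mathrm{e}}\Big)^{k}\exp\!\Big(\tfrac{1}{12k+1}\Big) \;\leq\; k! \;\leq\; \sqrt{2\pi k}\,\Big(\tfrac{k}{\mathrm{e}}\Big)^{k}\exp\!\Big(\tfrac{1}{12k}\Big).
\]
Since the target is an \emph{upper} bound on $R(\mathscr{C}) = \tfrac{1}{n}\log_2\!\big(n!\,/\,\prod_{\ell=1}^L (n p_\ell)!\big)$, the numerator $n!$ is bounded from above using the right-hand inequality with $k=n$, and each denominator factor $(n p_\ell)!$ from below using the left-hand inequality with $k = n p_\ell$.

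Taking logarithms and collecting terms is then bookkeeping. The terms linear in the arguments cancel because $\sum_\ell n p_\ell = n$. The $k\log k$ contributions collapse, using $\sum_\ell p_\ell = 1$, via
\[
n\log n - \sum_{\ell=1}^L n p_\ell \log(n p_\ell) \;=\; -\,n\sum_{\ell=1}^L p_\ell \log p_\ell \;=\; n\,H(P_{\mathscr{C}}),
\]
which supplies the leading term. The $\tfrac12\log(2\pi k)$ prefactors combine into $\log\sqrt{2\pi} - \sum_{\ell=1}^L \log\sqrt{2\pi p_\ell} - \tfrac{L-1}{2}\log n$, and dividing by $n$ yields the third and fourth terms of~\eqref{EqCorRUpperBoundRelax}. (All logarithms are read in one fixed base; the base merely rescales the lower-order terms and does not affect the argument.)

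The one point that needs a small extra estimate is the exponential correction. After taking logarithms it contributes $\tfrac{1}{12n} - \sum_{\ell=1}^L \tfrac{1}{12 n p_\ell + 1}$ to $n\,R(\mathscr{C})$, hence $\tfrac{1}{n}\big(\tfrac{1}{12n} - \sum_{\ell=1}^L \tfrac{1}{12 n p_\ell + 1}\big)$ to $R(\mathscr{C})$. Using $n \geq 1$ gives $12 n p_\ell + 1 \leq 12 n p_\ell + n = n(12 p_\ell + 1)$, so $-\tfrac{1}{12 n p_\ell + 1} \leq -\tfrac{1}{n(12 p_\ell + 1)}$; substituting and writing $\tfrac{1}{12n} = \tfrac{1}{n}\cdot\tfrac{1}{12}$ bounds this contribution by $\tfrac{1}{n^{2}}\big(\tfrac{1}{12} - \sum_{\ell=1}^L \tfrac{1}{12 p_\ell + 1}\big)$, the second term of~\eqref{EqCorRUpperBoundRelax}. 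Assembling the three pieces finishes the proof. There is no genuine obstacle: the only things to watch are applying the two Stirling inequalities in the correct directions and the elementary $n\geq 1$ estimate controlling the $\Theta(1/n^{2})$ remainder; the rest is algebraic simplification.
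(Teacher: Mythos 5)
Your proposal is correct and follows essentially the same route as the paper: starting from the exact counting identity of Lemma~\ref{lemma:R_upperbound}, applying Robbins' two-sided Stirling bounds with the upper bound on $n!$ and the lower bound on each $(nP_{\mathscr{C}}(x^{(\ell)}))!$, letting the linear terms cancel and the $k\log k$ terms produce $H(P_{\mathscr{C}})$, and handling the exponential corrections via the same $12nP_{\mathscr{C}}(x^{(\ell)})+1\leq n(12P_{\mathscr{C}}(x^{(\ell)})+1)$ estimate. Your explicit remark about restricting to the support of $P_{\mathscr{C}}$ is a harmless refinement of what the paper does implicitly.
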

\begin{proof} 
From~\eqref{EqRbound}, for the information transmission rate $R(\mathscr{C})$ of code $\mathscr{C}$, it holds that
\begin{IEEEeqnarray}{rCl}
\label{EqRrelax}
R(\mathscr{C}) &\leq& \frac{1}{n} \log(n!) - \frac{1}{n} \sum_{\ell=1}^L \log\left(\left(nP_{\mathscr{C}}(x^{(\ell)})\right)!\right).
\end{IEEEeqnarray}
Using the Stirling's approximation~\cite{robbins1955remark} on the factorial terms yields
\begin{IEEEeqnarray}{rCl}
  \label{EqStirlingnP}
  (nP_{\mathscr{C}}(x^{(\ell)}))!  & \geqslant &\sqrt{2\pi} \left(nP_{\mathscr{C}}(x^{(\ell)}) \right)^{nP_{\mathscr{C}}(x^{(\ell)}) +\frac{1}{2}}
\exp\left(-nP_{\mathscr{C}}(x^{(\ell)}) + \frac{1}{12nP_{\mathscr{C}}(x^{(\ell)}) +1} \right), \mbox{ and }\\
  \label{EqStirlingn}
  n !  & \leqslant &\sqrt{2\pi} n^{n +\frac{1}{2}} \exp\left(-n + \frac{1}{12 n} \right). 
\end{IEEEeqnarray}

From~\eqref{EqStirlingnP} and  ~\eqref{EqStirlingn}, it follows that,
{\allowdisplaybreaks
\begin{IEEEeqnarray}{rCl}
 \log\left( (nP_{\mathscr{C}}(x^{(\ell)}))! \right) 
& \geq &  \log \left( \sqrt{2\pi}\right) + \left(nP_{\mathscr{C}}(x^{(\ell)}) +\frac{1}{2}\right) \log (nP_{\mathscr{C}}(x^{(\ell)})) - nP_{\mathscr{C}}(x^{(\ell)}) + \frac{1}{12nP_{\mathscr{C}}(x^{(\ell)}) +1} \\
\nonumber
& =  &\log \left( \sqrt{2\pi}\right) +  nP_{\mathscr{C}}(x^{(\ell)}) \log (P_{\mathscr{C}}(x^{(\ell)})) +\frac{1}{2} \log (P_{\mathscr{C}}(x^{(\ell)})) \\
& &+  \left(nP_{\mathscr{C}}(x^{(\ell)}) +\frac{1}{2}\right) \log (n)  -nP_{\mathscr{C}}(x^{(\ell)}) + \frac{1}{12nP_{\mathscr{C}}(x^{(\ell)}) +1} \mbox{ and },\\
\nonumber
 \log\left( n! \right) & \leq &
\log \left( \sqrt{2\pi}\right) + \left(n +\frac{1}{2}\right) \log (n) -n + \frac{1}{12n}  \\
\label{EqRnfact}
& = & n  \log (n) - n + \frac{1}{12n} + \frac{1}{2}\log \left( 2\pi n\right).
 \end{IEEEeqnarray}
}
The sum in~\eqref{EqRrelax} satisfies, 
\begin{IEEEeqnarray}{rCl}
\nonumber
\sum_{\ell=1}^L \log\left(\left(nP_{\mathscr{C}}(x^{(\ell)})\right)!\right)
& \geq & L \log \left( \sqrt{2\pi}\right)  - nH\left( P_{\mathscr{C}} \right)  +\frac{1}{2}\sum_{\ell=1}^L \log (P_{\mathscr{C}}(x^{(\ell)})) + n \log (n) \\
\label{EqRrelaxA}
& &  +\frac{L}{2}\log(n) - n +\sum_{\ell=1}^L \frac{1}{12nP_{\mathscr{C}}(x^{(\ell)}) +1}.  \end{IEEEeqnarray}
Using~\eqref{EqRnfact} and~\eqref{EqRrelaxA} in~\eqref{EqRrelax} yields,
\begin{IEEEeqnarray}{rCl}
\nonumber
R(\mathscr{C}) &\leq& \log (n) - 1 + \frac{1}{12n^2} + \frac{1}{2n}\log \left( 2\pi n\right) - \frac{L}{n} \log \left( \sqrt{2\pi}\right)  + H\left( P_{\mathscr{C}} \right)  -\frac{1}{2n}\sum_{\ell=1}^L \log (P_{\mathscr{C}}(x^{(\ell)})) - \log (n)\\
& &- \frac{L}{2n}\log(n) + 1 - \frac{1}{n}\sum_{\ell=1}^L \frac{1}{12nP_{\mathscr{C}}(x^{(\ell)}) +1} \qquad\\
&\leq& H\left( P_{\mathscr{C}} \right)  + \frac{1}{n^2} \left( \frac{1}{12} - \sum_{\ell=1}^L \frac{1}{12 P_{\mathscr{C}}(x^{(\ell)}) +1}\right)  + \frac{1}{2n} \left( \log \left( 2\pi n\right)   - \sum_{\ell=1}^L \log (2\pi n P_{\mathscr{C}}(x^{(\ell)}))  \right) \\
&=& H\left( P_{\mathscr{C}} \right)  + \frac{1}{n^2} \left( \frac{1}{12} - \sum_{\ell=1}^L \frac{1}{12 P_{\mathscr{C}}(x^{(\ell)}) +1}\right) + \frac{1}{n} \left(\log \left( \sqrt{2\pi}\right)   - \sum_{\ell=1}^L \log\sqrt{2\pi P_{\mathscr{C}}(x^{(\ell)})} \right) - \frac{\log n}{n} \left( \frac{L-1}{2} \right) \label{Eq90}, \IEEEeqnarraynumspace
\end{IEEEeqnarray}
which completes the proof.
\end{proof}

Note that all terms in~\eqref{Eq90}, except the entropy $H\left( P_{\mathscr{C}} \right)$, vanish with the block-length $n$. This implies that the information rate is essentially constrained by the entropy of the channel input symbols. In particular, note that  $H\left( P_{\mathscr{C}}\right) \leqslant \log_2 L$. 
\begin{remark}
It is important to note here that, even though it may not be explicitly clear from the expressions, the bounds on the information rate and the DEP are in fact dependent on each other.
Each of the bounds in Subsection~\ref{SubsecInformationEnergyTransmission} are a function of the type $P_{\mathscr{C}}$ which defines the dependence between these bounds. 
\end{remark}
\subsection{Energy Transmission Rate and EOP} \label{SubsecEnergyConverse}

The following lemma introduces a lower bound on the EOP that holds for all $(n,M,\mathcal{X},P,\epsilon,B,\delta)$-codes. 
\begin{lemma} \label{LemmaB}
Given an $(n,M,\mathcal{X},P,\epsilon,B,\delta)$-code for the random transformation in~\eqref{EqChannelModel} of the form in~\eqref{Eqnm_code}, the following holds:
\begin{IEEEeqnarray}{rCl}
 \label{eq:B_bound_lemma}
\delta \geq \frac{1}{M} \sum_{i=1}^M \mathds{1}_{\lbrace e_i < B \rbrace},
\end{IEEEeqnarray}
where, for all $i \inCountK{M}$, the real $e_i \in [0,\infty)$ is in~\eqref{Eqei}.
\end{lemma}
\begin{proof}
The result follows from~\eqref{Eq25} and~\eqref{eq:delta}.
%
\end{proof}
The following theorem provides the bound on the EOP for the class of constant composition codes.
\begin{theorem} \label{LemmaBhomogeneous}
Given a constant composition $(n,M,\mathcal{X},P,\epsilon,B,\delta)$-code $\mathscr{C}$ for the random transformation in~\eqref{EqChannelModel} of the form in~\eqref{Eqnm_code}, the following holds:
\begin{IEEEeqnarray}{rCl} \label{EqboundEOP}
\delta = \mathds{1}_{\lbrace e_\mathscr{C} < B \rbrace},
\end{IEEEeqnarray}
where $e_\mathscr{C} \in [0,\infty)$ is in~\eqref{EqeiHomogeneous}.
\end{theorem}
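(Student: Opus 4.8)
The plan is to combine the exact expression for the EOP derived in~\eqref{Eq25} with the fact, already recorded in~\eqref{EqeiHomogeneous}, that a constant composition code assigns the same harvested energy to every codeword. First I would recall that for any $(n,M,\mathcal{X},P)$-code $\mathscr{C}$ of the form~\eqref{Eqnm_code} and any energy target $B$, equation~\eqref{Eq25} gives $\theta(\mathscr{C},B) = \frac{1}{M}\sum_{i=1}^M \mathds{1}_{\lbrace e_i < B \rbrace}$, where $e_i$ is the expected harvested energy in~\eqref{Eqei}.

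Next, I would invoke Definition~\ref{DefHC} together with~\eqref{EqeiHomogeneous}: since $\mathscr{C}$ has constant composition, for every $i \inCountK{M}$ we have $e_i = e_\mathscr{C}$, with $e_\mathscr{C}$ given in~\eqref{EqeiHomogeneousb}. Hence each indicator $\mathds{1}_{\lbrace e_i < B \rbrace}$ equals $\mathds{1}_{\lbrace e_\mathscr{C} < B \rbrace}$, independently of $i$, and the average collapses, so that $\theta(\mathscr{C},B) = \frac{1}{M}\sum_{i=1}^M \mathds{1}_{\lbrace e_\mathscr{C} < B \rbrace} = \mathds{1}_{\lbrace e_\mathscr{C} < B \rbrace}$; in particular $\theta(\mathscr{C},B) \in \lbrace 0,1\rbrace$, which also recovers the remark stated right after~\eqref{Eq25}.

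Finally, I would relate this to the parameter $\delta$ of Definition~\ref{def:nmed_code}. The constraint~\eqref{eq:delta} asks $\theta(\mathscr{C},B) \leq \delta$; if $e_\mathscr{C} \geq B$ then $\theta(\mathscr{C},B)=0$ and the tightest admissible value is $\delta = 0 = \mathds{1}_{\lbrace e_\mathscr{C} < B \rbrace}$, whereas if $e_\mathscr{C} < B$ then $\theta(\mathscr{C},B)=1$ forces $\delta = 1 = \mathds{1}_{\lbrace e_\mathscr{C} < B \rbrace}$; either way the (tight) EOP of a constant composition code equals $\mathds{1}_{\lbrace e_\mathscr{C} < B \rbrace}$. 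There is essentially no technical obstacle here — the argument is a one-line substitution — so the only point requiring care is to state explicitly that the displayed equality concerns the exact EOP $\theta(\mathscr{C},B)$ (equivalently, the smallest feasible $\delta$), rather than an arbitrary slack parameter satisfying~\eqref{eq:delta}.
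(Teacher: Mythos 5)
Your proposal is correct and follows essentially the same route as the paper: substitute the constant-composition identity $e_i = e_\mathscr{C}$ from~\eqref{EqeiHomogeneous} into the EOP expression~\eqref{Eq25} so that the average collapses to $\mathds{1}_{\lbrace e_\mathscr{C} < B \rbrace}$, and then invoke~\eqref{eq:delta} to identify $\delta$ with this indicator. Your explicit remark that the equality should be read as the exact EOP (i.e., the smallest feasible $\delta$, since~\eqref{eq:delta} alone only gives an inequality when $e_\mathscr{C} \geq B$) is a point the paper's proof glosses over, and is a welcome clarification rather than a deviation.
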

\begin{proof}
From~\eqref{Eq25}, the average EOP for the $(n,M,\mathcal{X},P,\epsilon,B,\delta)$-code $\mathscr{C}$ is given by
\begin{IEEEeqnarray}{rCl}  \label{Eq40}
 \theta(\mathscr{C},B) = \frac{1}{M} \sum_{i=1}^M \mathds{1}_{\lbrace e_i < B \rbrace}.
 \end{IEEEeqnarray}
 Since $\mathscr{C}$ is a constant composition code, from~\eqref{Eq40} and~\eqref{EqeiHomogeneous} it follows that
 \begin{IEEEeqnarray}{rCl} 
 \theta(\mathscr{C},B) &=& \frac{1}{M} \sum_{i=1}^M \mathds{1}_{\lbrace e_\mathscr{C} < B \rbrace} \\
 &=& \mathds{1}_{\lbrace e_\mathscr{C} < B \rbrace}. \label{Eq42}
 \end{IEEEeqnarray}
 From~\eqref{eq:delta} and~\eqref{Eq42}, it follows that
\begin{IEEEeqnarray}{rCl} \label{Eq477}
\delta = \mathds{1}_{\lbrace e_\mathscr{C} < B \rbrace}.
\end{IEEEeqnarray} 
 This completes the proof.
\end{proof}

The following lemma provides an upper bound on the energy transmission rate $B$. 
\begin{lemma} \label{LemmaBnew}
Given an $(n,M,\mathcal{X},P,\epsilon,B,\delta)$-code $\mathscr{C}$ for the random transformation in~\eqref{EqChannelModel} of the form in~\eqref{Eqnm_code}, the energy transmission rate $B$ satisfies,
\begin{IEEEeqnarray}{rCl} \label{EqConverseB}
B \leq \bar{e}_j, \quad  \text{if } \delta \leq \frac{\sum_{k=1}^{j} y_k}{M}, \ j \in \left\lbrace 1,2,3, \ldots, M' \right\rbrace
\end{IEEEeqnarray}
where, the positive integer $M'$ is in~\eqref{EqUniqueLevels}; and, for all $j \inCountK{M'}$, $\bar{e}_j$ is in~\eqref{EqUniqueLevels} and $y_j$ is in~\eqref{EqYindicator}.
\end{lemma}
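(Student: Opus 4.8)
The plan is to unwind the definition of the EOP from~\eqref{eq:theta_def}--\eqref{Eq25} and combine it with the constraint~\eqref{eq:delta}. Recall that, by~\eqref{Eq25}, the EOP of $\mathscr{C}$ equals $\theta(\mathscr{C},B) = \frac{1}{M} \sum_{i=1}^M \mathds{1}_{\lbrace e_i < B \rbrace}$, which is simply the fraction of codewords whose harvested energy falls below the target $B$. First I would rewrite this fraction in terms of the distinct energy levels $\bar{e}_1 < \bar{e}_2 < \ldots < \bar{e}_{M'}$ from~\eqref{EqUniqueLevels} and their multiplicities $y_1, y_2, \ldots, y_{M'}$ from~\eqref{EqYindicator}. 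Since every $e_i$ equals some $\bar{e}_k$, the number of codewords with $e_i < B$ is exactly $\sum_{k : \bar{e}_k < B} y_k$, so $\theta(\mathscr{C},B) = \frac{1}{M} \sum_{k : \bar{e}_k < B} y_k$.

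Next, I would invoke the ordering in~\eqref{EqUniqueLevels}: because $\bar{e}_1 < \bar{e}_2 < \ldots < \bar{e}_{M'}$, the set $\lbrace k : \bar{e}_k < B \rbrace$ is a prefix $\lbrace 1, 2, \ldots, j^\star \rbrace$ for some $j^\star \in \lbrace 0, 1, \ldots, M' \rbrace$, where $j^\star$ is the largest index with $\bar{e}_{j^\star} < B$ (with $j^\star = 0$ if $B \leq \bar{e}_1$). Hence $\theta(\mathscr{C},B) = \frac{1}{M} \sum_{k=1}^{j^\star} y_k$. The constraint~\eqref{eq:delta} then reads $\frac{1}{M} \sum_{k=1}^{j^\star} y_k \leq \delta$.

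Now I would argue the contrapositive form of the claimed implication. Fix $j \in \lbrace 1, 2, \ldots, M' \rbrace$ and suppose $\delta \leq \frac{\sum_{k=1}^{j} y_k}{M}$; I want $B \leq \bar{e}_j$. Suppose, for contradiction, that $B > \bar{e}_j$. Then $\bar{e}_1, \bar{e}_2, \ldots, \bar{e}_j$ are all strictly less than $B$, so $j^\star \geq j$, and therefore $\theta(\mathscr{C},B) = \frac{1}{M}\sum_{k=1}^{j^\star} y_k \geq \frac{1}{M}\sum_{k=1}^{j} y_k \geq \delta$. Combined with $\theta(\mathscr{C},B) \leq \delta$ from~\eqref{eq:delta}, this forces $\theta(\mathscr{C},B) = \delta$ and, more importantly, $\frac{1}{M}\sum_{k=1}^{j^\star} y_k = \frac{1}{M}\sum_{k=1}^{j} y_k$, i.e.\ $\sum_{k=j+1}^{j^\star} y_k = 0$; since all $y_k \geq 1$ this is only possible if $j^\star = j$. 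But $j^\star = j$ with $B > \bar e_j$ is consistent, so a strict contradiction does not immediately follow from the non-strict hypothesis. This is the one delicate point: the statement as written uses $\delta \leq \frac{\sum_{k=1}^j y_k}{M}$ and concludes $B \leq \bar e_j$, so I would handle the boundary case $\delta = \frac{\sum_{k=1}^{j} y_k}{M}$ by noting that the largest $B$ consistent with $\theta(\mathscr{C},B) \leq \delta$ is precisely $B = \bar e_j$ (any $B \in (\bar e_j, \bar e_{j+1}]$ keeps $j^\star = j$ and hence still satisfies the bound, but $B \le \bar e_j$ is the tightest universally valid statement and is what the lemma asserts as an upper bound on the \emph{guaranteed} energy rate); for $\delta < \frac{\sum_{k=1}^j y_k}{M}$ the argument above gives $j^\star \le j-1$, hence $B \le \bar e_j$ strictly via $B \le \bar e_{j^\star+1} \le \bar e_j$.

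The main obstacle, then, is not any computation but the careful bookkeeping around strict versus non-strict inequalities and the prefix structure of $\lbrace k : \bar e_k < B\rbrace$; once the reduction $\theta(\mathscr{C},B) = \frac{1}{M}\sum_{k=1}^{j^\star} y_k$ is in place, the bound $B \leq \bar e_j$ under $\delta \leq \frac{1}{M}\sum_{k=1}^{j} y_k$ follows by monotonicity of the partial sums $\sum_{k=1}^{j} y_k$ in $j$ together with the ordering of the $\bar e_j$'s. I would close by remarking that this recovers Theorem~\ref{LemmaBhomogeneous} in the constant-composition case, where $M' = 1$, $y_1 = M$, $\bar e_1 = e_\mathscr{C}$, and the bound collapses to $B \le e_\mathscr{C}$ whenever $\delta \le 1$.
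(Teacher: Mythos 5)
Your proposal is correct and takes essentially the same route as the paper: both reduce the EOP to the prefix count $\frac{1}{M}\sum_{k:\,\bar{e}_k<B} y_k$ over the ordered distinct energy levels $\bar{e}_1<\dots<\bar{e}_{M'}$ and then deduce the bound on $B$ from the monotonicity of the partial sums $\sum_{k=1}^{j} y_k$. The boundary case $\delta=\frac{1}{M}\sum_{k=1}^{j}y_k$ that you flag is glossed over in the paper's own proof as well (its ``possible if and only if $B\leq\bar{e}_1$'' step has exactly the same slack, since $B\in(\bar{e}_j,\bar{e}_{j+1}]$ still yields $\theta(\mathscr{C},B)=\delta$), so your explicit bookkeeping of strict versus non-strict inequalities is, if anything, more careful than the original.
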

\begin{proof}
From Lemma~\ref{LemmaB}, it follows that,
\begin{IEEEeqnarray}{rCl} \label{Eq46b}
 \sum_{i=1}^M \mathds{1}_{\lbrace e_i < B \rbrace} \leq M\delta.
\end{IEEEeqnarray}
The inequality in~\eqref{Eq46b} shows that, in order to achieve an EOP less than or equal to $\delta$, the number of codewords that have energy less than $B$ (given by $ \sum_{i=1}^M \mathds{1}_{\lbrace e_i < B \rbrace}$) can at most be equal to $\lfloor M \delta \rfloor$. Using this information, the upper bound on the energy transmission rate $B$ can be calculated as follows.

From~\eqref{EqYindicator} and the definition of the EOP in~\eqref{Eq25}, it can be seen that the EOP $\delta$ can only take values from the following set:
\begin{IEEEeqnarray}{rCl} \label{Eq47b}
 \left\lbrace 0, \frac{y_1}{M}, \frac{y_1 + y_2}{M}, \ldots, \frac{\sum_{j=1}^{M'-1} y_j}{M}, 1  \right\rbrace. 
 \end{IEEEeqnarray}
 For $\delta \leq \frac{y_1}{M}$, at most $\lfloor M \delta \rfloor = y_1$ codewords can have energy less than $B$. This is possible if and only if
\begin{IEEEeqnarray}{rCl} 
B \leq \bar{e}_1.
\end{IEEEeqnarray}
 For $ \delta \leq \frac{y_1 + y_2}{M}$, at most $\lfloor M \delta \rfloor = y_1 + y_2$ codewords can have energy less than $B$ which is possible only if
\begin{IEEEeqnarray}{rCl} 
B \leq \bar{e}_2.
\end{IEEEeqnarray}
Using a similar argument for all possible values of $\delta$ in~\eqref{Eq47b},
the upper bound for $B$ is given by the following:
\begin{IEEEeqnarray}{rCl} \label{Eq42b}
B \leq \bar{e}_j, \quad  \text{if } \delta \leq \frac{\sum_{k=1}^{j} y_k}{M}, \ j \in \left\lbrace 1,2,3, \ldots, M' \right\rbrace.
\end{IEEEeqnarray}
This completes the proof.
\end{proof}
Consider the special case in which all the $M$ codewords carry a different amount of energy. That is, for all $i \inCountK{M}$, $e_1 \neq e_2 \neq \ldots \neq e_M$. Assume without loss of generality that
\begin{IEEEeqnarray}{rCl}
e_1 < e_2 < \ldots < e_M.
\end{IEEEeqnarray}
For this particular case, the upper bound on $B$ is given by the following:
\begin{IEEEeqnarray}{rCl} \label{Eq40b}
B \leq e_i, \ \ \text{if } \delta \leq \frac{i}{M}, i \inCountK{M}.
\end{IEEEeqnarray}
Note that~\eqref{Eq40b} is a special case of~\eqref{Eq42b} for $M' = M$.

In the following, a brief discussion is presented on the interpretation of the results presented in this section
\subsection{Discussion}
The bounds derived in Sections~\ref{sec:results} define the impossibility region for $(n,M,\mathcal{X},P,\epsilon,B,\delta)$-codes for the random transformation in~\eqref{EqChannelModel} of the form in~\eqref{Eqnm_code}.
These bounds reveal that trade-offs exist between the information rate $R$, the energy rate $B$, the DEP $\epsilon$ and the EOP $\delta$ of an $(n,M,\mathcal{X},P,\epsilon,B,\delta)$-code. These trade-offs are a result of the dependence of these parameters on the type $P_{\mathscr{C}}$ in~\eqref{eq:p_bar}. For instance, the upper bound on the information rate in~\eqref{EqCorRUpperBoundRelax} and the lower bound on the DEP in~\eqref{EqboundDEP} are both a function of the type $P_{\mathscr{C}}$. Though it is not made obvious from~\eqref{EqboundEOP} and~\eqref{EqConverseB}, the lower bound on the EOP and the upper bound on $B$ also depend on the type via~\eqref{EqeiHomogeneous} and~\eqref{Eq27c}. 

The information rate in~\eqref{EqCorRUpperBoundRelax} is essentially upper bounded by the entropy of $P_{\mathscr{C}}$. Hence, codes  with codewords such that every channel input symbol is used the same number of times are less constrained in terms of the information rate. This is the case in which  $P_{\mathscr{C}}$ is a uniform distribution. Alternatively, using a uniform type $P_{\mathscr{C}}$ might reduce the energy transmission rate significantly. For instance, assume that the set of channel input symbols is such that for at least one pair $(x^{(1)},x^{(2)}) \in \mathcal{X}^2$, with $\mathcal{X}$ in~\eqref{EqCIsymbols}, it holds that $\left| x^{(1)} \right| < \left| x^{(2)} \right|$. Then, from~\eqref{EqeiHomogeneousb}, it is clear that using the symbol $x^{(1)}$ equally often as $x^{(2)}$ constraints the energy $e_\mathscr{C}$ and hence, the energy rate $B$. Contrary to this, codes that exhibit the largest energy rates are those in which the symbols that have the largest magnitude are used most often. This clearly deviates from the uniform distribution and thus, constrains the information rate~$R$. 

The following section characterizes an achievable information-energy region for SIET. Given a set of channel input symbols, an achievable region defines tuples of information rate, energy rate, DEP and EOP that can be achieved by some code built using a given set of channel input symbols. 
%
\section{Achievability} \label{SecAchievability}
The first step in characterizing the achievability bounds for SIET in the finite block-length regime with finite set of channel input symbols is the construction of an $(n,M,\mathcal{X},P,\epsilon,B,\delta)$-code $\mathscr{C}$ of the form in~\eqref{Eqnm_code}. The code is constructed over a given set of channel input symbols $\mathcal{X}$. The set of channel input symbols is a modulation constellation represented by a finite subset of $\complex$. We observe that, any set of channel input symbols $\mathcal{X}$ in~\eqref{EqCIsymbols} can be described as a collection of $C \in \ints$ \emph{layers}, where, a layer is a subset of symbols in $\mathcal{X}$ that have the same magnitude. For all $c \in \lbrace 1,2, \ldots, C \rbrace$, denote by  $L_c \in \ints$ the number of symbols and by $A_c \in \reals^+$ the amplitude of the symbols in layer $c$.
 For characterizing the achievability bounds, we consider the sets of channel input symbols  in which the symbols in layer $c$ are equally spaced along the circle of radius $A_c$ (see Fig.~\ref{FigConstellation}). This is because, given the number of symbols in a layer, the symbols being equally spaced is the most favorable in terms of the DEP. This is also the structure of standard constellations employed in wireless communication (for instance QPSK, 16-QAM, 64-QAM).
Let $\alpha_c \in [0,2\pi]$ denote the phase shift of the symbols in layer $c$. The layer $c$ is denoted as
\begin{subequations}
\label{EqnmCodeCircle}
\begin{equation} \label{EqLayerCircle}
    \mathcal{U}(A_c,L_c,\alpha_c) \triangleq \left\lbrace x_c^{\left( \ell \right)} = A_c \exp\left(\mathrm{i} \left(\frac{2\pi}{L_c} \ell+\alpha_c\right)\right) \subseteq \complex: \ell \in \left \lbrace 0, 1,2, \ldots, (L_c-1)\right \rbrace \right\rbrace,
\end{equation}
where $\mathrm{i}$ is the complex unit. Using this notation, the set of channel input symbols can be described by the following set 
\begin{IEEEeqnarray}{rCl} \label{EqConstellationCircle}
    \mathcal{X} & = & \bigcup_{c=1}^C \mathcal{U}(A_c,L_c,\alpha_c).
\end{IEEEeqnarray} 
The vector of the amplitudes in~\eqref{EqConstellationCircle} is denoted by
\begin{equation} \label{EqAcVector}
\boldsymbol{A} = \left( A_1, A_2, \ldots, A_C\right)^{\sf{T}};
\end{equation}
the vector of the number of symbols in each layer in~\eqref{EqConstellationCircle} is denoted by
\begin{equation} \label{EqLcVector}
\boldsymbol{L} = \left( L_1, L_2, \ldots, L_C\right)^{\sf{T}};
\end{equation}
and the vector of the phase shifts of the symbols in each layer in~\eqref{EqConstellationCircle} is denoted by
\begin{equation} \label{EqAlphaVector}
\boldsymbol{\alpha} = \left( \alpha_1, \alpha_2, \ldots, \alpha_C\right)^{\sf{T}}.
\end{equation}
The total number of symbols $L$ in~\eqref{EqL} for $\mathcal{X}$ in~\eqref{EqConstellationCircle} is
\begin{equation} \label{EqLSum}
    L = \sum_{c=1}^C L_c.
\end{equation}

Without any loss of generality, assume that
\begin{equation} \label{EqAmplitudesOrder}
    A_1 > A_2 > \ldots > A_C.
\end{equation}
%
%

Using this representation of the set of channel input symbols, the construction of the $(n,M,\mathcal{X},P)$-code $\mathscr{C}$ is accomplished as follows.
\subsection{Code Construction}
For all $c \inCountK{C}$, let $p_c$ be the frequency with which symbols of the $c$\ts{th} layer appear in the $(n,M,\mathcal{X},P,\epsilon,B,\delta)$-code $\mathscr{C}$. The resulting probability vector is denoted by
\begin{equation} \label{Eqp}
    \boldsymbol{p} = \left( p_1,p_2, \ldots, p_C \right)^{\sf{T}}.
\end{equation}
It should be noted that the probability vector $\boldsymbol{p}$ is a free parameter in the construction of code $\mathscr{C}$. It can be fixed to satisfy the required energy transmission rate, information transmission rate, the EOP, or the DEP based on the specific requirements of the system under consideration. 

For transmitting message index $i \inCountK{M}$, the transmitter uses the codeword 
\begin{IEEEeqnarray}{rCl} \label{Eq116i}
\boldsymbol{u}(i) = \left(u_1(i), u_2(i), \ldots, u_n(i) \right) \in \mathcal{X}^n,
\end{IEEEeqnarray}
with $\mathcal{X}$ in~\eqref{EqConstellationCircle}. Using the probability vector $\boldsymbol{p}$ in~\eqref{Eqp} which determines the type $P_{\mathscr{C}}$ in~\eqref{eq:p_bar}, and the set of symbols $\mathcal{X}$ in~\eqref{EqConstellationCircle}, the set of codewords $\left\lbrace \boldsymbol{u}(1), \boldsymbol{u}(2), \ldots, \boldsymbol{u}(M) \right\rbrace$ is completely defined. 

For all $c \inCountK{C}$, it holds that,
\begin{IEEEeqnarray}{rCl} \label{Eqpc}
    p_c &=& \frac{1}{Mn} \sum_{\ell = 1}^{L_c} \sum_{i=1}^M \sum_{m=1}^n \mathds{1}_{\{u_m(i) = x_c^{(\ell)}\}} \\
    &=& \frac{1}{M} \sum_{\ell = 1}^{L_c} \sum_{i=1}^M  P_{\boldsymbol{u}(i)}\left(x_c^{(\ell)} \right), \label{Eq115i}
\end{IEEEeqnarray}
where, for all $i \inCountK{M}$ and all $m \inCountK{n}$, the complex $u_m(i)$ in~\eqref{Eq116i} is the $m$\ts{th} channel input symbol of the codeword $\boldsymbol{u}(i)$ and $x_c^{(\ell)}$ is in~\eqref{EqLayerCircle}. The equality in~\eqref{Eq115i} follows from~\eqref{eq:u_measure}.
The symbols within a layer are used with the same frequency in $\mathscr{C}$ because they are equivalent in terms of the parameters of concern (i.e., $R$, $B$, $\epsilon$, $\delta$). 
More precisely, in this construction, using two symbols within the same layer with different frequencies would have no impact on $B$, $\epsilon$, or $\delta$. On the other hand, it would negatively impact the information rate $R$.
Therefore, for all $c \inCountK{C}$ and for all $\ell \inCountK{L_c}$, the frequency with which the symbol $x_c^{(\ell)}$ appears in $\mathscr{C}$ is given by
\begin{IEEEeqnarray}{rCl}
    P_{\mathscr{C}}(x_c^{(\ell)}) &\triangleq& \frac{1}{Mn} \sum_{i=1}^M \sum_{m=1}^n \mathds{1}_{\{u_m(i)= x_c^{(\ell)}\}}, \\
\label{EqTypeCircle}    &=& \frac{p_c}{L_c}.
\end{IEEEeqnarray}
%

In this construction, a message index $i$ is said to have been decoded correctly only if all of the $n$ constituent symbols of the codeword $\boldsymbol{u}(i)$ are decoded correctly. Though decoding at the level of symbols is sub-optimal, employing this decoding strategy allows us to obtain a closed form solution for the DEP, which is in turn used for determining the number of symbols $L_c$ that can be accommodated in layer $c$ while ensuring that the DEP does not exceed $\epsilon$.

The decoding region for codeword $\boldsymbol{u}(i)$ is denoted by $\mathcal{D}_i$ and that for the symbol $u_m(i)$ is denoted by $\mathcal{D}_{i,m}$ such that the following holds:
\begin{equation} \label{EqDecodingCodeword}
    \mathcal{D}_i = \mathcal{D}_{i,1} \times \mathcal{D}_{i,2} \times \ldots \times \mathcal{D}_{i,n}, 
\end{equation}
where, for all $i \inCountK{M}$, $c \inCountK{C}$, $m \inCountK{n}$, and $\ell \inCountK{L_c}$, when $u_m(i) = x_c^{(\ell)}$ in~\eqref{EqLayerCircle}, then, $\mathcal{D}_{i,m} = \mathcal{G}_c^{(\ell)}$.

The decoding set $\mathcal{G}_c^{(\ell)} \subseteq \complex$ associated with symbol $x_c^{(\ell)}$ is a circle of radius $r_c \in \reals^+$ centered at $x_c^{(\ell)}$. That is,
\begin{IEEEeqnarray}{rCl}
\mathcal{G}_c^{(\ell)} &=& \left\lbrace y \in \complex : \left| y - x_c^{\left(\ell \right)}\right|^2 \leq r_c^2 \right\rbrace  \label{EqDecodingCircleComplex}  \\
&=&\left\lbrace y \in \complex : \left(\Re(y) - \Re(x_c^{\left(\ell \right)})\right)^2 + \left(\Im(y) - \Im(x_c^{\left(\ell \right)})\right)^2 \leq r_c^2 \right\rbrace.  \label{EqDecodingCircle}
\end{IEEEeqnarray}
To ensure that the decoding regions are mutually disjoint, for all $c \inCountK{C}$, the amplitudes $A_c$ in~\eqref{EqLayerCircle} and the radii in~\eqref{EqDecodingCircleComplex} are chosen to satisfy the following:
\begin{IEEEeqnarray}{rCl} 
\label{EqAmplitudesDifference}
A_{c+1} - A_{c} \geq r_{c+1} + r_{c}.
\end{IEEEeqnarray}

The vector of the radii in~\eqref{EqDecodingCircleComplex} is denoted by
\begin{equation} \label{EqRadiusVector}
\boldsymbol{r} = \left( r_1, r_2, \ldots, r_C\right)^{\sf{T}}.
\end{equation}

This defines a family of $(n,M,\mathcal{X},P)$-codes denoted by
\begin{IEEEeqnarray}{rCl}
\label{EqFamily}
{\sf C} \left(C,\boldsymbol{A},\boldsymbol{L},\boldsymbol{\alpha},\boldsymbol{p},\boldsymbol{r} \right),
\end{IEEEeqnarray}
\end{subequations}
with the number of layers $C$ in~\eqref{EqConstellationCircle}, $\boldsymbol{A}$ in~\eqref{EqAcVector}, $\boldsymbol{L}$ in~\eqref{EqLcVector}, $\boldsymbol{\alpha}$ in~\eqref{EqAlphaVector} $\boldsymbol{p}$ in~\eqref{Eqp} and $\boldsymbol{r}$ in~\eqref{EqRadiusVector}. The following subsections characterize the achievable region for codes in the family ${\sf C} \left(C,\boldsymbol{A},\boldsymbol{L},\boldsymbol{\alpha},\boldsymbol{p},\boldsymbol{r} \right)$.

\subsection{Achievability Bounds} 
The first set of results presented in Subsection~\ref{SecAchievableBounds} provide a lower bound on the achievable DEP (Lemma~\ref{LemmaRadius}) and an upper bound on the achievable information rate (Lemma~\ref{LemmaAchievableR}) for codes in the family ${\sf C} \left(C,\boldsymbol{A},\boldsymbol{L},\boldsymbol{\alpha},\boldsymbol{p},\boldsymbol{r} \right)$. In the process, we also obtain a lower bound on the radii of decoding sets $\boldsymbol{r}$ in~\eqref{EqRadiusVector} (Lemma~\ref{CorollaryRadius}) and
 an upper bound on the number of symbols $L_c$ in each layer $c$ of the constructed code (Lemma~\ref{LemmaLc}). The second set of results in Subsection~\ref{SecAchievableBounds2} provide a lower bound on the achievable EOP (Lemmas~\ref{LemmaAchievableEOP}) and an upper bound on the energy transmission rate (Lemma~\ref{LemmaBAchievable}) for codes in the family ${\sf C} \left(C,\boldsymbol{A},\boldsymbol{L},\boldsymbol{\alpha},\boldsymbol{p},\boldsymbol{r} \right)$. Lemmas~\ref{LemmaAlphaImmaterial} and~\ref{LemmaAlphaImmaterialStrong} in Subsection~\ref{SubsecMisc} prove that the DEP and EOP achieved by the constructed family of codes are independent of rotation of the set of channel input symbols $\mathcal{X}$ and of the phase shift vector $\boldsymbol{\alpha}$ in~\eqref{EqAlphaVector}, respectively.

These bounds together define the achievable region of the codes of the form in~\eqref{EqnmCodeCircle} from the family ${\sf C} \left(C,\boldsymbol{A},\boldsymbol{L},\boldsymbol{\alpha},\boldsymbol{p},\boldsymbol{r} \right)$ in~\eqref{EqFamily} as given by the following theorem:
\begin{theorem} \label{TheoremAchievableRegion}
An $\left( n,M,\mathcal{X},P \right)$-code $\mathscr{C}$ of the form in~\eqref{EqnmCodeCircle} from the family ${\sf C} \left(C,\boldsymbol{A},\boldsymbol{L},\boldsymbol{\alpha},\boldsymbol{p},\boldsymbol{r} \right)$ in~\eqref{EqFamily} is an $\left(n,M,\mathcal{X},P,\epsilon,B,\delta \right)$-code if the number of layers $C$, and for all $c \inCountK{C}$, the parameters $A_c$, $L_c$, $r_c$, and $p_c$ satisfy the following:
\begin{subequations}
\begin{IEEEeqnarray}{l}
 \epsilon \geq 1 - \frac{1}{M}\sum_{i = 1}^M\prod_{c=1}^{C}  \left( 1-\exp \left(-\frac{r_c^2}{\sigma^2}\right) \right)^{n \sum_{\ell=1}^{L_c} P_{\boldsymbol{u}(i)}(x_c^{(\ell)})}, \label{EqTheoremrc} \\
L_c \leq \left\lfloor \frac{\pi}{2\arcsin{\frac{r_c}{2A_c}}} \right\rfloor, \label{EqTheoremLc}\\
R\left(\mathscr{C} \right) \leq \log_2 \sum_{c=1}^C \left\lfloor \frac{\pi}{2\arcsin{\frac{r_c}{2A_c}}} \right\rfloor, \label{EqTheoremR} \\
 \delta \geq  \frac{1}{M} \sum_{i=1}^M \mathds{1}_{\left\lbrace \left( k_1 \sum_{c=1}^C \sum_{\ell=1}^{L_c} n P_{\boldsymbol{u}(i)} \left( x_c^{(\ell)} \right) A_c^2 + k_2 \sum_{c=1}^C \sum_{\ell=1}^{L_c} n P_{\boldsymbol{u}(i)} \left( x_c^{(\ell)} \right) A_c^4 \right) < B \right\rbrace}, \label{EqTheoremB}
\end{IEEEeqnarray}
\end{subequations}
where, $R\left(\mathscr{C} \right)$ is the information rate in~\eqref{EqR}; $k_1$ and $k_2$ are non-zero positive constants in~\eqref{Eqei}; $P_{\boldsymbol{u}(i)}$ is the type in~\eqref{eq:u_measure}, and the real $\sigma^2$ is defined in~\eqref{EqDensities}.
\end{theorem}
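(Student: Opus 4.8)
The plan is to show that the four displayed relations, taken as hypotheses, together force $\gamma(\mathscr{C})\leq\epsilon$ and $\theta(\mathscr{C},B)\leq\delta$, so that $\mathscr{C}$ is an $(n,M,\mathcal{X},P,\epsilon,B,\delta)$-code in the sense of Definition~\ref{def:nmed_code}. Here \eqref{EqTheoremLc} plays the role of a structural/validity condition (it is what makes the product decoding regions in \eqref{EqDecodingCodeword}--\eqref{EqDecodingCircleComplex} legitimate, \ie, pairwise disjoint), \eqref{EqTheoremrc} then controls the DEP, \eqref{EqTheoremB} controls the EOP, and \eqref{EqTheoremR} is a derived consequence bounding the rate. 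I would treat the DEP first, then the rate, then the EOP, and finally assemble. Throughout one assumes $A_c\leq P$ for every $c$, so the peak-amplitude constraint \eqref{EqPCriteria} holds automatically for codewords in $\mathcal{X}^n$ with $\mathcal{X}$ in \eqref{EqConstellationCircle}.

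For the DEP, recall that in this construction a received vector $\boldsymbol{y}$ is decoded to message $i$ iff $\boldsymbol{y}\in\mathcal{D}_i=\mathcal{D}_{i,1}\times\cdots\times\mathcal{D}_{i,n}$, \ie, iff every coordinate $y_m$ lies in the disk $\mathcal{G}_{c}^{(\ell)}$ of radius $r_{c}$ centred at $u_m(i)$. The first step is to verify that these sets are pairwise disjoint, as required by \eqref{EqCodeProperties}: distinct codewords differ in at least one coordinate, so it suffices that any two decoding disks centred at distinct constellation points be disjoint, which holds because (i) across layers the radial separation \eqref{EqAmplitudesDifference} keeps disks of adjacent layers apart, and (ii) within a layer the $L_c$ equally spaced points on the circle of radius $A_c$ are far enough apart relative to $r_c$; writing this intra-layer packing condition in terms of the half-angle $\arcsin\!\left(\tfrac{r_c}{2A_c}\right)$ subtended by a chord of length $r_c$ on the circle of radius $A_c$, and taking the integer part, is exactly \eqref{EqTheoremLc}. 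Granting disjointness, since the noise coordinates $N_1,\ldots,N_n$ in \eqref{EqChannelModel} are independent and $|N_m|^2$ is exponentially distributed with mean $\sigma^2$ (as $\Re(N_m),\Im(N_m)$ are independent $\mathcal{N}(0,\tfrac12\sigma^2)$), a single symbol of layer $c$ is decoded correctly with probability $\mathrm{Pr}(|N_m|^2\leq r_c^2)=1-\exp(-r_c^2/\sigma^2)$. Multiplying over the $n$ coordinates of $\boldsymbol{u}(i)$ and grouping by layer (layer $c$ contributes $n\sum_{\ell=1}^{L_c}P_{\boldsymbol{u}(i)}(x_c^{(\ell)})$ symbols), the probability of correctly decoding message $i$ equals $\prod_{c=1}^C\bigl(1-\exp(-r_c^2/\sigma^2)\bigr)^{n\sum_{\ell}P_{\boldsymbol{u}(i)}(x_c^{(\ell)})}$; hence $\gamma_i(\mathscr{C})$ is one minus this, and averaging over $i\inCountK{M}$ as in \eqref{eq:gamma} identifies $\gamma(\mathscr{C})$ with the right-hand side of \eqref{EqTheoremrc}. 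Since this symbol-wise decoder is suboptimal this is in any case an achievable DEP, and \eqref{EqTheoremrc} then gives $\gamma(\mathscr{C})\leq\epsilon$.

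The rate bound follows at once: the codewords use only the $L=\sum_{c=1}^C L_c$ symbols of $\mathcal{X}$ in \eqref{EqConstellationCircle}, so by \eqref{eq:M_L} and \eqref{EqR} one has $R(\mathscr{C})=\tfrac1n\log_2 M\leq\log_2 L=\log_2\sum_{c=1}^C L_c$, into which the per-layer bound \eqref{EqTheoremLc} is substituted to give \eqref{EqTheoremR}. For the EOP, every symbol of layer $c$ has magnitude $A_c$, so the energy expression \eqref{Eqei} specializes to $e_i=k_1\sum_{c=1}^C\sum_{\ell=1}^{L_c}nP_{\boldsymbol{u}(i)}(x_c^{(\ell)})A_c^2+k_2\sum_{c=1}^C\sum_{\ell=1}^{L_c}nP_{\boldsymbol{u}(i)}(x_c^{(\ell)})A_c^4$; substituting this layered expression for $e_i$ into the exact EOP formula $\theta(\mathscr{C},B)=\tfrac1M\sum_{i=1}^M\mathds{1}_{\{e_i<B\}}$ of \eqref{Eq25} (equivalently, Lemma~\ref{LemmaB}) shows that $\theta(\mathscr{C},B)$ equals the right-hand side of \eqref{EqTheoremB}, so \eqref{EqTheoremB} yields $\theta(\mathscr{C},B)\leq\delta$. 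Having shown $\gamma(\mathscr{C})\leq\epsilon$ and $\theta(\mathscr{C},B)\leq\delta$, Definition~\ref{def:nmed_code} concludes that $\mathscr{C}$ is an $(n,M,\mathcal{X},P,\epsilon,B,\delta)$-code.

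I expect the only genuinely delicate step to be the disjointness/packing part of the DEP argument --- turning ``the decoding disks are pairwise disjoint'' into the clean closed form \eqref{EqTheoremLc}, which requires the correct chord/angle identity on the circle of radius $A_c$ together with a careful floor, and pairing it with the inter-layer condition \eqref{EqAmplitudesDifference}. Everything downstream is exact bookkeeping: an independence-of-coordinates product for $\gamma(\mathscr{C})$, a counting bound for $R(\mathscr{C})$, and a direct substitution of the layered structure into \eqref{Eqei} and \eqref{Eq25} for $\theta(\mathscr{C},B)$; no further obstacles are anticipated.
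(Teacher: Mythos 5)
Your proposal is correct and follows essentially the same route as the paper: the paper's proof of Theorem~\ref{TheoremAchievableRegion} simply invokes Lemmas~\ref{LemmaRadius}, \ref{LemmaLc}, \ref{LemmaAchievableRnew} and~\ref{LemmaAchievableEOP}, and your argument is an inlined reconstruction of exactly those four results (your exponential-tail computation of the per-symbol correct-decoding probability is the paper's polar-coordinate Gaussian disk integral in disguise, and your chord/angle identity is the paper's Lemma~\ref{LemmaGeometry}). The only cosmetic difference is that you motivate \eqref{EqTheoremLc} as a strict disk-disjointness condition, whereas the paper obtains it from the non-overlap of the arcs that the decoding circles cut on the layer circle (a marginally weaker packing criterion); this does not affect the DEP, rate or EOP bounds.
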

\begin{proof}
The proof will follow from Lemmas~\ref{LemmaRadius},~\ref{LemmaLc},~\ref{LemmaAchievableRnew},~\ref{LemmaAchievableEOP} in the following subsections.
\end{proof}
What follows in Subsections~\ref{SecAchievableBounds} and~\ref{SecAchievableBounds2} below is a set of lemmas that prove Theorem~\ref{TheoremAchievableRegion}.
\subsection{Information Transmission Rate and DEP}  \label{SecAchievableBounds}
The following lemma provides a lower bound on the DEP of codes from the family ${\sf C} \left(C,\boldsymbol{A},\boldsymbol{L},\boldsymbol{\alpha},\boldsymbol{p},\boldsymbol{r} \right)$ in~\eqref{EqFamily}.
\begin{lemma} \label{LemmaRadius}
Consider an $(n,M,\mathcal{X},P,\epsilon,B,\delta)$-code $\mathscr{C}$ of the form in~\eqref{EqnmCodeCircle} from the family ${\sf C} \left(C,\boldsymbol{A},\boldsymbol{L},\boldsymbol{\alpha},\boldsymbol{p},\boldsymbol{r} \right)$ in~\eqref{EqFamily}. The parameters $r_1, r_2, \ldots, r_C$ in~\eqref{EqDecodingCircle} satisfy the following:
\begin{equation}
\epsilon \geq 1 - \frac{1}{M}\sum_{i = 1}^M\prod_{c=1}^{C}  \left( 1-\exp \left(-\frac{r_c^2}{\sigma^2} \right) \right)^{n \sum_{\ell=1}^{L_c} P_{\boldsymbol{u}(i)}(x_c^{(\ell)})},
\end{equation}
where, the type  $P_{\boldsymbol{u}(i)}$ is defined in~\eqref{eq:u_measure}, the real $\sigma^2$ is defined in~\eqref{EqDensities}, and $x_c^{(\ell)} \in \mathcal{U}(A_c,L_c,\alpha_c)$, with $\mathcal{U}(A_c,L_c,\alpha_c)$ in \eqref{EqLayerCircle}.
\end{lemma}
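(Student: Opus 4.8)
The plan is to compute the DEP exactly for the decoding rule specified in the construction, namely that a message index $i$ is decoded correctly only if each of the $n$ received symbols $Y_m$ lands in the disc $\mathcal{D}_{i,m} = \mathcal{G}_c^{(\ell)}$ of radius $r_c$ about $x_c^{(\ell)}$ whenever $u_m(i) = x_c^{(\ell)}$. First I would fix a message index $i$ and condition on $\boldsymbol{u}(i)$ being transmitted. Because the channel in~\eqref{EqChannelModel} is memoryless and the decoding set $\mathcal{D}_i$ factorizes as a product $\mathcal{D}_{i,1}\times\cdots\times\mathcal{D}_{i,n}$ by~\eqref{EqDecodingCodeword}, the probability of correct decoding factorizes: $\mathrm{Pr}(\boldsymbol{Y}\in\mathcal{D}_i\mid\boldsymbol{u}(i)) = \prod_{m=1}^n \mathrm{Pr}(Y_m\in\mathcal{D}_{i,m}\mid u_m(i))$.

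The key computational step is the per-symbol probability: if $u_m(i)=x_c^{(\ell)}$, then $Y_m = x_c^{(\ell)} + N_m$ with $N_m$ a circularly symmetric complex Gaussian of per-component variance $\tfrac12\sigma^2$, so $|Y_m - x_c^{(\ell)}|^2 = \Re(N_m)^2 + \Im(N_m)^2$ is exponentially distributed with mean $\sigma^2$. Hence $\mathrm{Pr}(|Y_m - x_c^{(\ell)}|^2 \leq r_c^2) = 1 - \exp(-r_c^2/\sigma^2)$. Collecting the factors by layer: the number of symbols of layer $c$ appearing in $\boldsymbol{u}(i)$ is $n\sum_{\ell=1}^{L_c} P_{\boldsymbol{u}(i)}(x_c^{(\ell)})$ by~\eqref{eq:u_measure}, so $\mathrm{Pr}(\boldsymbol{Y}\in\mathcal{D}_i\mid\boldsymbol{u}(i)) = \prod_{c=1}^C \bigl(1-\exp(-r_c^2/\sigma^2)\bigr)^{n\sum_{\ell=1}^{L_c} P_{\boldsymbol{u}(i)}(x_c^{(\ell)})}$. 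Therefore $\gamma_i(\mathscr{C}) = 1 - \prod_{c=1}^C \bigl(1-\exp(-r_c^2/\sigma^2)\bigr)^{n\sum_{\ell=1}^{L_c} P_{\boldsymbol{u}(i)}(x_c^{(\ell)})}$, and averaging over $i$ via~\eqref{eq:gamma} gives $\gamma(\mathscr{C}) = 1 - \tfrac1M\sum_{i=1}^M \prod_{c=1}^C(\cdots)$. The lemma's inequality then follows from $\gamma(\mathscr{C})\leq\epsilon$ in~\eqref{EqGammaUpperbound}.

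One subtlety I would be careful about: the disjointness condition~\eqref{EqAmplitudesDifference} on the radii is what guarantees the sets $\mathcal{G}_c^{(\ell)}$ are pairwise disjoint, so that "received symbol lands in the correct disc" is genuinely the event used by the decoder and there is no ambiguity or overlap inflating the error; I would invoke~\eqref{EqAmplitudesDifference} explicitly to justify that the product form of $\mathcal{D}_i$ is a legitimate decoding region. The only real obstacle is bookkeeping — making sure the exponent counting symbols per layer is correct and that the decoding regions as defined in~\eqref{EqDecodingCircleComplex}–\eqref{EqDecodingCodeword} are exactly the product sets used here; the probability calculation itself is just the exponential-tail identity and independence across channel uses, so no hard analysis is involved.
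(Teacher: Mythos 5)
Your proposal is correct and follows essentially the same route as the paper: factorize the correct-decoding probability over channel uses using the product structure of $\mathcal{D}_i$ and the memoryless channel, evaluate the per-symbol disc probability as $1-\exp(-r_c^2/\sigma^2)$ (your exponential-tail identity is exactly the paper's polar-coordinate Gaussian integral), group factors by layer via the type $P_{\boldsymbol{u}(i)}$, average over messages, and invoke $\gamma(\mathscr{C})\leq\epsilon$. The only difference is your side remark on disjointness, which the paper does not need (and~\eqref{EqAmplitudesDifference} alone would not give within-layer disjointness anyway — that comes from the bound on $L_c$); since the DEP in~\eqref{EqDEPi} is computed directly from the prescribed decoding sets, the calculation stands without it.
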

\begin{proof}
From~\eqref{EqDEPi} and~\eqref{eq:gamma}, the average DEP of the $\left(n,M,\mathcal{X},P,\epsilon,B,\delta \right)$-code $\mathscr{C}$ is given by
{\allowdisplaybreaks
\begin{IEEEeqnarray}{rCl}
    \gamma(\mathscr{C}) &=& 1 - \frac{1}{M} \sum_{i=1}^M \int_{\mathcal{D}_i} f_{\boldsymbol{Y}|\boldsymbol{X}}(\boldsymbol{y}|\boldsymbol{u}(i)) \mathrm{d}\boldsymbol{y} \\
    &=& 1 - \frac{1}{M} \sum_{i=1}^M \int_{\mathcal{D}_{i,1} \times \mathcal{D}_{i,2} \times \ldots \times \mathcal{D}_{i,n}} \prod_{m=1}^n f_{Y|X} \left( y|u_m(i) \right) \mathrm{d}y \label{Eq138b} \\
&=& 1 - \frac{1}{M} \sum_{i=1}^M \prod_{m=1}^n \int_{\mathcal{D}_{i,m}} f_{Y|X} \left( y|u_m(i) \right) \mathrm{d}y \label{Eq139} \\ 
&=& 1 - \frac{1}{M} \sum_{i=1}^M \prod_{c=1}^C \prod_{\ell=1}^{L_c} \left(  \int_{\mathcal{G}_c^{(\ell)}} f_{Y|X}(y|x_c^{(\ell)}) \mathrm{d}y  \right)^{n P_{\boldsymbol{u}(i)}(x_c^{(\ell)})}, \label{Eq140}
\end{IEEEeqnarray}
}
where, the equality in~\eqref{Eq138b} follows due to~\eqref{EqYXdistribution} and~\eqref{EqDecodingCodeword}, and~\eqref{Eq139} follows due to Fubini's theorem.
Using~\eqref{EqDensities} in~\eqref{Eq140} yields,
\begin{equation} \label{EqAvgGamma}
    \gamma \left(\mathscr{C} \right) = 1 - \frac{1}{M} \sum_{i=1}^M \prod_{c=1}^C \prod_{\ell=1}^{L_c} \left(  \int_{\mathcal{G}_c^{(\ell)}} \frac{1}{\pi \sigma^2} \exp \left(- \frac{(\Re(y)-\Re(x_c^{(\ell)}))^2 +(\Im(y) - \Im(x_c^{(\ell)}))^2}{\sigma^2} \right) \mathrm{d}y  \right)^{n P_{\boldsymbol{u}(i)}(x_c^{(\ell)})}.
\end{equation}
Evaluating the integral term in~\eqref{EqAvgGamma} for all $\ell \inCountK{L_c}$ yields,
\begin{IEEEeqnarray}{rcl}
&&\int_{\mathcal{G}_c^{(\ell)}} \frac{1}{\pi \sigma^2} \exp \left(- \frac{(\Re(y)-\Re(x_c^{(\ell)}))^2 +(\Im(y) - \Im(x_c^{(\ell)}))^2}{\sigma^2} \right) \mathrm{d}y \nonumber \\
= &&\int_{\Im(x_c^{(\ell)})-r_c}^{\Im(x_c^{(\ell)})+r_c} \int_{\Re(x_c^{(\ell)})-\sqrt{r_c^2-(v-\Im(x_c^{(\ell)}))^2}}^{\Re(x_c^{(\ell)})+\sqrt{r_c^2-(v-\Im(x_c^{(\ell)}))^2}} \frac{1}{\pi \sigma^2} \exp \left(-\frac{(u-\Re(x_c^{(\ell)}))^2+(v-\Im(x_c^{(\ell)}))^2}{\sigma^2} \right) \mathrm{d}u \mathrm{d}v,\\
= &&\int_{-r_c}^{r_c} \int_{-\sqrt{r_c^2-v^2}}^{\sqrt{r_c^2-v^2}} \frac{1}{\pi \sigma^2} \exp \left(-\frac{u^2+v^2}{\sigma^2} \right) \mathrm{d}u \mathrm{d}v, \label{Eq82}\\
= &&\int_{0}^{\pi} \int_{0}^{\frac{r_c}{\sigma}} \frac{1}{2\pi} \exp \left(-\zeta^2 \right) \zeta \mathrm{d}\zeta \mathrm{d}\eta, \label{Eq83} \\
= &&\left( 1-\exp \left(-\frac{r_c^2}{\sigma^2}\right) \right). \label{Eq94}
\end{IEEEeqnarray}
The equality in~\eqref{Eq83} is obtained from the change of variables $u = \sigma \zeta \cos \eta, v = \sigma \zeta \sin \eta$. Plugging~\eqref{Eq94} in~\eqref{EqAvgGamma} yields,
\begin{IEEEeqnarray}{rCl}
\gamma \left(\mathscr{C} \right) &=& 1 - \frac{1}{M} \sum_{i=1}^M \prod_{c=1}^C \prod_{\ell=1}^{L_c} \left( 1-\exp \left(-\frac{r_c^2}{\sigma^2}\right) \right)^{n P_{\boldsymbol{u}(i)}(x_c^{(\ell)})} ,\\
 \label{Eq95}    &=& 1 - \frac{1}{M}\sum_{i = 1}^M \prod_{c=1}^C \left( 1-\exp \left(-\frac{r_c^2}{\sigma^2} \right) \right)^{n \sum_{\ell=1}^{L_c} P_{\boldsymbol{u}(i)}(x_c^{(\ell)})}.
\end{IEEEeqnarray}
From~\eqref{EqGammaUpperbound}, for $\mathscr{C}$ to be an $\left( n,M,\mathcal{X},P,\epsilon \right)$-code, the following must hold:
\begin{equation} \label{EqGammaSumBound}
    \gamma \left(\mathscr{C} \right) \leq \epsilon.
\end{equation}
This implies that,
\begin{IEEEeqnarray}{rCl}
\epsilon \geq 1 - \frac{1}{M}\sum_{i = 1}^M \prod_{c=1}^C \left( 1-\exp \left(-\frac{r_c^2}{\sigma^2} \right) \right)^{n \sum_{\ell=1}^{L_c} P_{\boldsymbol{u}(i)}(x_c^{(\ell)})},
    \end{IEEEeqnarray}
which completes the proof.
\end{proof}
From Definition~\ref{DefHC} and~\eqref{EqTypeCircle}, it follows that, for a constant composition code $\mathscr{C}$ of the form in~\eqref{EqnmCodeCircle} from the family ${\sf C} \left(C,\boldsymbol{A},\boldsymbol{L},\boldsymbol{\alpha},\boldsymbol{p},\boldsymbol{r} \right)$ in~\eqref{EqFamily}, it holds that,
\begin{IEEEeqnarray}{rCl}
    \label{Eq100}
   P_{\boldsymbol{u}(i)}(x_c^{(\ell)}) = P_{\mathscr{C}}(x_c^{(\ell)}) = \frac{p_c}{L_c}.
\end{IEEEeqnarray} 

The following result for constant composition codes follows from~\eqref{Eq100} and Lemma~\ref{LemmaRadius}.
%
\begin{corollary}
\label{CorRadiusHomogeneous}
Consider a constant composition $(n,M,\mathcal{X},P,\epsilon,B,\delta)$-code $\mathscr{C}$ of the form in~\eqref{EqnmCodeCircle} from the family ${\sf C} \left(C,\boldsymbol{A},\boldsymbol{L},\boldsymbol{\alpha},\boldsymbol{p},\boldsymbol{r} \right)$ in~\eqref{EqFamily}. The parameters $r_1, r_2, \ldots, r_C$ in~\eqref{EqDecodingCircle} satisfy the following:
\begin{equation}
 \epsilon \geq 1- \prod_{c=1}^{C}  \left( 1-\exp \left(-\frac{r_c^2}{\sigma^2} \right) \right)^{n p_c},
\end{equation}
where, the real $\sigma^2$ is defined in~\eqref{EqDensities}, and $x_c^{(\ell)} \in \mathcal{U}(A_c,L_c,\alpha_c)$, with $\mathcal{U}(A_c,L_c,\alpha_c)$ in \eqref{EqLayerCircle}.
\end{corollary}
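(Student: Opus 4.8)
The plan is to obtain this corollary as a direct specialization of Lemma~\ref{LemmaRadius} using the defining property of constant composition codes recorded in~\eqref{Eq100}. No new estimates are needed; the only work is to verify that the inner sum over the layer index collapses and that the resulting expression no longer depends on the codeword index~$i$.

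First I would start from the bound established in Lemma~\ref{LemmaRadius}, namely
\begin{IEEEeqnarray}{rCl}
\epsilon \geq 1 - \frac{1}{M}\sum_{i = 1}^M \prod_{c=1}^C \left( 1-\exp \left(-\frac{r_c^2}{\sigma^2} \right) \right)^{n \sum_{\ell=1}^{L_c} P_{\boldsymbol{u}(i)}(x_c^{(\ell)})},
\end{IEEEeqnarray}
which is valid for every code of the form in~\eqref{EqnmCodeCircle} from the family ${\sf C}\left(C,\boldsymbol{A},\boldsymbol{L},\boldsymbol{\alpha},\boldsymbol{p},\boldsymbol{r}\right)$, irrespective of whether it has constant composition. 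Next I would invoke~\eqref{Eq100}, which states that for a constant composition code $P_{\boldsymbol{u}(i)}(x_c^{(\ell)}) = P_{\mathscr{C}}(x_c^{(\ell)}) = \frac{p_c}{L_c}$ for all $i \inCountK{M}$, all $c \inCountK{C}$, and all $\ell \inCountK{L_c}$. Summing this identity over $\ell$ gives $\sum_{\ell=1}^{L_c} P_{\boldsymbol{u}(i)}(x_c^{(\ell)}) = L_c \cdot \frac{p_c}{L_c} = p_c$, so the exponent in each factor is $n p_c$, independent of~$i$.

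Consequently the product $\prod_{c=1}^{C}\left(1-\exp\left(-\frac{r_c^2}{\sigma^2}\right)\right)^{n p_c}$ no longer depends on~$i$, and the average $\frac{1}{M}\sum_{i=1}^M$ of this constant quantity equals the quantity itself. Substituting back yields
\begin{IEEEeqnarray}{rCl}
\epsilon \geq 1 - \prod_{c=1}^{C}\left(1-\exp\left(-\frac{r_c^2}{\sigma^2}\right)\right)^{n p_c},
\end{IEEEeqnarray}
which is the claimed inequality. There is no genuine obstacle here: the argument is purely a bookkeeping step, and the only point worth checking carefully is that the layer-wise frequencies $p_c/L_c$ sum to $p_c$ over the $L_c$ symbols of layer~$c$, which is immediate from~\eqref{EqTypeCircle}.
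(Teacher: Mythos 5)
Your proposal is correct and follows exactly the route the paper intends: specialize Lemma~\ref{LemmaRadius} to a constant composition code, use~\eqref{Eq100} so that $\sum_{\ell=1}^{L_c} P_{\boldsymbol{u}(i)}(x_c^{(\ell)}) = p_c$ independently of $i$, and observe that the average over $i$ then collapses. Nothing is missing.
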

\begin{remark} \label{Remark1}
It is essential to note a couple of things about the choice of decoding sets in~\eqref{EqDecodingCircle} and the derived bounds on the achievable DEP.
Firstly, the circular form of decoding sets in~\eqref{EqDecodingCircle} is a choice made in order to obtain closed form tractable expressions for the lower bounds on the DEP as in Lemma~\ref{LemmaRadius} and Corollary~\ref{CorRadiusHomogeneous}.
This is enabled by the inherent circular symmetry of the complex AWGN noise.
The bounds in Lemma~\ref{LemmaRadius} and Corollary~\ref{CorRadiusHomogeneous} reveal the dependence of the DEP on the various parameters of the code such as the type. This is extremely instructive for studying the trade-offs between the parameters of the constructed family of codes.
Secondly, this choice of decoding sets provides guidelines for constructing the set of channel input symbols $\mathcal{X}$ while ensuring that the DEP does not exceed a required value which is controlled using the choice of the radii of the decoding sets in~\eqref{EqRadiusVector}.
Thirdly, even though the decoding sets defined in~\eqref{EqDecodingCircle} are used to determine the achievable lower bounds on the DEP, the achievable DEP can in fact be reduced further by employing the optimal MAP decoder.
This is because, the circular decoding sets in~\eqref{EqDecodingCircle} would be subsets of the decoding sets obtained using a MAP decoder.
The circular decoding sets could be used only for the purposes of code construction and to ensure that the DEP requirement is respected while the decoding process could make use of the optimal MAP decoder.
\end{remark}
For the special case where for all $c \inCountK{C}$, $r_c = r$ in~\eqref{EqRadiusVector}, the following lemma provides a lower bound on the radius $r$.
%
\begin{lemma} \label{CorollaryRadius}
Consider an $(n,M,\mathcal{X},P,\epsilon,B,\delta)$-code $\mathscr{C}$ of the form in~\eqref{EqnmCodeCircle} from the family ${\sf C} \left(C,\boldsymbol{A},\boldsymbol{L},\boldsymbol{\alpha},\boldsymbol{p},\boldsymbol{r} \right)$ in~\eqref{EqFamily} such that, for all $c \inCountK{C}$, $r_c = r$ in~\eqref{EqDecodingCircle}. Then, the parameter $r$ satisfies:
\begin{equation}\label{EqRminEqual}
    r \geq \sqrt{\sigma^2 \log \left( \frac{1}{1-(1-\epsilon)^{\frac{1}{n}}} \right) },
\end{equation}
where, the real $\sigma^2$ is defined in~\eqref{EqDensities}.
\end{lemma}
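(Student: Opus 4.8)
The starting point is Lemma~\ref{LemmaRadius}, which already expresses $\epsilon$ in terms of the radii $r_1, \dots, r_C$. The plan is to specialize that bound to the case $r_c = r$ for all $c \inCountK{C}$ and then isolate $r$. First I would substitute $r_c = r$ into
\begin{IEEEeqnarray}{rCl}
\epsilon &\geq& 1 - \frac{1}{M}\sum_{i = 1}^M \prod_{c=1}^C \left( 1-\exp \left(-\frac{r_c^2}{\sigma^2} \right) \right)^{n \sum_{\ell=1}^{L_c} P_{\boldsymbol{u}(i)}(x_c^{(\ell)})}, \nonumber
\end{IEEEeqnarray}
so that the common base $\left( 1-\exp \left(-r^2/\sigma^2 \right) \right)$ factors out of the product over $c$, leaving the exponent $n \sum_{c=1}^C \sum_{\ell=1}^{L_c} P_{\boldsymbol{u}(i)}(x_c^{(\ell)})$.

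The one genuine observation — and the only "obstacle," though it is a mild one — is that this exponent equals $n$ for every $i$. This holds because $P_{\boldsymbol{u}(i)}$ in~\eqref{eq:u_measure} is a pmf whose support is contained in $\mathcal{X}$, and by~\eqref{EqConstellationCircle} the layers $\mathcal{U}(A_c,L_c,\alpha_c)$ partition $\mathcal{X}$; hence $\sum_{c=1}^C \sum_{\ell=1}^{L_c} P_{\boldsymbol{u}(i)}(x_c^{(\ell)}) = \sum_{x\in\mathcal{X}} P_{\boldsymbol{u}(i)}(x) = 1$. With this, the double sum over $i$ collapses and the bound reduces to
\begin{IEEEeqnarray}{rCl}
\epsilon &\geq& 1 - \left( 1-\exp \left(-\frac{r^2}{\sigma^2} \right) \right)^{n}. \nonumber
\end{IEEEeqnarray}

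Finally I would rearrange this inequality algebraically: it gives $\left( 1-\exp(-r^2/\sigma^2) \right)^{n} \geq 1-\epsilon$, hence $1-\exp(-r^2/\sigma^2) \geq (1-\epsilon)^{1/n}$ (both sides nonnegative, $n$-th root is monotone), hence $\exp(-r^2/\sigma^2) \leq 1-(1-\epsilon)^{1/n}$, and taking logarithms (the right-hand side is positive whenever $\epsilon<1$) yields $r^2/\sigma^2 \geq \log\!\left( 1/(1-(1-\epsilon)^{1/n}) \right)$, which is~\eqref{EqRminEqual}. I would note in passing that $(1-\epsilon)^{1/n}\in(0,1)$ so the logarithm argument is well defined and positive, making the square root real. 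No further estimates are needed; the result is a direct corollary of Lemma~\ref{LemmaRadius}.
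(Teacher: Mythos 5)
Your proposal is correct and follows essentially the same route as the paper: the paper's own proof also specializes the DEP expression underlying Lemma~\ref{LemmaRadius} to $r_c = r$, uses that $\sum_{c}\sum_{\ell} P_{\boldsymbol{u}(i)}(x_c^{(\ell)}) = 1$ so the exponent collapses to $n$, and then rearranges $\epsilon \geq 1 - \left(1-\exp\left(-r^2/\sigma^2\right)\right)^n$ to isolate $r$. The only cosmetic difference is that the paper works from the exact DEP equality in its earlier derivation rather than citing the stated inequality of Lemma~\ref{LemmaRadius}, which changes nothing in substance.
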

\begin{proof}
If the parameters $r_c$ in~\eqref{EqDecodingCircle} are such that, for all $c \inCountK{C}$, $r_c = r$, the average DEP in~\eqref{Eq95} is given by:
\begin{IEEEeqnarray}{rCl}
\gamma \left( \mathscr{C} \right) &=& 1 - \frac{1}{M}\sum_{i = 1}^M \prod_{c=1}^C \left( 1-e^{-\frac{r^2}{\sigma^2}} \right)^{n \sum_{\ell=1}^{L_c} P_{\boldsymbol{u}(i)}(x_c^{(\ell)})}, \\
&=& 1 - \frac{1}{M}\sum_{i = 1}^M \left( 1-e^{-\frac{r^2}{\sigma^2}} \right)^{n \sum_{c=1}^C \sum_{\ell=1}^{L_c} P_{\boldsymbol{u}(i)}(x_c^{(\ell)})}, \\
&=& 1 - \frac{1}{M}\sum_{i = 1}^M \left( 1-e^{-\frac{r^2}{\sigma^2}} \right)^{n}, \\
&=& 1 - \left( 1-e^{-\frac{r^2}{\sigma^2}} \right)^{n}. \label{Eq105}
\end{IEEEeqnarray}
From~\eqref{EqGammaUpperbound}, for $\mathscr{C}$ to be an $\left( n,M,\mathcal{X},P,\epsilon \right)$-code, the following must hold:
\begin{IEEEeqnarray}{rCl}
\epsilon \geq 1 - \left( 1-e^{-\frac{r^2}{\sigma^2}} \right)^{n}.
    \end{IEEEeqnarray}
This implies that
\begin{IEEEeqnarray}{rCl}
    r \geq \sqrt{\sigma^2 \log \left( \frac{1}{1-(1-\epsilon)^{\frac{1}{n}}} \right) }.
\end{IEEEeqnarray}
This completes the proof.
\end{proof}
%
%
%
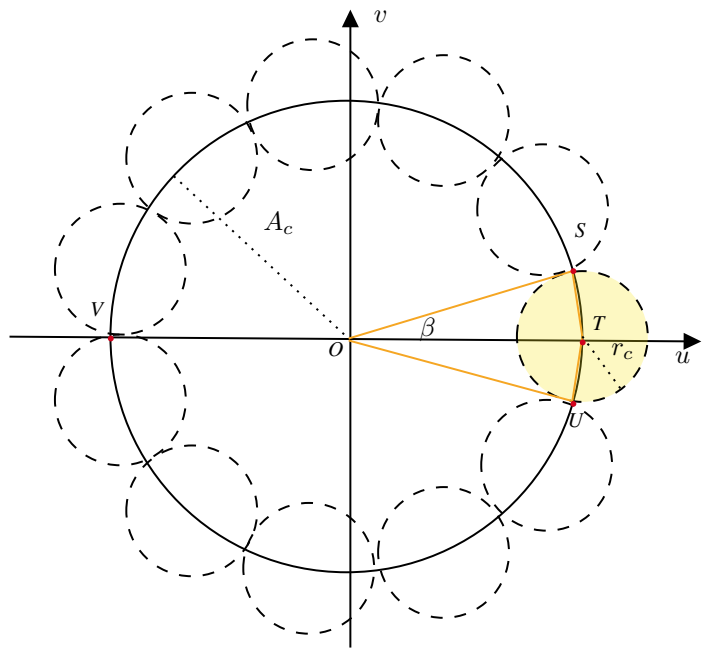
\begin{figure}
    \centering
\tikzset{every picture/.style={line width=0.75pt}} 

\begin{tikzpicture}[x=0.75pt,y=0.75pt,yscale=-1,xscale=1]

\draw  [color={rgb, 255:red, 0; green, 0; blue, 0 }  ,draw opacity=1 ] (189.67,167) .. controls (189.67,101.28) and (242.94,48) .. (308.67,48) .. controls (374.39,48) and (427.67,101.28) .. (427.67,167) .. controls (427.67,232.72) and (374.39,286) .. (308.67,286) .. controls (242.94,286) and (189.67,232.72) .. (189.67,167) -- cycle ;
\draw    (310.67,324) -- (310.67,5) ;
\draw [shift={(310.67,2)}, rotate = 450] [fill={rgb, 255:red, 0; green, 0; blue, 0 }  ][line width=0.08]  [draw opacity=0] (8.93,-4.29) -- (0,0) -- (8.93,4.29) -- cycle    ;
\draw    (138.77,167.08) -- (484.67,169.48) ;
\draw [shift={(487.67,169.5)}, rotate = 180.4] [fill={rgb, 255:red, 0; green, 0; blue, 0 }  ][line width=0.08]  [draw opacity=0] (8.93,-4.29) -- (0,0) -- (8.93,4.29) -- cycle    ;
\draw  [fill={rgb, 255:red, 248; green, 231; blue, 28 }  ,fill opacity=0.27 ][dash pattern={on 4.5pt off 4.5pt}] (394.67,167) .. controls (394.67,148.77) and (409.44,134) .. (427.67,134) .. controls (445.89,134) and (460.67,148.77) .. (460.67,167) .. controls (460.67,185.23) and (445.89,200) .. (427.67,200) .. controls (409.44,200) and (394.67,185.23) .. (394.67,167) -- cycle ;
\draw [color={rgb, 255:red, 245; green, 166; blue, 35 }  ,draw opacity=1 ]   (309.8,168) -- (422.61,134) ;
\draw [color={rgb, 255:red, 245; green, 166; blue, 35 }  ,draw opacity=1 ]   (422.61,134) -- (427.67,167) ;
\draw [color={rgb, 255:red, 245; green, 166; blue, 35 }  ,draw opacity=1 ]   (309.8,169) -- (422.61,199.6) ;
\draw [color={rgb, 255:red, 245; green, 166; blue, 35 }  ,draw opacity=1 ]   (427.67,167) -- (422.61,199.6) ;
\draw  [dash pattern={on 0.84pt off 2.51pt}]  (221.63,86.08) -- (308.67,167) ;
\draw  [dash pattern={on 0.84pt off 2.51pt}]  (427.67,167) -- (447.13,193.58) ;
\draw  [dash pattern={on 4.5pt off 4.5pt}] (197.67,255) .. controls (197.67,236.77) and (212.44,222) .. (230.67,222) .. controls (248.89,222) and (263.67,236.77) .. (263.67,255) .. controls (263.67,273.23) and (248.89,288) .. (230.67,288) .. controls (212.44,288) and (197.67,273.23) .. (197.67,255) -- cycle ;
\draw  [dash pattern={on 4.5pt off 4.5pt}] (161.67,199) .. controls (161.67,180.77) and (176.44,166) .. (194.67,166) .. controls (212.89,166) and (227.67,180.77) .. (227.67,199) .. controls (227.67,217.23) and (212.89,232) .. (194.67,232) .. controls (176.44,232) and (161.67,217.23) .. (161.67,199) -- cycle ;
\draw  [dash pattern={on 4.5pt off 4.5pt}] (324.67,58) .. controls (324.67,39.77) and (339.44,25) .. (357.67,25) .. controls (375.89,25) and (390.67,39.77) .. (390.67,58) .. controls (390.67,76.23) and (375.89,91) .. (357.67,91) .. controls (339.44,91) and (324.67,76.23) .. (324.67,58) -- cycle ;
\draw  [dash pattern={on 4.5pt off 4.5pt}] (374.67,103) .. controls (374.67,84.77) and (389.44,70) .. (407.67,70) .. controls (425.89,70) and (440.67,84.77) .. (440.67,103) .. controls (440.67,121.23) and (425.89,136) .. (407.67,136) .. controls (389.44,136) and (374.67,121.23) .. (374.67,103) -- cycle ;
\draw  [dash pattern={on 4.5pt off 4.5pt}] (161.67,133) .. controls (161.67,114.77) and (176.44,100) .. (194.67,100) .. controls (212.89,100) and (227.67,114.77) .. (227.67,133) .. controls (227.67,151.23) and (212.89,166) .. (194.67,166) .. controls (176.44,166) and (161.67,151.23) .. (161.67,133) -- cycle ;
\draw  [dash pattern={on 4.5pt off 4.5pt}] (197.67,77) .. controls (197.67,58.77) and (212.44,44) .. (230.67,44) .. controls (248.89,44) and (263.67,58.77) .. (263.67,77) .. controls (263.67,95.23) and (248.89,110) .. (230.67,110) .. controls (212.44,110) and (197.67,95.23) .. (197.67,77) -- cycle ;
\draw  [dash pattern={on 4.5pt off 4.5pt}] (258.67,50) .. controls (258.67,31.77) and (273.44,17) .. (291.67,17) .. controls (309.89,17) and (324.67,31.77) .. (324.67,50) .. controls (324.67,68.23) and (309.89,83) .. (291.67,83) .. controls (273.44,83) and (258.67,68.23) .. (258.67,50) -- cycle ;
\draw  [dash pattern={on 4.5pt off 4.5pt}] (256.67,284) .. controls (256.67,265.77) and (271.44,251) .. (289.67,251) .. controls (307.89,251) and (322.67,265.77) .. (322.67,284) .. controls (322.67,302.23) and (307.89,317) .. (289.67,317) .. controls (271.44,317) and (256.67,302.23) .. (256.67,284) -- cycle ;
\draw  [dash pattern={on 4.5pt off 4.5pt}] (376.67,232) .. controls (376.67,213.77) and (391.44,199) .. (409.67,199) .. controls (427.89,199) and (442.67,213.77) .. (442.67,232) .. controls (442.67,250.23) and (427.89,265) .. (409.67,265) .. controls (391.44,265) and (376.67,250.23) .. (376.67,232) -- cycle ;
\draw  [dash pattern={on 4.5pt off 4.5pt}] (324.67,276) .. controls (324.67,257.77) and (339.44,243) .. (357.67,243) .. controls (375.89,243) and (390.67,257.77) .. (390.67,276) .. controls (390.67,294.23) and (375.89,309) .. (357.67,309) .. controls (339.44,309) and (324.67,294.23) .. (324.67,276) -- cycle ;

\draw (422.42,108.03) node [anchor=north west][inner sep=0.75pt]  [font=\large] [align=left] {{\fontfamily{ptm}\selectfont {\footnotesize \textit{S}}}};
\draw (298,168) node [anchor=north west][inner sep=0.75pt]  [font=\large] [align=left] {{\fontfamily{ptm}\selectfont {\footnotesize \textit{O}}}};
\draw (431.35,155) node [anchor=north west][inner sep=0.75pt]  [font=\large] [align=left] {{\fontfamily{ptm}\selectfont {\footnotesize \textit{T}}}};
\draw (418.82,204.23) node [anchor=north west][inner sep=0.75pt]  [font=\large] [align=left] {{\fontfamily{ptm}\selectfont {\footnotesize \textit{U}}}};
\draw (265.29,102.71) node [anchor=north west][inner sep=0.75pt]  [font=\normalsize] [align=left] {$\displaystyle A_c$};
\draw (440.79,169.21) node [anchor=north west][inner sep=0.75pt]  [font=\normalsize] [align=left] {$\displaystyle r_c$};
\draw (473,173) node [anchor=north west][inner sep=0.75pt]   [align=left] {$\displaystyle u$};
\draw (321,1) node [anchor=north west][inner sep=0.75pt]   [align=left] {$\displaystyle v$};
\draw (344,156) node [anchor=north west][inner sep=0.75pt]   [align=left] {$\displaystyle \beta $};

\draw (178,147.9) node [anchor=north west][inner sep=0.75pt]  [font=\large] [align=left] {{\fontfamily{ptm}\selectfont {\footnotesize \textit{V}}}};
\draw (185,165) node [anchor=north west][inner sep=0.75pt]  [font=\huge] [align=left] {{\fontfamily{ptm}\selectfont \textcolor[rgb]{0.82,0.01,0.11}{.}}};

\draw (418.31,131) node [anchor=north west][inner sep=0.75pt]  [font=\huge] [align=left] {{\fontfamily{ptm}\selectfont \textcolor[rgb]{0.82,0.01,0.11}{.}}};
\draw (423,167) node [anchor=north west][inner sep=0.75pt]  [font=\huge] [align=left] {{\fontfamily{ptm}\selectfont \textcolor[rgb]{0.82,0.01,0.11}{.}}};
\draw (418.31,198) node [anchor=north west][inner sep=0.75pt]  [font=\huge] [align=left] {{\fontfamily{ptm}\selectfont \textcolor[rgb]{0.82,0.01,0.11}{.}}};

\end{tikzpicture}

    \caption{Graphical representation of the symbols in layer $c$ defined in~\eqref{EqLayerCircle}}
    \label{FigConstellation}
\end{figure}
%
%
The information rate achievable by a code is a function of the number of channel input symbols $L$ in~\eqref{EqLSum} which in turn is a function of the number of symbols in each layer of the set of channel inputs $\mathcal{X}$ in~\eqref{EqConstellationCircle}. The following lemma provides an upper bound on the number of symbols in a layer $c \inCountK{C}$ denoted by $L_c$.
\begin{lemma} \label{LemmaLc}
Consider an $(n,M,\mathcal{X},P,\epsilon,B,\delta)$-code $\mathscr{C}$ of the form in~\eqref{EqnmCodeCircle} from the family ${\sf C} \left(C,\boldsymbol{A},\boldsymbol{L},\boldsymbol{\alpha},\boldsymbol{p},\boldsymbol{r} \right)$ in~\eqref{EqFamily}. Then, for all $c \inCountK{C}$, the number of symbols in layer $c$ of $\mathcal{X}$ must satisfy the following:
\begin{equation}
    L_c \leq \left\lfloor \frac{\pi}{2\arcsin{\frac{r_c}{2A_c}}} \right\rfloor,
\end{equation}
where, $r_c$ is the radius of the decoding regions $\mathcal{G}_c^{(1)}, \mathcal{G}_c^{(2)}, \ldots, \mathcal{G}_c^{(L_c)}$ in~\eqref{EqDecodingCircle} and $A_c$ is the amplitude in~\eqref{EqLayerCircle}.
\end{lemma}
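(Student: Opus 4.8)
The plan is to turn the requirement that the per-symbol decoding regions be mutually disjoint into an angular packing constraint on the $L_c$ equally spaced symbols of layer $c$. First I would fix the layer $c$ and recall from~\eqref{EqLayerCircle} that its symbols $x_c^{(0)}, x_c^{(1)}, \ldots, x_c^{(L_c-1)}$ all lie on the circle of radius $A_c$ centred at the origin $O$, and that two consecutive symbols subtend a central angle of exactly $\frac{2\pi}{L_c}$ at $O$. By~\eqref{EqDecodingCircle}, the decoding region $\mathcal{G}_c^{(\ell)}$ of the symbol $x_c^{(\ell)}$ is the closed disk of radius $r_c$ centred at $x_c^{(\ell)}$. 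Since an $(n,M,\mathcal{X},P)$-code requires $\mathcal{D}_i\cap\mathcal{D}_j=\phi$ for $i\neq j$ (Definition~\ref{DefNmCode}) and the codeword decoding regions factor as in~\eqref{EqDecodingCodeword}, any two distinct symbols of the same layer that may appear in a common coordinate of two codewords must have disjoint decoding disks; in particular the disks of \emph{adjacent} symbols of layer $c$ must be disjoint.

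The geometric heart of the argument is to measure how much of the amplitude circle a single decoding disk occupies. Pick a symbol $T=x_c^{(\ell)}$ and let $S$ and $U$ be the two points where the boundary circle $\{y\in\complex:|y-T|=r_c\}$ meets the circle of radius $A_c$, labelled as in Fig.~\ref{FigConstellation}; by the reflection symmetry of this configuration across the ray $OT$, the points $S$ and $U$ are mirror images of one another. In the triangle $OST$ one has $|OS|=|OT|=A_c$ and $|ST|=r_c$, so it is isosceles with apex angle $\beta\triangleq\angle SOT$ at $O$; dropping the altitude from $O$ onto $ST$ bisects $\beta$ and the base, giving $r_c=2A_c\sin\!\big(\tfrac{\beta}{2}\big)$, hence
\[
\beta = 2\arcsin\!\frac{r_c}{2A_c}.
\]
Therefore the decoding disk of every symbol of layer $c$ intersects the circle of radius $A_c$ in an arc of angular width $2\beta$, centred at that symbol's angular position.

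To finish, I would observe that the $L_c$ symbols occupy angular positions that are equally spaced with gap $\frac{2\pi}{L_c}$, and the $L_c$ arcs just described — each of width $2\beta$ and centred at one of those positions — are pairwise disjoint because the corresponding decoding disks are disjoint. Consequently $2\beta\le\frac{2\pi}{L_c}$, i.e.
\[
L_c \;\le\; \frac{\pi}{\beta} \;=\; \frac{\pi}{2\arcsin\frac{r_c}{2A_c}},
\]
and since $L_c$ is a positive integer the bound sharpens to $L_c\le\big\lfloor \pi/(2\arcsin\frac{r_c}{2A_c})\big\rfloor$, which is the claim. The step I expect to require the most care is the translation of ``the decoding regions are mutually disjoint'' into the operative planar condition: one must identify the arc cut by a decoding disk on the amplitude circle (rather than, say, the angular sector between the tangent lines from $O$, or the stronger planar separation $|x_c^{(\ell)}-x_c^{(\ell+1)}|\ge 2r_c$) as the quantity used here, and verify that $\beta$ is the apex angle of the correct isosceles triangle. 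Everything else is elementary trigonometry, made transparent by the circular symmetry of both the layer of the constellation and the circular decoding sets in~\eqref{EqDecodingCircle}.
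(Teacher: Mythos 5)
Your proof is correct and takes essentially the same route as the paper: it converts the required disjointness of the per-symbol decoding disks into an angular packing constraint on the circle of radius $A_c$, computes the angle that one disk subtends at the origin, and divides $2\pi$ by it, exactly as in the paper's proof of Lemma~\ref{LemmaLc}. Your isosceles-triangle/chord computation yields the same subtended angle $4\arcsin\frac{r_c}{2A_c}$ that the paper derives in Lemma~\ref{LemmaGeometry} (Appendix~\ref{AppA}) via the inscribed-angle theorem and the law of sines, just more directly.
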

\begin{proof}
For an $(n,M,\mathcal{X},P,\epsilon,B,\delta)$-code $\mathscr{C}$ in the family ${\sf C} \left(C,\boldsymbol{A},\boldsymbol{L},\boldsymbol{\alpha},\boldsymbol{p},\boldsymbol{r} \right)$, for all $c \inCountK{C}$, the radius $r_c$ of the decoding regions in~\eqref{EqRadiusVector} and the amplitude $A_c$ in~\eqref{EqAcVector} determine the number of symbols $L_c$ that can be accommodated in the layer $c$. 
\par
Layer $c$ of the form in~\eqref{EqLayerCircle} is illustrated in Fig.~\ref{FigConstellation}. Symbols in layer $c$ are distributed uniformly along the circle of radius $A_c$ centered at the origin $O$.
The maximum number of symbols in layer $c$ is equal to the number of non-overlapping circles of radius $r_c$ corresponding to the decoding regions defined in~\eqref{EqDecodingCircle} that can be placed along the circumference of the circle of radius $A_c$.
From Fig.~\ref{FigConstellation}, a circle of radius $r_c$ centered at a symbol in layer $c$ subtends angle $\angle \mbox{SOU} = \beta$ at $O$. Therefore, the maximum number of symbols $L_c$ that can be accommodated along the circle of radius $A_c$ is given by
    \begin{equation} \label{Eq109}
        L_c \leq \left\lfloor  \frac{2\pi}{\beta} \right\rfloor.
    \end{equation}
\par
%
The following lemma determines the value of the angle $\beta$ in Fig.~\ref{FigConstellation}. 
\begin{lemma} \label{LemmaGeometry}
Consider a circle of radius $A_c$ centered at $O$ and a circle of radius $r_c$ centered on the circumference of the first circle as shown in Fig.~\ref{FigConstellation}. The two circles intersect at points $S$ and $U$. Then, the angle $\angle \mbox{SOU}$ is given by the following:
\begin{equation} \label{Eq110}
 \angle \mbox{SOU} = \beta = 4\arcsin{\frac{r_c}{2A_c}}.
\end{equation}
\end{lemma}
\begin{proof}
The proof is given in Appendix~\ref{AppA}.
\end{proof}
Substituting the value of $\beta$ from~\eqref{Eq110} in~\eqref{Eq109}, it follows that the number of symbols in layer $c$ of $\mathcal{X}$ is at most
    \begin{equation}
        L_c \leq \left\lfloor \frac{\pi}{2\arcsin{\frac{r_c}{2A_c}}} \right\rfloor.
    \end{equation}
This completes the proof.
\end{proof}
Using Lemma~\ref{LemmaLc}, the achievable bound on the information transmission rate $R\left(\mathscr{C} \right)$ for code $\mathscr{C}$ from the family ${\sf C} \left(C,\boldsymbol{A},\boldsymbol{L},\boldsymbol{\alpha},\boldsymbol{p},\boldsymbol{r} \right)$ in~\eqref{EqFamily} can now be calculated. The following lemma provides this result.
\begin{lemma} \label{LemmaAchievableRnew}
Given an $(n,M,\mathcal{X},P,\epsilon,B,\delta)$-code $\mathscr{C}$ of the form in~\eqref{EqnmCodeCircle} from the family ${\sf C} \left(C,\boldsymbol{A},\boldsymbol{L},\boldsymbol{\alpha},\boldsymbol{p},\boldsymbol{r} \right)$ in~\eqref{EqFamily}, the information transmission rate $R\left(\mathscr{C} \right)$ satisfies the following:
\begin{IEEEeqnarray}{rCl}
    R\left(\mathscr{C} \right) \leq \log_2 \sum_{c=1}^C \left\lfloor \frac{\pi}{2\arcsin{\frac{r_c}{2A_c}}} \right\rfloor,
\end{IEEEeqnarray}
where, for all $c \inCountK{C}$, the radius $r_c$ is in~\eqref{EqDecodingCircle}, and the amplitude $A_c$ is in~\eqref{EqLayerCircle}.
\end{lemma}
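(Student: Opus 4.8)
The plan is to chain together the elementary counting bound on the number of messages with the per-layer symbol bound already established in Lemma~\ref{LemmaLc}. First I would recall that, by definition, the information transmission rate in~\eqref{EqR} is $R(\mathscr{C}) = \frac{1}{n}\log_2 M$, and that the number of messages is constrained by~\eqref{eq:M_L}, namely $M \leq L^n$, where $L = \sum_{c=1}^C L_c$ is the total number of channel input symbols in~\eqref{EqLSum}. Combining these two facts immediately yields
\begin{IEEEeqnarray}{rCl}
R(\mathscr{C}) = \frac{1}{n}\log_2 M \leq \frac{1}{n}\log_2 L^n = \log_2 L,
\end{IEEEeqnarray}
so the rate is controlled entirely by $L$.

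Next I would invoke Lemma~\ref{LemmaLc}, which bounds the number of symbols that can be packed into layer $c$ while keeping the circular decoding sets $\mathcal{G}_c^{(1)}, \mathcal{G}_c^{(2)}, \ldots, \mathcal{G}_c^{(L_c)}$ in~\eqref{EqDecodingCircle} pairwise disjoint: for every $c \inCountK{C}$, it holds that $L_c \leq \left\lfloor \frac{\pi}{2\arcsin(r_c/(2A_c))} \right\rfloor$. Summing this inequality over $c \inCountK{C}$ gives $L = \sum_{c=1}^C L_c \leq \sum_{c=1}^C \left\lfloor \frac{\pi}{2\arcsin(r_c/(2A_c))} \right\rfloor$, and since $\log_2$ is monotone increasing, substituting this into $R(\mathscr{C}) \leq \log_2 L$ yields $R(\mathscr{C}) \leq \log_2 \sum_{c=1}^C \left\lfloor \frac{\pi}{2\arcsin(r_c/(2A_c))} \right\rfloor$, which is the claim.

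Since all the geometric content has been offloaded to Lemmas~\ref{LemmaLc} and~\ref{LemmaGeometry}, there is essentially no obstacle left in this step; the only points requiring care are bookkeeping ones. One should confirm that the base of the logarithm in~\eqref{eq:M_L} is consistent with that in~\eqref{EqR}, so that $M \leq L^n$ is applied correctly, and observe that the floor functions together with the monotonicity of $\log_2$ preserve the chain of inequalities. If a fully self-contained derivation were desired, one could additionally note that a code of block-length $n$ over an alphabet of size $L$ possesses at most $L^n$ distinct codewords, which is precisely the combinatorial fact underpinning~\eqref{eq:M_L}; everything else follows by substitution.
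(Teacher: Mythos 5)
Your proposal is correct and follows essentially the same route as the paper's proof: bound $M \leq L^n$ so that $R(\mathscr{C}) \leq \log_2 L$, then apply Lemma~\ref{LemmaLc} together with $L = \sum_{c=1}^C L_c$ in~\eqref{EqLSum} and the monotonicity of $\log_2$ to obtain the stated bound.
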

\begin{proof}
The largest number of codewords of length $n$ that can be formed with $L$ different channel input symbols is $L^n$. Hence, from~\eqref{EqR}, it follows that
\begin{IEEEeqnarray}{rCl}
    R\left(\mathscr{C} \right) &=& \frac{\log_2 M}{n} \\
    &\leq& \frac{\log L^n}{n} \\
    &=& \log L \label{EqRStep1} \\
    &\leq& \log \sum_{c=1}^C \left\lfloor \frac{\pi}{2\arcsin{\frac{r_c}{2A_c}}} \right\rfloor, \label{EqRStep2}
\end{IEEEeqnarray}
where, the inequality in~\eqref{EqRStep2} follows from Lemma~\ref{LemmaLc} and~\eqref{EqLSum}. This completes the proof.
\end{proof}
The following lemma provides a bound on the information transmission rate for a constant composition $(n,M,\mathcal{X},P,\epsilon,B,\delta)$-code.
\begin{lemma} \label{LemmaAchievableR}
For a constant composition $(n,M,\mathcal{X},P,\epsilon,B,\delta)$-code $\mathscr{C}$ of the form in~\eqref{EqnmCodeCircle} from the family ${\sf C} \left(C,\boldsymbol{A},\boldsymbol{L},\boldsymbol{\alpha},\boldsymbol{p},\boldsymbol{r} \right)$ in~\eqref{EqFamily}, the information transmission rate $R\left(\mathscr{C} \right)$ satisfies the following:
\begin{IEEEeqnarray}{rCl} \label{Eq174b}
    R\left(\mathscr{C} \right) = \frac{1}{n} \log_2 \left( \frac{n!}{\prod_{c=1}^C \left( \left( n\frac{ p_c}{L_c} \right)! \right)^{L_c}} \right).
\end{IEEEeqnarray}
\end{lemma}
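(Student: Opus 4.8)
The plan is to reduce this to the exact codeword count already established in the proof of Lemma~\ref{lemma:R_upperbound}, specialized to the type induced by the layered construction. The key observation is that for a constant composition code $\mathscr{C}$ of the form in~\eqref{EqnmCodeCircle}, every codeword shares the same composition, and by~\eqref{EqTypeCircle} together with~\eqref{Eq100} the symbol $x_c^{(\ell)}$ appears exactly $n P_{\mathscr{C}}(x_c^{(\ell)}) = n p_c/L_c$ times in each codeword, for every $c \inCountK{C}$ and every $\ell \inCountK{L_c}$.

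First I would invoke the counting identity~\eqref{Rfactorials} from the proof of Lemma~\ref{lemma:R_upperbound}: for a fixed composition, the number of length-$n$ sequences over $\mathcal{X}$ realizing it is the multinomial coefficient $n!\big/\prod_{x\in\mathcal{X}}\big(n P_{\mathscr{C}}(x)\big)!$. Next I would partition the product over $\mathcal{X}=\bigcup_{c=1}^C \mathcal{U}(A_c,L_c,\alpha_c)$ in~\eqref{EqConstellationCircle} into a product over layers; since all $L_c$ symbols in layer $c$ share the common frequency $p_c/L_c$, the contribution of layer $c$ to the denominator is $\big((n p_c/L_c)!\big)^{L_c}$, giving $M = n!\big/\prod_{c=1}^C \big((n p_c/L_c)!\big)^{L_c}$. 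Substituting this value of $M$ into the definition of the information rate in~\eqref{EqR} then yields the claimed expression.

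The only point needing care is that the construction in~\eqref{EqnmCodeCircle} uses \emph{all} sequences of the prescribed composition, so that the codeword count is an exact equality rather than merely an upper bound, and that $n p_c/L_c$ is a nonnegative integer so that the composition is realizable; both are built into the way $\mathscr{C}$ is specified from $\boldsymbol{p}$ and $\boldsymbol{L}$ in~\eqref{EqnmCodeCircle}. No deeper obstacle is expected: the statement is essentially Lemma~\ref{lemma:R_upperbound} rewritten for the layered type $P_{\mathscr{C}}(x_c^{(\ell)})=p_c/L_c$, with the extra bookkeeping of grouping equal-frequency symbols within each layer.
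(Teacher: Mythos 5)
Your proposal is correct and follows essentially the same route as the paper's proof: the paper likewise counts the codewords of the fixed layered composition via the product of binomial coefficients, which collapses to the multinomial coefficient $n!\big/\prod_{c=1}^C \bigl((n p_c/L_c)!\bigr)^{L_c}$, and then substitutes $M$ into $R(\mathscr{C})=\frac{\log_2 M}{n}$. Invoking the counting identity from Lemma~\ref{lemma:R_upperbound} and specializing it to the type $P_{\mathscr{C}}(x_c^{(\ell)})=p_c/L_c$ is just a more economical way of writing the same argument.
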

\begin{proof}
%
For a constant composition $(n,M,\mathcal{X},P,\epsilon,B,\delta)$-code $\mathscr{C}$ from the family ${\sf C} \left(C,\boldsymbol{A},\boldsymbol{L},\boldsymbol{\alpha},\boldsymbol{p},\boldsymbol{r} \right)$ for which the type $P_{\mathscr{C}}$ satisfies~\eqref{EqHomogeneousCodes}, the number of codewords that can be constructed  is given by 
\begin{IEEEeqnarray}{l}
M = \binom{n}{nP_{\mathscr{C}}(x_1^{(1)})} \binom{n-nP_{\mathscr{C}}(x_1^{(1)})}{nP_{\mathscr{C}}(x_1^{(2)})}  \ldots \binom{n-\sum_{\ell=1}^{L_1-1} nP_{\mathscr{C}}(x_1^{(\ell)})}{nP_{\mathscr{C}}(x_1^{(L_1)})} \times 
\binom{n-\sum_{\ell=1}^{L_1} nP_{\mathscr{C}}(x_1^{(\ell)})}{nP_{\mathscr{C}}(x_2^{(1)})} \nonumber \\
\binom{n-\sum_{\ell=1}^{L_1} nP_{\mathscr{C}}(x_1^{(\ell)}) - nP_{\mathscr{C}}(x_2^{(1)})}{nP_{\mathscr{C}}(x_2^{(2)})} \ldots \binom{n-\sum_{\ell=1}^{L_1} nP_{\mathscr{C}}(x_1^{(\ell)})-\sum_{\ell=1}^{L_2-1} nP_{\mathscr{C}}(x_2^{(\ell)})}{nP_{\mathscr{C}}(x_2^{(L_2)})}  \times \ldots \times 
\binom{n-\sum_{c=1}^{C-1} \sum_{\ell=1}^{L_c} nP_{\mathscr{C}}(x_c^{(\ell)})}{nP_{\mathscr{C}}(x_C^{(1)})} \nonumber \\
 \binom{n-\sum_{c=1}^{C-1} \sum_{\ell=1}^{L_c} nP_{\mathscr{C}}(x_c^{(\ell)}) - nP_{\mathscr{C}}(x_C^{(1)})}{nP_{\mathscr{C}}(x_C^{(2)})} \ldots \binom{n-\sum_{c=1}^{C-1} \sum_{\ell=1}^{L_c} nP_{\mathscr{C}}(x_c^{(\ell)})-\sum_{\ell=1}^{L_C-1} nP_{\mathscr{C}}(x_C^{(\ell)})}{nP_{\mathscr{C}}(x_C^{(L_C)})}  \\
 = \binom{n}{n\frac{p_1}{L_1}} \binom{n-n\frac{p_1}{L_1}}{n\frac{p_1}{L_1}}  \ldots \binom{n-\sum_{\ell=1}^{L_1-1} n\frac{p_1}{L_1}}{n\frac{p_1}{L_1}} \times 
\binom{n-np_1}{n\frac{p_2}{L_2}}
\binom{n-np_1 - n\frac{p_2}{L_2}}{n\frac{p_2}{L_2}} \ldots \binom{n-np_1-\sum_{\ell=1}^{L_2-1} n\frac{p_2}{L_2}}{n\frac{p_2}{L_2}}  \times \ldots \times \nonumber \\
\binom{n-\sum_{c=1}^{C-1} np_c}{n\frac{p_C}{L_C}}
 \binom{n-\sum_{c=1}^{C-1} np_c - n\frac{p_C}{L_C}}{n\frac{p_C}{L_C}} \ldots \binom{n-\sum_{c=1}^{C-1} np_c-\sum_{\ell=1}^{L_C-1} n\frac{p_C}{L_C}}{n\frac{p_C}{L_C}}  \\
 = \frac{n!}{\prod_{c=1}^C \left( \left( n\frac{ p_c}{L_c} \right)! \right)^{L_c}}
\end{IEEEeqnarray}
Therefore, the information transmission rate $R\left(\mathscr{C} \right)$ is given by
\begin{IEEEeqnarray}{rCl} \label{Eq159b}
    R\left(\mathscr{C} \right) = \frac{\log_2 M}{n} = \frac{1}{n} \log_2 \left( \frac{n!}{\prod_{c=1}^C \left( \left( n\frac{ p_c}{L_c} \right)! \right)^{L_c}} \right).
\end{IEEEeqnarray}
This completes the proof.
\end{proof}
The information transmission rate $R\left(\mathscr{C} \right)$ of a code $\mathscr{C}$ is dictated by the number of channel input symbols $L$ in~\eqref{EqLSum} and the probability vector $\boldsymbol{p}$ in~\eqref{Eqp} that defines the empirical frequency with which each symbol appears in $\mathscr{C}$. The following lemma provides the value of $\boldsymbol{p}$ that achieves the upper bound on $R\left(\mathscr{C} \right)$ for a given set of channel input symbols.
\begin{lemma} \label{LemmaRateMax}
Consider an $(n,M,\mathcal{X},P,\epsilon,B,\delta)$-code $\mathscr{C}'$ of the form in~\eqref{EqnmCodeCircle} from the family ${\sf C} \left(C,\boldsymbol{A},\boldsymbol{L},\boldsymbol{\alpha},\boldsymbol{p},\boldsymbol{r} \right)$ in~\eqref{EqFamily} with $\boldsymbol{p} = \left( p_1,p_2, \ldots, p_C \right)^{\sf{T}}$ in~\eqref{Eqp} such that, for all $c \inCountK{C}$,
\begin{equation} \label{EqpcRmax}
    p_c = \frac{L_c}{L},
\end{equation}
where, $L_c$ and $L$ are as defined in~\eqref{EqLayerCircle} and~\eqref{EqLSum} respectively.
Then, given any other $(n,M,\mathcal{X},P,\epsilon,B,\delta)$-code $\mathscr{C}$ in ${\sf C} \left(C,\boldsymbol{A},\boldsymbol{L},\boldsymbol{\alpha},\boldsymbol{p},\boldsymbol{r} \right)$ that is identical to $\mathscr{C}'$ except for the probability distribution $\boldsymbol{p}$, it holds that,
\begin{equation}
    R\left(\mathscr{C}' \right) \geq R\left(\mathscr{C} \right),
\end{equation}
where, $R\left(\mathscr{C}' \right)$ and $R\left(\mathscr{C} \right)$ are the information transmission rates for $\mathscr{C}'$ and $\mathscr{C}$, respectively.
\end{lemma}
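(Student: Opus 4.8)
The plan is to reduce the statement to an elementary convexity inequality for the factorial. Lemma~\ref{LemmaAchievableR} expresses the information transmission rate of a (constant composition) code in the family ${\sf C}(C,\boldsymbol{A},\boldsymbol{L},\boldsymbol{\alpha},\boldsymbol{p},\boldsymbol{r})$ in terms of its layer-frequency vector alone, namely $R = \tfrac{1}{n}\log_2\!\big(n!\,/\,\prod_{c=1}^{C}((np_c/L_c)!)^{L_c}\big)$ as in~\eqref{Eq174b}; applying it to both $\mathscr{C}'$ and any competitor $\mathscr{C}$ (which differ only in $\boldsymbol{p}$), and using that $n$ and $n!$ are fixed and $\log_2$ is strictly increasing, proving $R(\mathscr{C}')\geq R(\mathscr{C})$ is equivalent to showing that the choice~\eqref{EqpcRmax} \emph{minimises} the denominator $D(\boldsymbol{p})\triangleq\prod_{c=1}^{C}((np_c/L_c)!)^{L_c}$, i.e.\ minimises $\log D(\boldsymbol{p})=\sum_{c=1}^{C}L_c\log\!\big((np_c/L_c)!\big)$, over all $\boldsymbol{p}$ with $\sum_{c=1}^C p_c=1$ and $np_c/L_c\in\{0,1,2,\dots\}$. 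First I would reparametrise by the per-symbol multiplicities $k_c\triangleq np_c/L_c$, so that the constraint becomes $\sum_{c=1}^{C}L_c k_c=n$ and, using~\eqref{EqLSum}, the candidate~\eqref{EqpcRmax} corresponds to $k_c=n/L$ for all $c$.

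The core of the argument is the strict log-convexity of the Gamma function: the map $x\mapsto\log\Gamma(x+1)$ is strictly convex on $[0,\infty)$ (its second derivative is the trigamma function, which is strictly positive) and it coincides with $k\mapsto\log(k!)$ at nonnegative integers. Applying Jensen's inequality with the weights $L_c/L$, which are positive and sum to $1$ by~\eqref{EqLSum}, gives $\sum_{c=1}^{C}\tfrac{L_c}{L}\log\Gamma(k_c+1)\geq\log\Gamma\!\big(\sum_{c=1}^{C}\tfrac{L_c}{L}k_c+1\big)=\log\Gamma(n/L+1)$, that is $\log D(\boldsymbol{p})\geq L\log\!\big((n/L)!\big)$, with equality if and only if all the $k_c$ are equal, hence all equal to $n/L$, i.e.\ $p_c=L_c/L$. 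Thus $D$ attains its minimum exactly at the code $\mathscr{C}'$ described by~\eqref{EqpcRmax}, so $R(\mathscr{C}')=\tfrac{1}{n}\log_2\!\big(n!/((n/L)!)^{L}\big)\geq R(\mathscr{C})$ for every $\mathscr{C}$ in the family that is identical to $\mathscr{C}'$ except for the probability distribution $\boldsymbol{p}$, which is the claim.

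The step I expect to require the most care is not the convexity argument — which holds verbatim over the reals and hence on the integer feasible set — but the feasibility of~\eqref{EqpcRmax} itself: $p_c=L_c/L$ yields integer multiplicities $k_c=n/L$ only when $L$ divides $n$. When $L\nmid n$ the same convexity computation shows the true minimiser of $D$ is the integer vector whose coordinates $k_c$ are as nearly equal as possible (all within one of $n/L$), and $p_c=L_c/L$ is its natural continuous relaxation; I would either add the hypothesis $L\mid n$ to the statement or phrase the conclusion in terms of this balanced integer vector. As a cross-check, this recovers the familiar fact that, among constant composition codes over a fixed alphabet, the uniform type maximises the number of codewords, consistent with the equality condition noted after~\eqref{EqRbound}.
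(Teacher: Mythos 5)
Your convexity computation itself is sound, but the reduction on which it rests does not match the scope of the lemma as the paper proves it. You invoke~\eqref{Eq174b} (Lemma~\ref{LemmaAchievableR}), which gives the rate of a \emph{constant composition} code, and you then compare only balanced-versus-unbalanced constant composition codes. Lemma~\ref{LemmaRateMax}, however, is stated for codes in the family ${\sf C}\left(C,\boldsymbol{A},\boldsymbol{L},\boldsymbol{\alpha},\boldsymbol{p},\boldsymbol{r}\right)$ without a constant composition restriction: $\boldsymbol{p}$ only fixes the average type $P_{\mathscr{C}}$ in~\eqref{eq:p_bar}, not the type of each codeword. The paper's own proof exploits exactly this: for $p_c = L_c/L$ it takes $\mathscr{C}'$ to be the code containing all $M = L^n$ sequences (whose \emph{average} type is uniform, $P_{\mathscr{C}'}(x_c^{(\ell)})=1/L$, even though individual codewords are not balanced), so $R(\mathscr{C}') = \log_2 L$, and then bounds any competitor by the trivial inequality $R(\mathscr{C}) \leq \log_2 L$ from~\eqref{EqR} and Lemma~\ref{lemma:R_upperbound}. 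Under that reading your candidate $\mathscr{C}'$, the balanced constant composition code with rate $\frac{1}{n}\log_2\left(n!/\left((n/L)!\right)^{L}\right) < \log_2 L$, would \emph{not} dominate non-constant-composition competitors: a code consisting of nearly all $L^n$ sequences, with a slightly non-uniform average type, has far more codewords than the single balanced type class. So minimizing $\prod_{c}\left((np_c/L_c)!\right)^{L_c}$ proves a different, narrower claim than the lemma, and this is a genuine gap relative to the paper's statement.

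Within the constant composition subclass (and with $L \mid n$, as you yourself flag), your argument is correct and in fact sharper than anything the paper writes down: strict log-convexity of $x \mapsto \log\Gamma(x+1)$ plus Jensen with weights $L_c/L$ shows the balanced multiplicities uniquely maximize the exact multinomial rate in~\eqref{Eq174b}, consistent with the equality condition noted after~\eqref{EqRbound}. The divisibility caveat you raise is real, but it is an artifact of restricting to the multinomial formula; the paper's two-line proof never uses that formula and so avoids both the integrality issue and the constant composition restriction. To repair your proof so that it yields the lemma as stated, you would need either to add the constant composition hypothesis (and $L \mid n$) to the statement, or to replace the comparison code by the full code with $M = L^n$ and argue as the paper does.
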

\begin{proof}
The proof is given in Appendix~\ref{AppD}.
\end{proof}

\subsection{Energy Transmission Rate and EOP} \label{SecAchievableBounds2}
The following lemma provides the achievable bound on the EOP $\delta$ in~\eqref{eq:delta} for codes from the family ${\sf C} \left(C,\boldsymbol{A},\boldsymbol{L},\boldsymbol{\alpha},\boldsymbol{p},\boldsymbol{r} \right)$ in~\eqref{EqFamily}.
\begin{lemma} \label{LemmaAchievableEOP}
For an $(n,M,\mathcal{X},P,\epsilon,B,\delta)$-code $\mathscr{C}$ of the form in~\eqref{EqnmCodeCircle} from the family ${\sf C} \left(C,\boldsymbol{A},\boldsymbol{L},\boldsymbol{\alpha},\boldsymbol{p},\boldsymbol{r} \right)$ in~\eqref{EqFamily}, the following holds:
\begin{IEEEeqnarray}{rCl} \label{Eq213}
 \delta \geq  \frac{1}{M} \sum_{i=1}^M \mathds{1}_{\left\lbrace \left( k_1 \sum_{c=1}^C \sum_{\ell=1}^{L_c} n P_{\boldsymbol{u}(i)} \left( x_c^{(\ell)} \right) A_c^2 + k_2 \sum_{c=1}^C \sum_{\ell=1}^{L_c} n P_{\boldsymbol{u}(i)} \left( x_c^{(\ell)} \right) A_c^4 \right) < B \right\rbrace},
 \end{IEEEeqnarray}
where, $k_1$ and $k_2$ are positive real constants defined in~\eqref{Eq27c} and $P_{\boldsymbol{u}(i)}$ is the type in~\eqref{eq:u_measure}.
\end{lemma}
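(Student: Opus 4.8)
The plan is to specialize the general EOP inequality to codes drawn from the family ${\sf C}\left(C,\boldsymbol{A},\boldsymbol{L},\boldsymbol{\alpha},\boldsymbol{p},\boldsymbol{r}\right)$ in~\eqref{EqFamily}, whose channel input symbols all lie on $C$ concentric circles of radii $A_1, A_2, \ldots, A_C$. First I would invoke Lemma~\ref{LemmaB}, which together with~\eqref{Eq25} and~\eqref{eq:delta} already gives, for any $(n,M,\mathcal{X},P,\epsilon,B,\delta)$-code,
\begin{IEEEeqnarray}{rCl}
\delta \;\geq\; \frac{1}{M}\sum_{i=1}^M \mathds{1}_{\lbrace e_i < B\rbrace},
\end{IEEEeqnarray}
where, for all $i \inCountK{M}$, the real $e_i$ is the expected harvested energy from codeword $\boldsymbol{u}(i)$ in~\eqref{Eqei}--\eqref{Eq27c}. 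So the only substantive task is to rewrite $e_i$ in terms of the layer parameters $A_c$, $L_c$ and the type $P_{\boldsymbol{u}(i)}$.

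The key step is the observation that the energy model~\eqref{Eqei}--\eqref{Eq27c} depends on a codeword only through the \emph{magnitudes} of its symbols. For a code in this family, \eqref{EqConstellationCircle} and~\eqref{EqLayerCircle} give $\lvert x_c^{(\ell)}\rvert = A_c$ for every symbol $x_c^{(\ell)}$ of layer $c$, hence $\lvert x_c^{(\ell)}\rvert^2 = A_c^2$ and $\lvert x_c^{(\ell)}\rvert^4 = A_c^4$. Substituting these into~\eqref{Eq27c} and splitting the sum over $\mathcal{X}$ into the outer sum over layers $c \inCountK{C}$ and the inner sum over the $L_c$ symbols of each layer yields
\begin{IEEEeqnarray}{rCl}
e_i \;=\; k_1\sum_{c=1}^C\sum_{\ell=1}^{L_c} n\, P_{\boldsymbol{u}(i)}\left(x_c^{(\ell)}\right) A_c^2 \;+\; k_2\sum_{c=1}^C\sum_{\ell=1}^{L_c} n\, P_{\boldsymbol{u}(i)}\left(x_c^{(\ell)}\right) A_c^4 .
\end{IEEEeqnarray}

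Finally I would plug this expression for $e_i$ into the indicator inside the EOP bound above, which reproduces exactly~\eqref{Eq213}. There is no genuine obstacle here; the proof is essentially a substitution, and the only points that require care are bookkeeping ones: keeping the argument codeword-by-codeword (in terms of $P_{\boldsymbol{u}(i)}$ rather than of the average type $P_{\mathscr{C}}$), so that no constant-composition hypothesis is needed; and being explicit that the displayed sum is in fact the exact EOP $\theta(\mathscr{C},B)$ of~\eqref{Eq25}, the inequality being inherited solely from the defining constraint $\theta(\mathscr{C},B) \leq \delta$ in~\eqref{eq:delta}. As a consistency check one could add the remark that, for a constant composition code, $P_{\boldsymbol{u}(i)}(x_c^{(\ell)}) = p_c/L_c$ by~\eqref{Eq100}, so the inner sums collapse and the indicator reduces to $\mathds{1}_{\lbrace e_{\mathscr{C}} < B\rbrace}$ with $e_{\mathscr{C}}$ in~\eqref{EqeiHomogeneousb}, recovering Theorem~\ref{LemmaBhomogeneous}.
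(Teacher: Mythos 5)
Your proposal is correct and follows essentially the same route as the paper: the paper likewise writes the exact EOP $\theta(\mathscr{C},B)$ from~\eqref{Eq25} and~\eqref{Eq27c} codeword-by-codeword in terms of $P_{\boldsymbol{u}(i)}$, substitutes $\lvert x_c^{(\ell)}\rvert = A_c$ from~\eqref{EqLayerCircle}, and obtains~\eqref{Eq213} from the constraint $\theta(\mathscr{C},B)\leq\delta$ in~\eqref{eq:delta}. Your explicit invocation of Lemma~\ref{LemmaB} and the constant-composition consistency check are harmless additions that do not change the argument.
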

\begin{proof}
From~\eqref{EqLayerCircle}, for all $c \inCountK{C}$ and all $\ell \inCountK{L_c}$, the symbols $x_c^{\left( \ell \right)} \in \mathcal{U}(A_c,L_c,\alpha_c)$ in~\eqref{EqLayerCircle} are given by
\begin{IEEEeqnarray}{rCl} 
x_c^{\left( \ell \right)} = A_c \exp\left(\mathrm{i} \left(\frac{2\pi}{L_c} \ell+\alpha_c\right)\right).
\end{IEEEeqnarray}
This implies that
\begin{IEEEeqnarray}{rCl} \label{Eq218}
\left| x_c^{\left( \ell \right)} \right| = A_c.
\end{IEEEeqnarray}
From~\eqref{Eq25} and~\eqref{Eq27c}, the EOP for the $(n,M,\mathcal{X},P,\epsilon,B,\delta)$-code $\mathscr{C}$ is given by:
\begin{IEEEeqnarray}{rCl} 
 \theta(\mathscr{C},B) & = & \frac{1}{M} \sum_{i=1}^M \mathds{1}_{\left\lbrace \left( k_1 \sum_{c=1}^C \sum_{\ell=1}^{L_c} n P_{\boldsymbol{u}(i)} \left( x_c^{(\ell)} \right) \left| x_c^{(\ell)} \right|^2 + k_2 \sum_{c=1}^C \sum_{\ell=1}^{L_c} n P_{\boldsymbol{u}(i)} \left( x_c^{(\ell)} \right) \left| x_c^{(\ell)} \right|^4 \right) < B \right\rbrace} \\
& = & \frac{1}{M} \sum_{i=1}^M \mathds{1}_{\left\lbrace \left( k_1 \sum_{c=1}^C \sum_{\ell=1}^{L_c} n P_{\boldsymbol{u}(i)} \left( x_c^{(\ell)} \right) A_c^2 + k_2 \sum_{c=1}^C \sum_{\ell=1}^{L_c} n P_{\boldsymbol{u}(i)} \left( x_c^{(\ell)} \right) A_c^4 \right) < B \right\rbrace}. \label{Eq209}
 \end{IEEEeqnarray}
From~\eqref{eq:delta} and~\eqref{Eq209}, the code $\mathscr{C}$ is an $\left( n,M,\mathcal{X},P,\epsilon,B,\delta \right)$-code if the following holds:
\begin{IEEEeqnarray}{rCl} 
  \frac{1}{M} \sum_{i=1}^M \mathds{1}_{\left\lbrace \left( k_1 \sum_{c=1}^C \sum_{\ell=1}^{L_c} n P_{\boldsymbol{u}(i)} \left( x_c^{(\ell)} \right) A_c^2 + k_2 \sum_{c=1}^C \sum_{\ell=1}^{L_c} n P_{\boldsymbol{u}(i)} \left( x_c^{(\ell)} \right) A_c^4 \right) < B \right\rbrace} \leq \delta.
\end{IEEEeqnarray}
This completes the proof.
\end{proof}
The achievable bound on the EOP $\delta$ in~\eqref{eq:delta} for a constant composition code $\mathscr{C}$ from the family ${\sf C} \left(C,\boldsymbol{A},\boldsymbol{L},\boldsymbol{\alpha},\boldsymbol{p},\boldsymbol{r} \right)$ in~\eqref{EqFamily} is given by the following lemma:
\begin{lemma} \label{LemmaAmplitudeC}
For a constant composition $(n,M,\mathcal{X},P,\epsilon,B,\delta)$-code $\mathscr{C}$ of the form in~\eqref{EqnmCodeCircle} from the family ${\sf C} \left(C,\boldsymbol{A},\boldsymbol{L},\boldsymbol{\alpha},\boldsymbol{p},\boldsymbol{r} \right)$ in~\eqref{EqFamily}, for all $c \inCountK{C}$, the parameters $p_c$ in~\eqref{Eqpc} satisfy the following:
\begin{IEEEeqnarray}{rCl}
 \delta = \mathds{1}_{\left\lbrace \left( k_1 \sum_{c=1}^C n p_c A_c^2 + k_2 \sum_{c=1}^C n p_c A_c^4 \right) < B \right\rbrace},
 \end{IEEEeqnarray}
where, $k_1$ and $k_2$ are positive real constants defined in~\eqref{Eq27c}.
\end{lemma}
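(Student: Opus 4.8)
The plan is to specialize Lemma~\ref{LemmaAchievableEOP} to the constant composition case, in exactly the way Theorem~\ref{LemmaBhomogeneous} specializes Lemma~\ref{LemmaB}. The only structural input needed is that, for a constant composition code drawn from the family ${\sf C}\left(C,\boldsymbol{A},\boldsymbol{L},\boldsymbol{\alpha},\boldsymbol{p},\boldsymbol{r}\right)$, every symbol of layer $c$ is used with the same empirical frequency in \emph{every} codeword: by Definition~\ref{DefHC} together with~\eqref{EqTypeCircle}, this is precisely the identity in~\eqref{Eq100}, namely $P_{\boldsymbol{u}(i)}(x_c^{(\ell)}) = P_{\mathscr{C}}(x_c^{(\ell)}) = \frac{p_c}{L_c}$ for all $i \inCountK{M}$, all $c \inCountK{C}$, and all $\ell \inCountK{L_c}$. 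This fact has already been recorded just before Corollary~\ref{CorRadiusHomogeneous}, so I would simply cite it.

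First I would substitute this identity into the energy expression inside the indicator in~\eqref{Eq213}. For each layer $c$ one gets $\sum_{\ell=1}^{L_c} n P_{\boldsymbol{u}(i)}(x_c^{(\ell)}) A_c^2 = \sum_{\ell=1}^{L_c} n \frac{p_c}{L_c} A_c^2 = n p_c A_c^2$, and likewise $\sum_{\ell=1}^{L_c} n P_{\boldsymbol{u}(i)}(x_c^{(\ell)}) A_c^4 = n p_c A_c^4$. Hence the energy harvested when transmitting message $i$ equals $e_{\mathscr{C}} \triangleq k_1 \sum_{c=1}^C n p_c A_c^2 + k_2 \sum_{c=1}^C n p_c A_c^4$, which is independent of $i$; this is just the specialization of~\eqref{EqeiHomogeneous} to codes of the form~\eqref{EqnmCodeCircle}. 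Plugging this back into~\eqref{Eq213}, the indicator $\mathds{1}_{\{e_i < B\}}$ no longer depends on $i$, so the average $\frac{1}{M}\sum_{i=1}^M \mathds{1}_{\{e_{\mathscr{C}} < B\}}$ collapses to $\mathds{1}_{\{e_{\mathscr{C}} < B\}}$, giving $\theta(\mathscr{C},B) = \mathds{1}_{\{e_{\mathscr{C}} < B\}}$. Invoking~\eqref{eq:delta} as in the proof of Theorem~\ref{LemmaBhomogeneous} then yields $\delta = \mathds{1}_{\left\lbrace\left(k_1 \sum_{c=1}^C n p_c A_c^2 + k_2 \sum_{c=1}^C n p_c A_c^4\right) < B\right\rbrace}$, which is the claim.

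There is essentially no obstacle here: the substance of the lemma is that the per-codeword spread of harvested energy vanishes under the constant composition constraint, so the EOP becomes a $\{0,1\}$-valued quantity determined solely by whether the common harvested energy $e_{\mathscr{C}}$ meets the demand $B$. The only point that deserves an explicit line is the verification that the constant composition property is consistent with the within-layer symmetry of the construction, i.e., that~\eqref{Eq100} indeed holds for codes of the form~\eqref{EqnmCodeCircle}; since that identity is established earlier in the section, the proof reduces to the short computation described above.
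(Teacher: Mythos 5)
Your proposal is correct and follows essentially the same route as the paper's proof: substitute the constant composition identity $P_{\boldsymbol{u}(i)}(x_c^{(\ell)})=P_{\mathscr{C}}(x_c^{(\ell)})=\frac{p_c}{L_c}$ together with $\left|x_c^{(\ell)}\right|=A_c$ into the EOP expression, observe the harvested energy is the same $e_{\mathscr{C}}$ for every codeword so the average over $i$ collapses to a single indicator, and then invoke~\eqref{eq:delta} exactly as in Theorem~\ref{LemmaBhomogeneous}. Framing the computation as a specialization of Lemma~\ref{LemmaAchievableEOP} via~\eqref{Eq100} rather than starting from~\eqref{Eq25} and~\eqref{EqeiHomogeneous} is only a cosmetic difference.
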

\begin{proof}
From~\eqref{Eq25} and~\eqref{EqeiHomogeneous}, the EOP for the constant composition code $\mathscr{C}$ is given by:
\begin{IEEEeqnarray}{rCl} 
 \theta(\mathscr{C},B) & = & \frac{1}{M} \sum_{i=1}^M \mathds{1}_{\left\lbrace \left( k_1 \sum_{c=1}^C \sum_{\ell=1}^{L_c} n P_{\mathscr{C}} \left( x_c^{(\ell)} \right) \left| x_c^{(\ell)} \right|^2 + k_2 \sum_{c=1}^C \sum_{\ell=1}^{L_c} n P_{\mathscr{C}} \left( x_c^{(\ell)} \right) \left| x_c^{(\ell)} \right|^4 \right) < B \right\rbrace} \\
& = & \mathds{1}_{\left\lbrace \left( k_1 \sum_{c=1}^C \sum_{\ell=1}^{L_c} n P_{\mathscr{C}} \left( x_c^{(\ell)} \right) A_c^2 + k_2 \sum_{c=1}^C \sum_{\ell=1}^{L_c} n P_{\mathscr{C}} \left( x_c^{(\ell)} \right) A_c^4 \right) < B \right\rbrace} \label{Eq191}\\
& = & \mathds{1}_{\left\lbrace \left( k_1 \sum_{c=1}^C \sum_{\ell=1}^{L_c} n \frac{p_c}{L_c} A_c^2 + k_2 \sum_{c=1}^C \sum_{\ell=1}^{L_c} n \frac{p_c}{L_c} A_c^4 \right) < B \right\rbrace} \label{Eq192} \\
& = & \mathds{1}_{\left\lbrace \left( k_1 \sum_{c=1}^C n p_c A_c^2 + k_2 \sum_{c=1}^C n p_c A_c^4 \right) < B \right\rbrace}. \label{Eq266}
 \end{IEEEeqnarray}
 The equality in~\eqref{Eq191} follows from~\eqref{Eq218} and~\eqref{Eq192} follows from~\eqref{EqTypeCircle}.
From~\eqref{eq:delta} and~\eqref{Eq266}, it follows that the code $\mathscr{C}$ is an $\left( n,M,\mathcal{X},P,\epsilon,B,\delta \right)$-code if the following holds:
\begin{IEEEeqnarray}{rCl} 
 \delta = \mathds{1}_{\left\lbrace \left( k_1 \sum_{c=1}^C n p_c A_c^2 + k_2 \sum_{c=1}^C n p_c A_c^4 \right) < B \right\rbrace}.
\end{IEEEeqnarray}
This completes the proof.
\end{proof}
The following lemma provides an upper bound on the energy transmission rate $B$ for $(n,M,\mathcal{X},P,\epsilon,B,\delta)$-codes of the form in~\eqref{EqnmCodeCircle} from the family ${\sf C} \left(C,\boldsymbol{A},\boldsymbol{L},\boldsymbol{\alpha},\boldsymbol{p},\boldsymbol{r} \right)$.
\begin{lemma} \label{LemmaBAchievable}
For an $(n,M,\mathcal{X},P,\epsilon,B,\delta)$-code $\mathscr{C}$ of the form in~\eqref{EqnmCodeCircle} from the family ${\sf C} \left(C,\boldsymbol{A},\boldsymbol{L},\boldsymbol{\alpha},\boldsymbol{p},\boldsymbol{r} \right)$ in~\eqref{EqFamily}, the energy transmission rate $B$ satisfies the following:
\begin{IEEEeqnarray}{rCl} \label{EqBAchievable}
B \leq \bar{e}_j, \quad  \text{if } \delta \leq \frac{\sum_{k=1}^{j} y_k}{M}, \ j \in \left\lbrace 1,2,3, \ldots, M' \right\rbrace
\end{IEEEeqnarray}
where, the positive integer $M'$ is in~\eqref{EqUniqueLevels}; and for all $j \inCountK{M'}$ $\bar{e}_j$ is in~\eqref{EqUniqueLevels} and $y_j$ is in~\eqref{EqYindicator}.
\end{lemma}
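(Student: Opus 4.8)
The plan is to run exactly the same argument as in the proof of the impossibility bound of Lemma~\ref{LemmaBnew}, with Lemma~\ref{LemmaAchievableEOP} taking the role there played by Lemma~\ref{LemmaB}. First I would observe that for any code $\mathscr{C}$ in the family ${\sf C}\left(C,\boldsymbol{A},\boldsymbol{L},\boldsymbol{\alpha},\boldsymbol{p},\boldsymbol{r}\right)$ in~\eqref{EqFamily}, every symbol in layer $c$ has magnitude $A_c$ by~\eqref{EqLayerCircle}, so the energy $e_i$ harvested upon transmitting the codeword $\boldsymbol{u}(i)$, given in~\eqref{Eqei}, coincides with the bracketed quantity inside the indicator in~\eqref{Eq213}. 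Combining Lemma~\ref{LemmaAchievableEOP} with the requirement~\eqref{eq:delta} that $\theta(\mathscr{C},B)\leq\delta$ then yields
\begin{IEEEeqnarray}{rCl}
\frac{1}{M}\sum_{i=1}^M \mathds{1}_{\lbrace e_i < B \rbrace} \leq \delta,
\end{IEEEeqnarray}
equivalently $\sum_{i=1}^M \mathds{1}_{\lbrace e_i < B \rbrace} \leq \lfloor M\delta \rfloor$; that is, at most $\lfloor M\delta \rfloor$ codewords may carry energy strictly below the target rate $B$.

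Next I would bring in the structure of the energy levels from~\eqref{EqUniqueLevels} and~\eqref{EqYindicator}: the distinct levels $\bar{e}_1 < \bar{e}_2 < \cdots < \bar{e}_{M'}$ are carried by $y_1, y_2, \ldots, y_{M'}$ codewords respectively, with $\sum_{j=1}^{M'} y_j = M$. As in~\eqref{Eq47b}, the EOP of such a code can take values only in $\lbrace 0, \tfrac{y_1}{M}, \tfrac{y_1+y_2}{M}, \ldots, \tfrac{\sum_{j=1}^{M'-1}y_j}{M}, 1 \rbrace$. Fixing $j \in \lbrace 1,2,\ldots,M' \rbrace$ and assuming $\delta \leq \tfrac{\sum_{k=1}^{j} y_k}{M}$, the displayed inequality forces $\sum_{i=1}^M \mathds{1}_{\lbrace e_i < B \rbrace} \leq \sum_{k=1}^{j} y_k$. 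Since the codewords whose harvested energy does not exceed $\bar{e}_j$ number exactly $\sum_{k=1}^{j} y_k$, and all remaining codewords carry energy at least $\bar{e}_{j+1} > \bar{e}_j$ (when $j<M'$), the requirement that no more than $\sum_{k=1}^{j} y_k$ codewords lie below $B$ can hold only if $B \leq \bar{e}_j$. Running this over $j=1,2,\ldots,M'$ gives the staircase bound~\eqref{EqBAchievable}, which completes the proof.

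I do not anticipate a substantial obstacle: the statement is essentially the achievability counterpart of Lemma~\ref{LemmaBnew}, and the argument is bookkeeping on the discrete set of attainable EOP values. The only points requiring a little care are (i) the identification of the event $\lbrace e_i < B \rbrace$ in Lemma~\ref{LemmaAchievableEOP} with the same event in the general EOP expression~\eqref{Eq25}, which is immediate once $|x_c^{(\ell)}| = A_c$ is noted; and (ii) matching the convention that $\theta(\mathscr{C},B)$ is defined with the strict inequality $E<B$ in~\eqref{eq:theta_def}, so that the endpoints $B=\bar{e}_j$ remain feasible. The constant composition specialization of this bound (paralleling Lemma~\ref{LemmaAmplitudeC}) is recovered by taking $M'=1$ and $y_1=M$, so that $\bar{e}_1 = e_\mathscr{C}$.
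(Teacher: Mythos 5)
Your proposal is correct and matches the paper's proof, which simply invokes the argument of Lemma~\ref{LemmaBnew} after noting that, for codes in the family ${\sf C}\left(C,\boldsymbol{A},\boldsymbol{L},\boldsymbol{\alpha},\boldsymbol{p},\boldsymbol{r}\right)$, the per-codeword energy $e_i$ in~\eqref{Eqei} takes the layered form with $\left|x_c^{(\ell)}\right| = A_c$. You spell out the staircase bookkeeping in more detail than the paper does, but the route is the same.
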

\begin{proof}
The proof follows on the same lines as that for Lemma~\ref{LemmaBnew} where, for all $i \inCountK{M}$, the energy $e_i$ in~\eqref{Eqei} is given by
\begin{IEEEeqnarray}{rCl}
e_i = \sum_{c=1}^C \sum_{\ell=1}^{L_c} n P_{\boldsymbol{u}(i)} \left( x_c^{(\ell)} \right) A_c^2 + k_2 \sum_{c=1}^C \sum_{\ell=1}^{L_c} n P_{\boldsymbol{u}(i)} \left( x_c^{(\ell)} \right) A_c^4.
\end{IEEEeqnarray}
\end{proof}
The following lemma provides the value of $\boldsymbol{p}$ that achieves the lower bound on the EOP $\theta\left(\mathscr{C}, B \right)$ for code $\mathscr{C}$ from the family ${\sf C} \left(C,\boldsymbol{A},\boldsymbol{L},\boldsymbol{\alpha},\boldsymbol{p},\boldsymbol{r} \right)$ in~\eqref{EqFamily}.
\begin{lemma} \label{LemmaMaxEnergy}
Consider an $(n,M,\mathcal{X},P,\epsilon,B,\delta)$-code $\mathscr{C}$ of the form in~\eqref{EqnmCodeCircle} from the family ${\sf C} \left(C,\boldsymbol{A},\boldsymbol{L},\boldsymbol{\alpha},\boldsymbol{p},\boldsymbol{r} \right)$ in~\eqref{EqFamily} with $\boldsymbol{p} = \left( p_1,p_2, \ldots, p_C \right)^{\sf{T}}$ in~\eqref{Eqp} such that, 
\begin{equation} \label{EqpcMaxEnergy}
  p_c =
  \begin{cases}
    1 & \text{if } c=1,
    \\
    0 & \text{otherwise }.
  \end{cases}
\end{equation}
Then, given any other $(n,M,\mathcal{X},P,\epsilon,B,\delta)$-code $\mathscr{C}'$ in the family ${\sf C} \left(C,\boldsymbol{A},\boldsymbol{L},\boldsymbol{\alpha},\boldsymbol{p},\boldsymbol{r} \right)$ that is identical to $\mathscr{C}$ except for the probability distribution $\boldsymbol{p}$, it holds that
\begin{equation}
    \theta\left(\mathscr{C}', B \right) \geq \theta\left(\mathscr{C}, B \right),
\end{equation}
where $\theta$ is the average EOP in~\eqref{eq:theta_def}.
\end{lemma}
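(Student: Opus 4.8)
The plan is to recognize the code $\mathscr{C}$ defined by~\eqref{EqpcMaxEnergy} as the one whose every codeword harvests the \emph{maximum} energy attainable by any codeword in the family ${\sf C}\left(C,\boldsymbol{A},\boldsymbol{L},\boldsymbol{\alpha},\boldsymbol{p},\boldsymbol{r}\right)$, and then to argue that a code all of whose codewords sit at the largest possible energy level can only have a smaller EOP than any competitor drawn from the same family. Concretely, I would start from the expression for the EOP obtained in the proof of Lemma~\ref{LemmaAchievableEOP} (equivalently~\eqref{Eq25} combined with~\eqref{Eq27c}), namely $\theta(\mathscr{C},B)=\frac1M\sum_{i=1}^M\mathds{1}_{\{e_i<B\}}$ with $e_i=n\sum_{c=1}^C\sum_{\ell=1}^{L_c}P_{\boldsymbol{u}(i)}(x_c^{(\ell)})\left(k_1A_c^2+k_2A_c^4\right)$. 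Since $P_{\boldsymbol{u}(i)}$ is a pmf on $\mathcal{X}$, the coefficients $\sum_{c}\sum_{\ell}P_{\boldsymbol{u}(i)}(x_c^{(\ell)})$ sum to one, so $e_i/n$ is a convex combination of the per-layer energies $k_1A_c^2+k_2A_c^4$; as $k_1,k_2>0$ and $A_1>A_2>\cdots>A_C$ by~\eqref{EqAmplitudesOrder}, this combination is maximized by putting all the mass on layer~$1$, giving $e_i\le e^\star$ for every codeword of every code in the family, where $e^\star\triangleq n\left(k_1A_1^2+k_2A_1^4\right)$, with equality precisely when the codeword uses only layer-$1$ symbols.

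Next I would show that the choice in~\eqref{EqpcMaxEnergy} forces exactly this. By~\eqref{Eq115i}, $p_1=\frac1M\sum_{i=1}^M\left(\sum_{\ell=1}^{L_1}P_{\boldsymbol{u}(i)}(x_1^{(\ell)})\right)$, and each inner sum lies in $[0,1]$; hence $p_1=1$ can hold only if $\sum_{\ell=1}^{L_1}P_{\boldsymbol{u}(i)}(x_1^{(\ell)})=1$ for every $i\inCountK{M}$, i.e.\ every codeword of $\mathscr{C}$ is made up entirely of symbols from layer~$1$. Consequently $e_i=e^\star$ for all $i\inCountK{M}$, so $\theta(\mathscr{C},B)=\mathds{1}_{\{e^\star<B\}}$.

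Finally I would split on $B$. If $B\le e^\star$, then $\theta(\mathscr{C},B)=0\le\theta(\mathscr{C}',B)$ for any $\mathscr{C}'$ trivially. If $B>e^\star$, then by the bound $e_i'\le e^\star<B$ established in the first paragraph, \emph{every} codeword of \emph{every} code $\mathscr{C}'$ in the family is in outage, so $\theta(\mathscr{C}',B)=1=\theta(\mathscr{C},B)$. In both cases $\theta(\mathscr{C}',B)\ge\theta(\mathscr{C},B)$, which is the assertion. The only delicate point — and the one I would state explicitly — is the passage in the second paragraph from the \emph{averaged} constraint $p_1=1$ to the \emph{per-codeword} conclusion that each codeword uses layer~$1$ only; this is just the elementary fact that a mean of $[0,1]$-valued numbers equals $1$ only when every number equals $1$. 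Everything else follows from monotonicity of $c\mapsto k_1A_c^2+k_2A_c^4$ in $A_c$ and monotonicity of the outage indicator $\mathds{1}_{\{e_i<B\}}$ in $e_i$.
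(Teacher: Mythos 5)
Your proposal is correct and follows essentially the same route as the paper's proof: you show that $p_1=1$ forces every codeword of $\mathscr{C}$ to use only layer-$1$ symbols (the same averaging argument the paper applies via $p_c=0$ for $c\geq 2$), so all its codewords carry the maximal energy $n\left(k_1A_1^2+k_2A_1^4\right)$, while the amplitude ordering~\eqref{EqAmplitudesOrder} caps every codeword energy of any competitor at this value, and the monotonicity of the outage indicator then gives $\theta(\mathscr{C}',B)\geq\theta(\mathscr{C},B)$. Your final case split on $B$ is only a cosmetic repackaging of the paper's termwise indicator inequality, so no substantive difference remains.
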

\begin{proof}
The proof is given in Appendix~\ref{AppE}.
\end{proof}
\begin{remark}
Lemmas~\ref{LemmaRateMax} and~\ref{LemmaMaxEnergy} provide the optimal choice of the type $\boldsymbol{p}$ for the information transmission rate $R$ and the energy transmission rate $B$, respectively.
It should be noted that these choices are optimal only for these specific quantities when no other parameters of the code are taken into consideration.
We observe that, the choice of $\boldsymbol{p}$ provided by the two lemmas is very different. 
Lemma~\ref{LemmaRateMax} says that, for better information transmission rates, the frequency of usage of symbols within a layer $c$ should be proportional to the number of symbols in that layer $L_c$.
This translates into using all symbols within the set of channel input symbols $\mathcal{X}$ uniformly within the code.
Contrary to this, Lemma~\ref{LemmaMaxEnergy} says that, for better energy transmission rates, it is best to only use the symbols in the first layer of $\mathcal{X}$ which has the most energy. 
Thus, choosing $\boldsymbol{p}$ to maximize the information transmission rate will typically result in lower energy transmission rates and vice versa.
 \end{remark}
 \subsection{Miscellaneous Results} \label{SubsecMisc}
 Given the number of layers $C$ in~\eqref{EqConstellationCircle}, the vector of amplitudes $\boldsymbol{A}$ in~\eqref{EqAcVector}, the vector of the number of symbols in each layer $\boldsymbol{L}$ in~\eqref{EqLcVector}, the vector of phase shifts $\boldsymbol{\alpha}$ in~\eqref{EqAlphaVector},  the probability vector $\boldsymbol{p}$ in~\eqref{Eqp} and the radius vector $\boldsymbol{r}$ in~\eqref{EqRadiusVector}, infinitely many sets of channel input symbols $\mathcal{X}$ in~\eqref{EqConstellationCircle} can be generated by rotating $\mathcal{X}$. Rotating $\mathcal{X}$ implies changing the phase shift of all the symbols in $\mathcal{X}$ by the same angle. It is easy to see that the information transmission rate $R\left(\mathscr{C} \right)$ of the code $\mathscr{C}$ is independent of such rotations since it only depends on the number of symbols $L$ in~\eqref{EqLSum} and the probability vector $\boldsymbol{p}$. However, the impact of such rotations on the DEP $ \gamma \left(\mathscr{C} \right)$ and the EOP $\theta \left(\mathscr{C}, B \right)$ of $\mathscr{C}$ is not immediately obvious. Lemmas~\ref{LemmaAlphaImmaterial} and~\ref{LemmaAlphaImmaterialStrong} prove that the DEP and the EOP also remain unchanged for any rotation of the set $\mathcal{X}$.
\begin{lemma} \label{LemmaAlphaImmaterial} 
Consider two $(n,M,\mathcal{X},P,\epsilon,B,\delta)$-codes $\mathscr{C}$ and $\mathscr{C}'$, both of the form in~\eqref{EqnmCodeCircle} from the family ${\sf C} \left(C,\boldsymbol{A},\boldsymbol{L},\boldsymbol{\alpha},\boldsymbol{p},\boldsymbol{r} \right)$ in~\eqref{EqFamily}. For all $c \inCountK{C}$ and all $\ell \inCountK{L_c}$, denote the symbols in layer $c$ in~\eqref{EqLayerCircle} of the code $\mathscr{C}$  by $x_c^{\left( \ell \right)}$ and those of $\mathscr{C}'$ by $\hat{x}_c^{\left( \ell \right)}$. For all $c \inCountK{C}$, all $\ell \inCountK{L_c}$, and $\omega \in [0,2\pi]$, it holds that,
\begin{equation} \label{EqShiftedSymbols}
    \hat{x}_c^{\left( \ell \right)} = e^{\mathrm{i} \omega} x_c^{\left( \ell \right)}.
\end{equation}
All other parameters of the codes $\mathscr{C}$ and $\mathscr{C}'$ are identical.
Then, the average DEP $\gamma$ in~\eqref{eq:gamma} and the average EOP $\theta$ in~\eqref{eq:theta_def} of $\mathscr{C}$ and $\mathscr{C}'$ satisfy the following:
\begin{IEEEeqnarray}{rCl}
 \label{Eq123}   \gamma \left(\mathscr{C} \right) &=& \gamma \left(\mathscr{C}' \right), \mbox{and} \\
 \label{Eq124}   \theta \left(\mathscr{C}, B \right) &=& \theta \left(\mathscr{C}', B \right).
\end{IEEEeqnarray}
\end{lemma}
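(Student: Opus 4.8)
The plan is to show that a common rotation of the constellation amounts to a measure‑preserving relabeling of $\complex$ under which every quantity entering the DEP and the EOP is left invariant, so that the closed‑form expressions already derived for $\gamma$ and $\theta$ take identical values on $\mathscr{C}$ and $\mathscr{C}'$.

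First I would record the structural consequences of~\eqref{EqShiftedSymbols}. Since $\mathscr{C}$ and $\mathscr{C}'$ agree in all remaining parameters, the codeword $\boldsymbol{u}'(i)$ of $\mathscr{C}'$ is obtained from $\boldsymbol{u}(i)$ by $u'_m(i) = e^{\mathrm{i}\omega} u_m(i)$ for all $m \inCountK{n}$; hence the map $x_c^{(\ell)} \mapsto \hat{x}_c^{(\ell)}$ is a layer‑preserving bijection of $\mathcal{X}$ onto the rotated set, and by~\eqref{eq:u_measure} it preserves the types, i.e. $P_{\boldsymbol{u}'(i)}(\hat{x}_c^{(\ell)}) = P_{\boldsymbol{u}(i)}(x_c^{(\ell)})$ for all $i,c,\ell$. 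Moreover, by~\eqref{EqDecodingCircle} the decoding disk attached to $\hat{x}_c^{(\ell)}$ is $\hat{\mathcal{G}}_c^{(\ell)} = e^{\mathrm{i}\omega}\,\mathcal{G}_c^{(\ell)}$, a disk of the same radius $r_c$, so that each product decoding region $\mathcal{D}_i$ is simply rotated coordinate‑wise.

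For the DEP, I would reuse the computation in the proof of Lemma~\ref{LemmaRadius}. Applying the change of variables $y \mapsto e^{\mathrm{i}\omega} y$, which has unit Jacobian on $\complex \cong \reals^2$ and satisfies $|e^{\mathrm{i}\omega}y - e^{\mathrm{i}\omega}x| = |y-x|$, the single‑symbol correct‑decoding probability $\int_{\hat{\mathcal{G}}_c^{(\ell)}} f_{Y|X}(y|\hat{x}_c^{(\ell)})\,\mathrm{d}y$ equals $\int_{\mathcal{G}_c^{(\ell)}} f_{Y|X}(y|x_c^{(\ell)})\,\mathrm{d}y$, which by~\eqref{Eq94} is $1-\exp(-r_c^2/\sigma^2)$ and therefore independent of the phase. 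Substituting this together with the type identity above into the closed form~\eqref{Eq95} for $\gamma(\mathscr{C})$, which involves only $M$, $n$, the radii $r_c$, and the types $P_{\boldsymbol{u}(i)}$, gives $\gamma(\mathscr{C}) = \gamma(\mathscr{C}')$, i.e.~\eqref{Eq123}. For the EOP, the key observation is that the harvested energies in~\eqref{Eqei}--\eqref{Eq27c} depend on the codeword symbols only through their magnitudes, and $|e^{\mathrm{i}\omega}u_m(i)| = |u_m(i)|$ (equivalently, rotation fixes every amplitude $A_c$); hence $e_i$ is unchanged for every $i \inCountK{M}$, and by~\eqref{eq:theta_def} (or directly from~\eqref{Eq213}) the EOP $\theta(\mathscr{C},B) = \frac{1}{M}\sum_{i=1}^M \mathds{1}_{\{e_i<B\}}$ is unchanged, which is~\eqref{Eq124}. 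I do not expect a real obstacle here; the only point requiring care is to state precisely that rotation by $e^{\mathrm{i}\omega}$ is a measure‑preserving bijection of $\complex$ under which the Gaussian density and the decoding disks are equivariant, so that the correct‑decoding integrals match term by term — but this is immediate from the circular symmetry of the complex AWGN already exploited in Lemma~\ref{LemmaRadius}.
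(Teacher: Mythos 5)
Your proposal is correct and follows essentially the same route as the paper's proof in Appendix B: an isometric change of variables on $\complex$ (you rotate by $e^{\mathrm{i}\omega}$, the paper translates by $x_c^{(\ell)}-\hat{x}_c^{(\ell)}$) showing each per-symbol decoding integral is unchanged, combined with the type identity $P_{\boldsymbol{u}(i)}(x_c^{(\ell)})=P_{\boldsymbol{u}(i)}(\hat{x}_c^{(\ell)})$ for the DEP, and the magnitude invariance $|e^{\mathrm{i}\omega}u_m(i)|=|u_m(i)|$ for the EOP. Your extra observation that the integrals all equal $1-\exp(-r_c^2/\sigma^2)$ via~\eqref{Eq94}--\eqref{Eq95}, so the DEP never depended on the symbol phases in the first place, is a slight shortcut but not a different argument.
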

\begin{proof} 
The proof is given in Appendix~\ref{AppB}.
\end{proof}
Lemma~\ref{LemmaAlphaImmaterial} proves that the DEP and the EOP of a code $\mathscr{C}$ are independent of the rotations of the underlying set of channel input symbols $\mathcal{X}$. In fact, \textit{for the given construction} of codes in this section, the DEP and the EOP are actually independent of the vector of phase shifts $\boldsymbol{\alpha}$ in~\eqref{EqAlphaVector} altogether. The following lemma proves this result.
\begin{lemma} \label{LemmaAlphaImmaterialStrong}
Consider $(n,M,\mathcal{X},P,\epsilon,B,\delta)$-codes $\mathscr{C}$ and $\mathscr{C}'$ of the form in~\eqref{EqnmCodeCircle} from the family ${\sf C} \left(C,\boldsymbol{A},\boldsymbol{L},\boldsymbol{\alpha},\boldsymbol{p},\boldsymbol{r} \right)$ in~\eqref{EqFamily}. The codes 
$\mathscr{C}$ and $\mathscr{C}'$ are identical except for the phase shift $\alpha_c$ in~\eqref{EqLayerCircle}. More specifically, the set of channel input symbols of $\mathscr{C}$ and $\mathscr{C}'$ are $\mathcal{X}$ and $\mathcal{X}'$, respectively such that
\begin{IEEEeqnarray}{rCl}
\mathcal{X} &=& \bigcup_{c=1}^C \mathcal{U}(A_c,L_c,\alpha_c), \ \mbox{and} \\
\mathcal{X}' &=& \bigcup_{c=1}^C \mathcal{U}(A_c,L_c,\alpha_c').
\end{IEEEeqnarray}
Then, the average DEP $\gamma$ in~\eqref{eq:gamma} and the average EOP $\theta$ in~\eqref{eq:theta_def} of $\mathscr{C}$ and $\mathscr{C}'$ satisfy the following:
\begin{IEEEeqnarray}{rCl}
 \label{Eq173} \theta \left(\mathscr{C}, B \right) &=& \theta \left(\mathscr{C}', B \right), \ \mbox{and} \\
  \label{Eq159} \gamma \left(\mathscr{C} \right) &=& \gamma \left(\mathscr{C}' \right).
\end{IEEEeqnarray}
\end{lemma}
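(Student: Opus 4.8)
The plan is to reduce Lemma~\ref{LemmaAlphaImmaterialStrong} to Lemma~\ref{LemmaAlphaImmaterial}, which already shows invariance of the DEP and EOP under a \emph{global} rotation of $\mathcal{X}$. The obstacle is that here each layer $c$ is rotated independently by $\alpha_c' - \alpha_c$, so a single global rotation does not suffice. However, inspecting the closed-form expressions already established, one sees that neither the DEP nor the EOP actually depends on the phases at all. First I would recall from Lemma~\ref{LemmaRadius} (see~\eqref{Eq95}) that the average DEP of any code in the family is
\begin{IEEEeqnarray}{rCl}
\gamma\left(\mathscr{C}\right) = 1 - \frac{1}{M}\sum_{i=1}^M \prod_{c=1}^C \left(1 - \exp\left(-\frac{r_c^2}{\sigma^2}\right)\right)^{n\sum_{\ell=1}^{L_c} P_{\boldsymbol{u}(i)}(x_c^{(\ell)})},
\end{IEEEeqnarray}
which depends only on $C$, the radii $r_c$, the noise variance $\sigma^2$, and the per-layer codeword frequencies $\sum_{\ell=1}^{L_c} P_{\boldsymbol{u}(i)}(x_c^{(\ell)})$. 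None of these quantities is altered by changing $\alpha_c$ to $\alpha_c'$: the radii and $\sigma^2$ are unchanged by hypothesis, and since $\mathscr{C}$ and $\mathscr{C}'$ use the same probability vector $\boldsymbol{p}$ and the symbols within a layer are used with equal frequency (cf.~\eqref{EqTypeCircle}), the aggregate frequency with which layer $c$ is used in codeword $i$ is the same for both codes. This yields~\eqref{Eq159} directly.

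Next, for the EOP, I would invoke Lemma~\ref{LemmaAchievableEOP}, specifically~\eqref{Eq209}, which gives
\begin{IEEEeqnarray}{rCl}
\theta(\mathscr{C},B) = \frac{1}{M}\sum_{i=1}^M \mathds{1}_{\left\lbrace \left( k_1 \sum_{c=1}^C \sum_{\ell=1}^{L_c} n P_{\boldsymbol{u}(i)}\left(x_c^{(\ell)}\right) A_c^2 + k_2 \sum_{c=1}^C \sum_{\ell=1}^{L_c} n P_{\boldsymbol{u}(i)}\left(x_c^{(\ell)}\right) A_c^4 \right) < B \right\rbrace}.
\end{IEEEeqnarray}
The key point, already used in the proof of that lemma via~\eqref{Eq218}, is that $|x_c^{(\ell)}| = A_c$ regardless of the phase shift $\alpha_c$; thus the harvested energy $e_i$ depends on the codeword only through the amplitudes $A_c$ and the frequencies $P_{\boldsymbol{u}(i)}(x_c^{(\ell)})$, both of which are identical for $\mathscr{C}$ and $\mathscr{C}'$. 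Hence $e_i(\mathscr{C}) = e_i(\mathscr{C}')$ for every $i \inCountK{M}$, and therefore the indicator sums coincide, giving~\eqref{Eq173}.

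The main step requiring care — though it is conceptual rather than computational — is justifying that the codeword frequencies $P_{\boldsymbol{u}(i)}(x_c^{(\ell)})$ are genuinely unchanged when passing from $\mathscr{C}$ to $\mathscr{C}'$. This follows because a code in the family ${\sf C}\left(C,\boldsymbol{A},\boldsymbol{L},\boldsymbol{\alpha},\boldsymbol{p},\boldsymbol{r}\right)$ is, by the construction in Subsection~\ref{SubsecInformationEnergyTransmission}, completely determined (as a combinatorial object assigning layer-indices to codeword positions) by $C$, $\boldsymbol{L}$, $\boldsymbol{p}$, and $n$; the phase vector $\boldsymbol{\alpha}$ only relabels which concrete complex number sits at each layer-position, and the map $x_c^{(\ell)} \mapsto \hat{x}_c^{(\ell)}$ obtained by replacing $\alpha_c$ with $\alpha_c'$ within each layer is a bijection preserving layer membership. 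Consequently the empirical layer-frequencies, the radii, the amplitudes, and $\sigma^2$ — the only ingredients of the $\gamma$ and $\theta$ formulas — are preserved, and both equalities follow. I would close by remarking that this is strictly stronger than Lemma~\ref{LemmaAlphaImmaterial}, since per-layer rotations are more general than a single global rotation.
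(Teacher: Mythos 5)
Your proof is correct and takes essentially the same route as the paper's Appendix~\ref{AppC}: the EOP equality rests on $\left|x_c^{(\ell)}\right| = A_c$ exactly as the paper argues, and your DEP argument simply cites the closed form~\eqref{Eq95}, whose derivation via~\eqref{Eq94} already contains the center-independence of the Gaussian disc integral that the paper re-derives in the appendix by a change of variables. Your care in noting that the per-layer codeword frequencies are unchanged under the phase relabeling is the same (largely implicit) step the paper uses, cf.~its analogue~\eqref{EqEqualType}, so there is no gap.
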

\begin{proof} 
The proof is given in Appendix~\ref{AppC}.
\end{proof}
The following section presents a discussion of the impossibility and achievability bounds derived in this work.
\section{Discussion} \label{SecDiscussion} 
\subsubsection{Trade-offs} 
The impossibility and achievable bounds presented in this work reveal several interesting insights into the trade-offs between the information transmission rate $R$, the energy transmission rate $B$, the DEP $\epsilon$ and the EOP $\delta$. For instance, the impossibility as well as the achievable lower bounds on $\delta$ in~\eqref{eq:B_bound_lemma}  and~\eqref{Eq213}, respectively, increase as $B$ increases and vice versa. The consequence of this relationship is that a smaller $\delta$ can be achieved at the cost of lower $B$. Similarly, for increasing the impossibility or achievable upper bounds on $B$ in~\eqref{EqConverseB} and~\eqref{EqBAchievable}, respectively, a larger value of $\delta$ has to be tolerated.
 These bounds also reveal that both $B$ and $\delta$ can be improved by a code that has higher values of $e_i$ in~\eqref{Eqei}. This is achieved by using the symbols with greater energy more frequently in the code. 

The impossibility lower bound on the DEP $\epsilon$ in~\eqref{EqboundDEP} decreases as the distance between the symbols in the set of channel inputs $\mathcal{X}$ in~\eqref{EqConstellationCircle} increases. This implies that, a lower DEP can be achieved by increasing the distances between the channel input symbols. However, with the peak-amplitude constraint in place, increasing the distance between symbols implies that the number of symbols $L$ decreases. This in turn, decreases the upper bound on the information rate $R$ in~\eqref{EqRbound}. The achievable lower bound on $\epsilon$ in~\eqref{EqTheoremrc} decreases as a function of the radii $r_c$ of the decoding regions in~\eqref{EqDecodingCircle}. On the other hand, the upper bound on the achievable information rate $R$ in~\eqref{Eq174b} increases as the radii $r_c$ decrease.
These trade-offs between $R$, $B$, $\epsilon$, and $\delta$ are further illustrated using the following example.
\begin{figure}[t]
  \centering
  \includegraphics[width=0.7\textwidth]{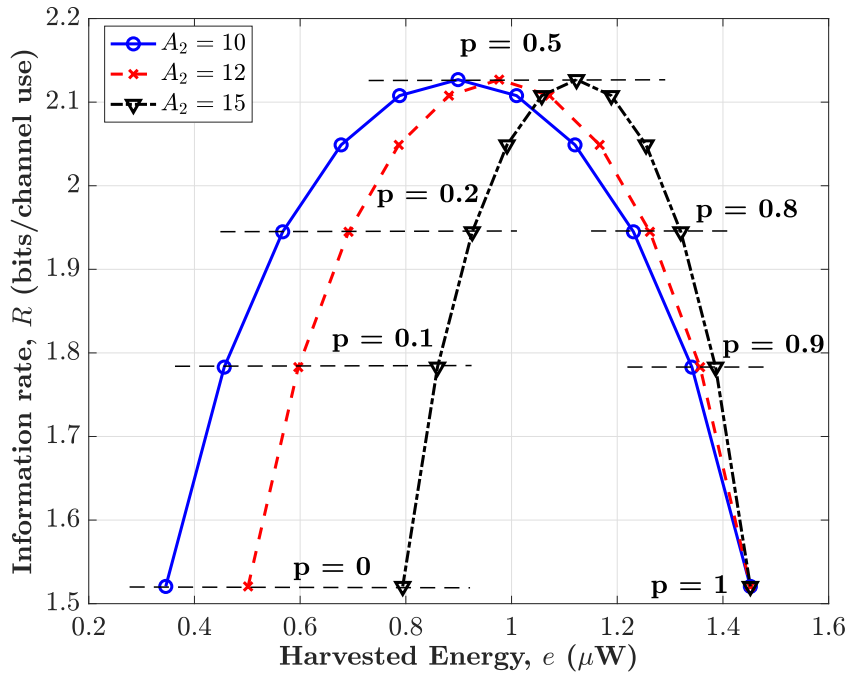}
\caption{Impossibility bounds on the information transmission rate $R$ in~\eqref{EqRbound} as a function of the harvested energy $e$ in~\eqref{Eq33}.}
\label{FigBR}
\end{figure}
%
%
%

Consider a family of constant composition $(n,M,\mathcal{X},P,\epsilon,B,\delta)$-codes $\mathscr{C}$ in the family ${\sf C} \left(C,\boldsymbol{A},\boldsymbol{L},\boldsymbol{\alpha},\boldsymbol{p},\boldsymbol{r} \right)$ in~\eqref{EqFamily} with the peak-amplitude constraint $P = 20$ millivolts in~\eqref{EqPCriteria}. The set of channel input symbols $\mathcal{X}$ in \eqref{EqConstellationCircle} is composed of two layers with $5$ symbols in each layer, \ie, $C=2$ and $L_1 = L_2 = 5$. The radius of the first layer is $A_1 = P$ and the radius of the second layer $A_2$ is varied to illustrate the trade-offs between the various parameters. The frequency with which symbols from the first layer appear in the code is $p_1 = p = 1-p_2$. That is, the vector $\boldsymbol{p}$ in~\eqref{Eqp} is given by
\begin{IEEEeqnarray}{l} \label{Eq71}
\boldsymbol{p} = \left( p, (1-p)\right)^{\sf{T}}.
 \end{IEEEeqnarray}
 The duration of the transmission in channel uses is $n = 100$. Since $\mathscr{C}$ is a constant composition code, from~\eqref{EqeiHomogeneous}, it holds that, for all $i \inCountK{M}$, 
 \begin{IEEEeqnarray}{rCl}
 e_i = e, \label{Eq33}
 \end{IEEEeqnarray}
where $e \in [0,\infty)$ is calculated as in~\eqref{EqeiHomogeneous}. 
 
Fig.~\ref{FigBR} shows the trade-offs between the impossibility bound on the information transmission rate $R$~\eqref{EqRbound} and the harvested energy $e$ in microwatts ($\mu W$) in~\eqref{Eq33} as a function of $p$ in~\eqref{Eq71}. Each curve in the figure is generated for some value of $A_2 < A_1$ by varying the value of $p \in [0,1]$. The following trade-offs can be observed from this figure. The harvested energy $e$ increases as $p$ increases. This is because higher $p$ corresponds to the symbols from the first layer $c=1$ which have higher energy (since $A_1 > A_2$) being used more frequently in $\mathscr{C}$. For a fixed value of $A_2$ in Fig.~\ref{FigBR}, the bound on the information rate $R$ first increases and then decreases as a function of $e$ in~\eqref{Eq33}. For each of these curves, the maximum $R = 2.13$ bits/channel use corresponds to the uniform type, \ie, $p = 0.5$. For $p$ lesser or greater than $0.5$, the bound on $R$ decreases. Furthermore, the bounds on $R$ are independent of the values of $A_1$ and $A_2$. This is due to the fact that the information rate $R$ is only a function of the number of codewords $M$ in~\eqref{Rfactorials}. Furthermore, the harvested energy $e$ also increases as $A_2$ increases. This is because, higher values of $A_2$ imply higher energy contained in the symbols in the second layer which in turn increases $e$. 

\begin{figure}[htb]
  \centering
  \includegraphics[width=0.7\textwidth]{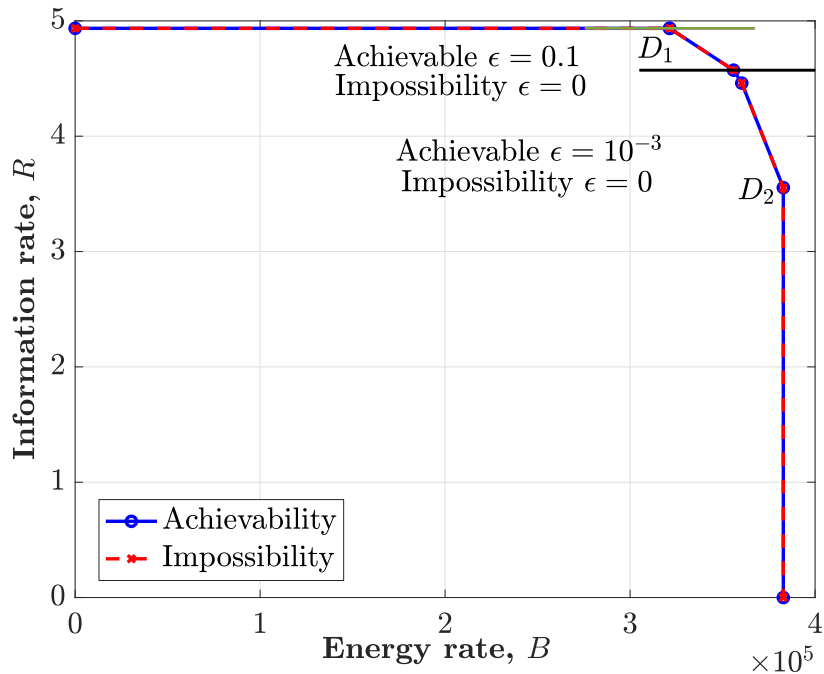}
  \caption{Impossibility and achievable information-energy regions for constant composition codes in the family ${\sf C} \left(C,\boldsymbol{A},\boldsymbol{L},\boldsymbol{\alpha},\boldsymbol{p},\boldsymbol{r} \right)$.}
\label{FigRegions}
\end{figure}

\subsubsection{Information-energy regions} \label{SecExamples}
Fig.~\ref{FigRegions} illustrates the impossibility and achievable information-energy regions of the constructed constant composition $(n,M,\mathcal{X},P,\epsilon,B,\delta)$-codes in the family ${\sf C} \left(C,\boldsymbol{A},\boldsymbol{L},\boldsymbol{\alpha},\boldsymbol{p},\boldsymbol{r} \right)$ in~\eqref{EqFamily}. 
For this example, consider constant composition $(n,M,\mathcal{X},P,\epsilon,B,\delta)$-codes in ${\sf C} \left(C,\boldsymbol{A},\boldsymbol{L},\boldsymbol{\alpha},\boldsymbol{p},\boldsymbol{r} \right)$ that employ the set of channel input symbols $\mathcal{X}$ of the form in~\eqref{EqConstellationCircle} with number of layers $C = 3$. The duration of the transmission in channel uses is $n = 80$. 
The radii of the decoding regions $r_c$ are assumed to be the same for all the layers i.e., for all $c \inCountK{C}$, the radius $r_c = r$ in~\eqref{EqDecodingCircle}. The value of $r$ is obtained according to~\eqref{EqRminEqual}. The amplitude of the first layer is $A_1 = 30$. Amplitudes of the second and third layers $A_2$ and $A_3$ are determined by $r$ according to~\eqref{EqAmplitudesDifference}. The points on the curves in Fig.~\ref{FigRegions} are obtained by varying $\epsilon$ and the probability vector $\boldsymbol{p}$ in~\eqref{Eqp}.

Fig.~\ref{FigRegions} shows the following trade-offs between the information and energy transmission rates in the impossibility and achievable curves.
Firstly, the maximum achievable information transmission rate is $R = 4.9$ bits/channel use. This $R$ is achieved by a code in which all the symbols in the set of channel inputs $\mathcal{X}$ are used with the same frequency. The maximum energy transmission rate that can be achieved at $R = 4.93$ bits/channel use is $B = 3.2 \times 10^5$ energy units. This corresponds to the point $D_1$ in Fig.~\ref{FigRegions}.
Secondly, the maximum achievable $B$ is $3.8 \times 10^{5}$ energy units. This is achieved by a code that exclusively uses the symbols in the first layer i.e., the probability vector $\boldsymbol{p}$ in~\eqref{Eqp} is $\boldsymbol{p} = (1,0,0)^{\sf{T}}$. The maximum $R$ that can be achieved at $B = 3.8 \times 10^{5}$ energy units is $R = 3.8$ bits/channel use. This corresponds to the point $D_2$ in Fig.~\ref{FigRegions}.
Thirdly, the curves between the points $D_1$ and $D_2$ in Fig.~\ref{FigRegions} show the trade-off between the information and energy transmission rates. As $B$ is increased from $3.2 \times 10^5$ energy units at point $D_1$, $R$ begins to decreases. Similarly, as $R$ is increased from $3.8$ bits/channel use at point $D_2$, $B$ begins to decrease.

The codes constructed in this work match the impossibility bounds in Section~\ref{sec:results} except for the DEP $\epsilon$. In Fig.~\ref{FigRegions}, the impossibility and achievable information-energy rate curves for $\mathscr{C}$ overlap. However, for the same information and energy rate pair, the DEP for the achievable curves is higher than the impossibility. The sub-optimality in DEP arises due to the choice of circular decoding regions in~\eqref{EqDecodingCircle} (See Remark~\ref{Remark1}). 
\subsubsection{Comparison with state of the art}
\begin{figure}[htb] 
  \centering
  \includegraphics[width=0.7\textwidth]{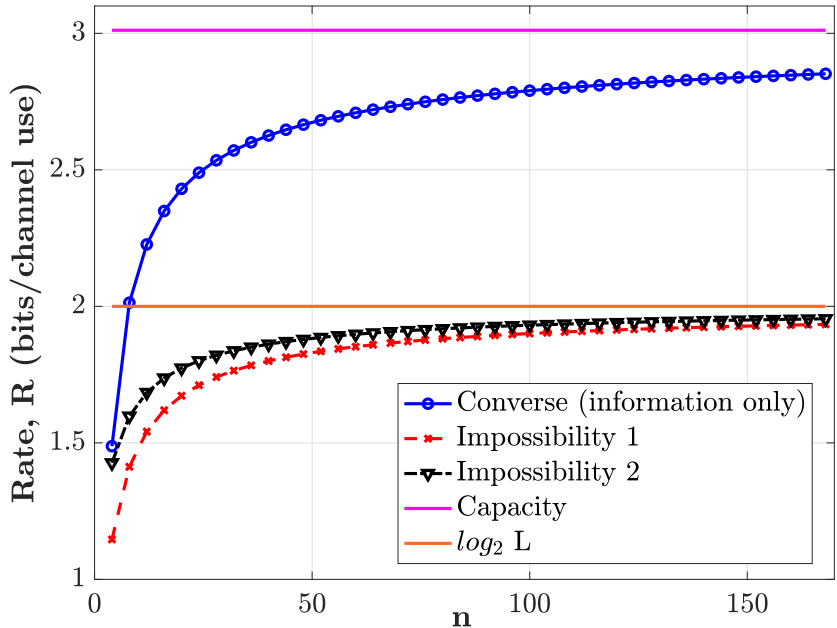}
  \caption{A comparison of the impossibility bounds on the information rate in~\eqref{EqRbound} (Impossibility 1) and~\eqref{EqCorRUpperBoundRelax} (Impossibility 2) for finite block-length SIET with the finite block-length converse in~\cite{polyanskiy2010channel} as a function of the block-length $n$.}
\label{FigCompare}
\end{figure}

Fig.~\ref{FigCompare} compares the impossibility bounds on the information rate proposed in~\eqref{EqRbound} (Impossibility 1) and~\eqref{EqCorRUpperBoundRelax} (Impossibility 2) with the finite block-length converse bound on the information rate in~\cite[Theorem 54]{polyanskiy2010channel} for a QPSK (\ie, $L=4$) constellation and signal to noise ratio (SNR) equal to $18$ dB. It should be noted that the converse bounds in~\cite{polyanskiy2010channel} are concerned with information transmission alone. 
The following observations can be made from these plots. Firstly, the proposed bounds are very close to the finite block-length converse for small block-lengths. As the block-length increases, the difference between the two bounds increases. Secondly, the impossibility bounds from~\eqref{EqRbound} and~\eqref{EqCorRUpperBoundRelax} are very close. This means that the approximation in~\eqref{EqCorRUpperBoundRelax} is tight, especially as the block-length increases. Finally, it should be noted that the bounds in~\eqref{EqRbound} and~\eqref{EqCorRUpperBoundRelax} are for the case of constant composition codes whereas the bound from~\cite{polyanskiy2010channel} has no such restriction. In fact, as stated in~\eqref{EqRbound}, the information rate $R$ can in fact not exceed the quantity $\log_2 L$. As can be observed from Fig.~\ref{FigCompare}, the proposed bounds actually come very close to the maximum value of $\log_2 L$ as the block-length $n$ increases.

 \bibliographystyle{IEEEtran}
\bibliography{myrefs}

\begin{thebibliography}{10}
\providecommand{\url}[1]{#1}
\csname url@samestyle\endcsname
\providecommand{\newblock}{\relax}
\providecommand{\bibinfo}[2]{#2}
\providecommand{\BIBentrySTDinterwordspacing}{\spaceskip=0pt\relax}
\providecommand{\BIBentryALTinterwordstretchfactor}{4}
\providecommand{\BIBentryALTinterwordspacing}{\spaceskip=\fontdimen2\font plus
\BIBentryALTinterwordstretchfactor\fontdimen3\font minus
  \fontdimen4\font\relax}
\providecommand{\BIBforeignlanguage}[2]{{%
\expandafter\ifx\csname l@#1\endcsname\relax
\typeout{** WARNING: IEEEtran.bst: No hyphenation pattern has been}%
\typeout{** loaded for the language `#1'. Using the pattern for}%
\typeout{** the default language instead.}%
\else
\language=\csname l@#1\endcsname
\fi
#2}}
\providecommand{\BIBdecl}{\relax}
\BIBdecl

\bibitem{varshney2008transporting}
L.~R. Varshney, ``{Transporting Information and Energy Simultaneously},'' in
  \emph{Proc. {IEEE} International Symposium on Information Theory ({ISIT})},
  Toronto, ON, Canada, Jul. 2008, pp. 1612--1616.

\bibitem{amor2016fundamental}
S.~B. Amor and S.~M. Perlaza, ``{Fundamental Limits of Simultaneous Energy and
  Information Transmission},'' in \emph{Proc. International Conference on
  Telecommunications ({ICT})}, Thessaloniki, Greece, May 2016, pp. 1--5.

\bibitem{NizarItw2021}
N.~Khalfet and I.~Krikidis, ``{The Capacity of SWIPT Systems over
  Rayleigh-Fading Channels with HPA},'' in \emph{Proc. IEEE Information Theory
  Workshop (ITW)}, Kanazawa, Japan, 2021, pp. 1--6.

\bibitem{GroverSahai}
P.~Grover and A.~Sahai, ``{Shannon Meets Tesla: Wireless Information and Power
  Transfer},'' in \emph{Proc. {IEEE} International Symposium on Information
  Theory}, 2010, pp. 2363--2367.

\bibitem{amor2016feedback}
S.~B. Amor, S.~M. Perlaza, I.~Krikidis, and H.~V. Poor, ``{Feedback Enhances
  Simultaneous Energy and Information Transmission in Multiple Access
  Channels},'' in \emph{Proc. {IEEE} International Symposium on Information
  Theory ({ISIT})}, Barcelona, Spain, Jul. 2016, pp. 1974--1978.

\bibitem{KhalfetGIC}
N.~Khalfet and S.~M. Perlaza, ``{Simultaneous Information and Energy
  Transmission in the Two-User {G}aussian Interference Channel},'' \emph{IEEE
  Journal on Selected Areas in Communications}, vol.~37, no.~1, pp. 156 --170,
  Jan. 2019.

\bibitem{perlaza2018simultaneous}
S.~M. Perlaza, A.~Tajer, and H.~V. Poor, ``{Simultaneous Information and Energy
  Transmission: {A} Finite Block-length Analysis},'' in \emph{Proc. {IEEE}
  International Workshop on Signal Processing Advances in Wireless
  Communications (SPAWC)}, Kalamata, Greece, Jun. 2018, pp. 1--5.

\bibitem{khalfet2019ultra}
N.~Khalfet, S.~M. Perlaza, A.~Tajer, and H.~V. Poor, ``{On Ultra-reliable and
  Low Latency Simultaneous Information and Energy Transmission Systems},'' in
  \emph{Proc. {IEEE} International Workshop on Signal Processing Advances in
  Wireless Communications (SPAWC)}, Cannes, France, Jul. 2019, pp. 1--5.

\bibitem{zuhraITW}
S.~u. Zuhra, S.~M. Perlaza, and E.~Altman, ``{Simultaneous Information and
  Energy Transmission with Finite Constellations},'' in \emph{Proc. {IEEE}
  Information Theory Workshop (ITW)}, Kanazawa, Japan, Oct. 2021, pp. 1--6.

\bibitem{zuhraISIT}
S.~u. Zuhra, S.~M. Perlaza, H.~V. Poor, and E.~Altman, ``{Achievable
  Information-Energy Region in the Finite Block-Length Regime with Finite
  Constellations},'' in \emph{Proc. {IEEE} International Symposium on
  Information Theory (ISIT)}, Espoo, Finland, Jun. 2022, pp. 2106--2111.

\bibitem{zuhraITW2}
S.~u. Zuhra, S.~M. Perlaza, H.~V. Poor, and M.~Skoglund, ``{Information-Energy
  Trade-offs with EH Non-linearities in the Finite Block-Length Regime with
  Finite Constellations },'' in \emph{Proc. {IEEE} Information Theory Workshop
  (ITW)}, Mumbai, India, Nov. 2022.

\bibitem{varasteh2017wireless}
M.~Varasteh, B.~Rassouli, and B.~Clerckx, ``{Wireless Information and Power
  Transfer over an {AWGN} Channel: Nonlinearity and Asymmetric {Gaussian}
  Signaling},'' in \emph{Proc. {IEEE} Information Theory Workshop (ITW)},
  Kaohsiung, Taiwan, Nov. 2017, pp. 181--185.

\bibitem{8115220}
B.~Clerckx, ``{Wireless Information and Power Transfer: Nonlinearity, Waveform
  Design, and Rate-Energy Tradeoff},'' \emph{IEEE Transactions on Signal
  Processing}, vol.~66, no.~4, pp. 847--862, 2018.

\bibitem{7547357}
B.~Clerckx and E.~Bayguzina, ``{Waveform Design for Wireless Power Transfer},''
  \emph{IEEE Transactions on Signal Processing}, vol.~64, no.~23, pp.
  6313--6328, 2016.

\bibitem{9447959}
S.~Abeywickrama, R.~Zhang, and C.~Yuen, ``{Refined Nonlinear Rectenna Modeling
  and Optimal Waveform Design for Multi-User Multi-Antenna Wireless Power
  Transfer},'' \emph{IEEE Journal of Selected Topics in Signal Processing},
  vol.~15, no.~5, pp. 1198--1210, 2021.

\bibitem{9411899}
S.~Shen and B.~Clerckx, ``{Joint Waveform and Beamforming Optimization for MIMO
  Wireless Power Transfer},'' \emph{IEEE Transactions on Communications},
  vol.~69, no.~8, pp. 5441--5455, 2021.

\bibitem{9184149}
O.~L.~A. López, F.~A. Monteiro, H.~Alves, R.~Zhang, and M.~Latva-Aho, ``{A
  Low-Complexity Beamforming Design for Multiuser Wireless Energy Transfer},''
  \emph{IEEE Wireless Communications Letters}, vol.~10, no.~1, pp. 58--62,
  2021.

\bibitem{7867826}
Y.~Zeng, B.~Clerckx, and R.~Zhang, ``{Communications and Signals Design for
  Wireless Power Transmission},'' \emph{IEEE Transactions on Communications},
  vol.~65, no.~5, pp. 2264--2290, 2017.

\bibitem{9447237}
P.~Mukherjee, C.~Psomas, and I.~Krikidis, ``{Differential Chaos Shift
  Keying-Based Wireless Power Transfer with Nonlinearities},'' \emph{IEEE
  Journal of Selected Topics in Signal Processing}, vol.~15, no.~5, pp.
  1185--1197, 2021.

\bibitem{9153166}
J.~Kim, B.~Clerckx, and P.~D. Mitcheson, ``{Signal and System Design for
  Wireless Power Transfer: Prototype, Experiment and Validation},'' \emph{IEEE
  Transactions on Wireless Communications}, vol.~19, no.~11, pp. 7453--7469,
  2020.

\bibitem{9377479}
A.~Khalili, S.~Zargari, Q.~Wu, D.~W.~K. Ng, and R.~Zhang, ``{Multi-Objective
  Resource Allocation for IRS-Aided SWIPT},'' \emph{IEEE Wireless
  Communications Letters}, vol.~10, no.~6, pp. 1324--1328, 2021.

\bibitem{9241856}
N.~Shanin, L.~Cottatellucci, and R.~Schober, ``{Markov Decision Process Based
  Design of SWIPT Systems: Non-Linear EH Circuits, Memory, and Impedance
  Mismatch},'' \emph{IEEE Transactions on Communications}, vol.~69, no.~2, pp.
  1259--1274, 2021.

\bibitem{9593249}
G.~M. Kraidy, C.~Psomas, and I.~Krikidis, ``{Fundamentals of Circular QAM for
  Wireless Information and Power Transfer},'' in \emph{Proc. {IEEE}
  International Workshop on Signal Processing Advances in Wireless
  Communications (SPAWC)}, 2021, pp. 616--620.

\bibitem{survey}
T.~D.~P. Perera, D.~N.~K. Jayakody, S.~K. Sharma, S.~Chatzinotas, and J.~Li,
  ``{Simultaneous Wireless Information and Power Transfer ({SWIPT}): Recent
  Advances and Future Challenges},'' \emph{{IEEE} Communications Surveys \&
  Tutorials}, vol.~20, no.~1, pp. 264--302, Dec. 2018.

\bibitem{surveyA}
J.~Huang, C.-C. Xing, and C.~Wang, ``{Simultaneous Wireless Information and
  Power Transfer: Technologies, Applications, and Research Challenges},''
  \emph{{IEEE} Communications Magazine}, vol.~55, no.~11, pp. 26--32, Nov.
  2017.

\bibitem{8476597}
B.~Clerckx, R.~Zhang, R.~Schober, D.~W.~K. Ng, D.~I. Kim, and H.~V. Poor,
  ``{Fundamentals of Wireless Information and Power Transfer: From RF Energy
  Harvester Models to Signal and System Designs},'' \emph{IEEE Journal on
  Selected Areas in Communications}, vol.~37, no.~1, pp. 4--33, 2019.

\bibitem{ClerckxFoundations}
B.~Clerckx, J.~Kim, K.~W. Choi, and D.~I. Kim, ``{Foundations of Wireless
  Information and Power Transfer: Theory, Prototypes, and Experiments},''
  \emph{Proceedings of the IEEE}, vol. 110, no.~1, pp. 8--30, Jan. 2022.

\bibitem{6373669}
L.~Liu, R.~Zhang, and K.-C. Chua, ``{Wireless Information Transfer with
  Opportunistic Energy Harvesting},'' \emph{IEEE Transactions on Wireless
  Communications}, vol.~12, no.~1, pp. 288--300, 2013.

\bibitem{9149424}
N.~Shanin, L.~Cottatellucci, and R.~Schober, ``{Rate-Power Region of SWIPT
  Systems Employing Nonlinear Energy Harvester Circuits with Memory},'' in
  \emph{Proc. {IEEE} International Conference on Communications (ICC)}, Dublin,
  Ireland, Jun. 2020, pp. 1--7.

\bibitem{7998252}
E.~Goudeli, C.~Psomas, and I.~Krikidis, ``{Sequential Decoding for Simultaneous
  Wireless Information and Power Transfer},'' in \emph{Proc. International
  Conference on Telecommunications (ICT)}, 2017, pp. 1--5.

\bibitem{9741251}
G.~Lin, Y.~Zhou, W.~Jiang, X.~He, X.~Zhou, G.~He, and P.~Yang, ``{LF-SWIPT:
  Outage Analysis for SWIPT Relaying Networks Using Lossy Forwarding With QoS
  Guaranteed},'' \emph{IEEE Internet of Things Journal}, pp. 1--1, 2022.

\bibitem{9734045}
H.~T. Thien, P.-V. Tuan, and I.~Koo, ``{A Secure-Transmission Maximization
  Scheme for SWIPT Systems Assisted by an Intelligent Reflecting Surface and
  Deep Learning},'' \emph{IEEE Access}, vol.~10, pp. 31\,851--31\,867, 2022.

\bibitem{liu2022joint}
J.~Liu, C.-H.~R. Lin, Y.-C. Hu, and P.~K. Donta, ``{Joint Beamforming, Power
  Allocation, and Splitting Control for SWIPT-Enabled IoT Networks with Deep
  Reinforcement Learning and Game Theory},'' \emph{Sensors}, vol.~22, no.~6, p.
  2328, 2022.

\bibitem{9502719}
B.~Clerckx, K.~Huang, L.~R. Varshney, S.~Ulukus, and M.-S. Alouini, ``{Wireless
  Power Transfer for Future Networks: Signal Processing, Machine Learning,
  Computing, and Sensing},'' \emph{IEEE Journal of Selected Topics in Signal
  Processing}, vol.~15, no.~5, pp. 1060--1094, 2021.

\bibitem{6489506}
R.~Zhang and C.~K. Ho, ``{MIMO Broadcasting for Simultaneous Wireless
  Information and Power Transfer},'' \emph{IEEE Transactions on Wireless
  Communications}, vol.~12, no.~5, pp. 1989--2001, 2013.

\bibitem{7063588}
C.~Song, C.~Ling, J.~Park, and B.~Clerckx, ``{MIMO Broadcasting for
  Simultaneous Wireless Information and Power Transfer: Weighted MMSE
  Approaches},'' in \emph{Proc. {IEEE} Globecom Workshops (GC Wkshps)}, 2014,
  pp. 1151--1156.

\bibitem{9169700}
M.~Varasteh, J.~Hoydis, and B.~Clerckx, ``{Learning to Communicate and
  Energize: Modulation, Coding, and Multiple Access Designs for Wireless
  Information-Power Transmission},'' \emph{IEEE Transactions on
  Communications}, vol.~68, no.~11, pp. 6822--6839, 2020.

\bibitem{CsiszarMoT}
I.~Csisz\'{a}r, ``{The Method of Types},'' \emph{IEEE Transactions on
  Information Theory}, vol.~44, no.~5, pp. 2505 -- 2523, Oct. 1998.

\bibitem{tseV}
D.~Tse and P.~Viswanath, \emph{{Fundamentals of Wireless Communication}}.\hskip
  1em plus 0.5em minus 0.4em\relax Cambridge University Press, 2005.

\bibitem{polyanskiy2013empirical}
Y.~Polyanskiy and S.~Verd{\'u}, ``Empirical distribution of good channel codes
  with nonvanishing error probability,'' \emph{IEEE transactions on information
  theory}, vol.~60, no.~1, pp. 5--21, 2013.

\bibitem{lapidoth}
A.~Lapidoth, \emph{{A Foundation in Digital Communication}}, 2nd~ed.\hskip 1em
  plus 0.5em minus 0.4em\relax Cambridge University Press, 2017.

\bibitem{8613368}
A.~Dytso, M.~Egan, S.~M. Perlaza, H.~V. Poor, and S.~S. Shitz, ``{Optimal
  Inputs for Some Classes of Degraded Wiretap Channels},'' in \emph{Proc.
  {IEEE} Information Theory Workshop (ITW)}, 2018, pp. 1--5.

\bibitem{8878162}
A.~Dytso, S.~Yagli, H.~V. Poor, and S.~Shamai~Shitz, ``{The Capacity Achieving
  Distribution for the Amplitude Constrained Additive Gaussian Channel: An
  Upper Bound on the Number of Mass Points},'' \emph{IEEE Transactions on
  Information Theory}, vol.~66, no.~4, pp. 2006--2022, 2020.

\bibitem{8849318}
S.~Yagli, A.~Dytso, H.~V. Poor, and S.~S. Shitz, ``{An Upper Bound on the
  Number of Mass Points in the Capacity Achieving Distribution for the
  Amplitude Constrained Additive Gaussian Channel},'' in \emph{Proc. {IEEE}
  International Symposium on Information Theory (ISIT)}, 2019, pp. 1907--1911.

\bibitem{proakis}
J.~G. Proakis and M.~Salehi, \emph{{Digital Communications}}, 5th~ed.\hskip 1em
  plus 0.5em minus 0.4em\relax McGraw-Hill Higher Education, 2008.

\bibitem{robbins1955remark}
H.~Robbins, ``{A Remark on {Stirling's} Formula},'' \emph{The American
  Mathematical Monthly}, vol.~62, no.~1, pp. 26--29, 1955.

\bibitem{polyanskiy2010channel}
Y.~Polyanskiy, H.~V. Poor, and S.~Verd{\'u}, ``{Channel Coding Rate in the
  Finite Blocklength Regime},'' \emph{IEEE Transactions on Information Theory},
  vol.~56, no.~5, pp. 2307--2359, 2010.

\end{thebibliography}
 
 \appendix
 \subsection{Proof of Lemma~\ref{LemmaGeometry}} \label{AppA}
\begin{proof}
In Fig.~\ref{FigConstellation}, consider the circle of radius $r_c$ centered at $T$ and the larger circle of radius $A_c$ centered at the origin $O$. The circles intersect at points $S$ and $U$. The angle subtended by the major arc $\arc{\mbox{SU}}$ at $O$ is the reflex angle $2\pi - \beta$. Since the angle subtended by an arc of a circle at its centre is two times the angle that it subtends anywhere on the circumference, it holds that
\begin{equation}
    2\pi- \beta = 2\angle \mbox{STU},
\end{equation}
which implies that
\begin{equation} \label{EqHalfAngle}
    \angle \mbox{STU} = \frac{2\pi- \beta}{2}.
\end{equation}
The line segment $TO$ bisects angles $\angle \mbox{STU}$ and $\angle \mbox{SOU}$. Therefore, the following hold:
   \begin{IEEEeqnarray}{rCl}
        \angle \mbox{STO} &=& \frac{2\pi - \beta}{4}, \\
       \angle \mbox{SOT} &=& \frac{\beta}{2}.
   \end{IEEEeqnarray}
   From the triangle $\triangle \mbox{SOT}$, it holds that:
   \begin{IEEEeqnarray}{rCl}
  \frac{\sin \left(\angle \mbox{SOT} \right)}{ST} = \frac{\sin \left(\angle \mbox{STO} \right)}{SO}.
       \end{IEEEeqnarray}
       This implies that,
     \begin{IEEEeqnarray}{rCl}    
\frac{\sin \left(\frac{\beta}{2} \right) }{r_c} &=& \frac{\sin \left(\frac{2\pi-\beta}{4} \right)}{A_c}, \\
	&=& \frac{1}{A_c} \sin \left(\frac{\pi}{2} - \frac{\beta}{4} \right), \\
	&=&  \frac{1}{A_c} \cos \left( \frac{\beta}{4} \right). \label{Eq127a}
 \end{IEEEeqnarray}
From~\eqref{Eq127a}, it follows that,
       \begin{IEEEeqnarray}{rCl}    	
	  \frac{2}{r_c} \sin \left(\frac{\beta}{4} \right) \cos \left(\frac{\beta}{4} \right) = \frac{1}{A_c} \cos \left( \frac{\beta}{4} \right), 
	  \end{IEEEeqnarray}
	  which implies that,
	 \begin{IEEEeqnarray}{rCl}  
	 && \sin \left(\frac{\beta}{4} \right)  = \frac{r_c}{2A_c},  \quad \mbox{and}\\
       &&\beta = 4\arcsin{\frac{r_c}{2A_c}}.
    \end{IEEEeqnarray}
This completes the proof.
\end{proof}
\subsection{Proof of Lemma~\ref{LemmaAlphaImmaterial}} \label{AppB}
\begin{proof} 
From~\eqref{EqAvgGamma}, the average DEP for $\mathscr{C}$ is given by the following:
\begin{IEEEeqnarray}{rCl} 
    \gamma \left(\mathscr{C} \right) &=& 1 - \frac{1}{M} \sum_{i=1}^M \prod_{c=1}^C \prod_{\ell=1}^{L_c} \left(  \int_{\mathcal{G}_c^{(\ell)}} \frac{1}{\pi \sigma^2} \exp \left(- \frac{(\Re(y)-\Re(x_c^{(\ell)}))^2 +(\Im(y) - \Im(x_c^{(\ell)}))^2}{\sigma^2} \right) \mathrm{d}y  \right)^{n P_{\boldsymbol{u}(i)}(x_c^{(\ell)})},  \label{Eq129} \\
&=& 1 - \frac{1}{M} \sum_{i=1}^M \prod_{c=1}^C \prod_{\ell=1}^{L_c} \left( \int_{y: \left| y - x_c^{(\ell)} \right|^2 \leq r_c^2} \frac{1}{\pi \sigma^2}\exp \left( - \frac{\left| y - x_c^{(\ell)} \right|^2}{\sigma^2} \right) \mathrm{d}y   \right)^{n P_{\boldsymbol{u}(i)}(x_c^{(\ell)})},
\label{Eq130}
\end{IEEEeqnarray}
where, the expression in~\eqref{Eq130} follows from~\eqref{Eq129} due to~\eqref{Eq4a} and~\eqref{EqDecodingCircleComplex}.
The change of variable $y = z + x_c^{(\ell)} - \hat{x}_c^{(\ell)}$ in the integral in~\eqref{Eq130} yields
\begin{equation} \label{Eq127}
    \int_{y: \left| y - x_c^{(\ell)} \right|^2 \leq r_c^2} \frac{1}{\pi \sigma^2}\exp \left( - \frac{\left| y - x_c^{(\ell)} \right|^2}{\sigma^2} \right) \mathrm{d}y = \int_{z: \left| z - \hat{x}_c^{(\ell)} \right|^2 \leq r_c^2} \frac{1}{\pi \sigma^2}\exp \left( - \frac{\left| z - \hat{x}_c^{(\ell)} \right|^2}{\sigma^2} \right) \mathrm{d}z.
\end{equation}
Using~\eqref{Eq127} in~\eqref{Eq130} yields:
\begin{IEEEeqnarray}{rCl} \label{Eq131}
 \gamma \left(\mathscr{C} \right) &=&  1 - \frac{1}{M} \sum_{i=1}^M \prod_{c=1}^C \prod_{\ell=1}^{L_c} \left( \int_{z: \left| z - \hat{x}_c^{(\ell)} \right|^2 \leq r_c^2} \frac{1}{\pi \sigma^2}\exp \left( - \frac{\left| z - \hat{x}_c^{(\ell)} \right|^2}{\sigma^2} \right) \mathrm{d}z   \right)^{n P_{\boldsymbol{u}(i)}(x_c^{(\ell)})}.
\end{IEEEeqnarray}
The empirical probability of usage of symbols is equal for the symbols $x_c^{(\ell)}$ and $\hat{x}_c^{(\ell)}$. That is, for all $I \inCountK{M}$, all $c \inCountK{C}$ and all $\ell \inCountK{L_c}$, it holds that
\begin{equation} \label{EqEqualType}
    P_{\boldsymbol{u}(i)}(x_c^{(\ell)}) = P_{\boldsymbol{u}(i)}(\hat{x}_c^{(\ell)}).
\end{equation} 
From \eqref{Eq131} and~\eqref{EqEqualType}, it follows that
\begin{IEEEeqnarray}{rCl} 
 \gamma \left(\mathscr{C} \right) &=&  1 - \frac{1}{M} \sum_{i=1}^M \prod_{c=1}^C \prod_{\ell=1}^{L_c} \left( \int_{z: \left| z - \hat{x}_c^{(\ell)} \right|^2 \leq r_c^2} \frac{1}{\pi \sigma^2}\exp \left( - \frac{\left| z - \hat{x}_c^{(\ell)} \right|^2}{\sigma^2} \right) \mathrm{d}z   \right)^{n P_{\boldsymbol{u}(i)}(\hat{x}_c^{(\ell)})} \\
 &=& \gamma \left(\mathscr{C}' \right),
\end{IEEEeqnarray}
which is the desired result in~\eqref{Eq123}.

From~\eqref{EqShiftedSymbols}, for all $c \inCountK{C}$, all $\ell \inCountK{L_c}$, and $\omega \in [0,2\pi]$, it holds that,
\begin{equation} 
    \hat{x}_c^{\left( \ell \right)} = e^{\mathrm{i} \omega} x_c^{\left( \ell \right)}.
\end{equation} 
This implies that
\begin{equation} \label{Eq136}
  \left|\hat{x}_c^{\left( \ell \right)} \right| = \left| x_c^{\left( \ell \right)} \right|. 
\end{equation} 

Using~\eqref{Eq25} and~\eqref{Eq27c}, the EOP for $\mathscr{C}$ is given by
\begin{IEEEeqnarray}{rCl} 
 \theta(\mathscr{C},B) & = & \frac{1}{M} \sum_{i=1}^M \mathds{1}_{\left\lbrace k_1 \sum_{c=1}^C \sum_{\ell=1}^{L_c} n P_{\boldsymbol{u}(i)} \left( x_c^{(\ell)} \right) \left| x_c^{(\ell)} \right|^2 + k_2 \sum_{c=1}^C \sum_{\ell=1}^{L_c} n P_{\boldsymbol{u}(i)} \left( x_c^{(\ell)} \right) \left| x_c^{(\ell)} \right|^4 < B \right\rbrace} \\
 &=& \frac{1}{M} \sum_{i=1}^M \mathds{1}_{\left\lbrace k_1 \sum_{c=1}^C \sum_{\ell=1}^{L_c} n P_{\boldsymbol{u}(i)} \left( \hat{x}_c^{(\ell)} \right) \left| \hat{x}_c^{(\ell)} \right|^2 + k_2 \sum_{c=1}^C \sum_{\ell=1}^{L_c} n P_{\boldsymbol{u}(i)} \left( \hat{x}_c^{(\ell)} \right) \left| \hat{x}_c^{(\ell)} \right|^4 < B \right\rbrace} \label{Eq138} \\
&=& \theta \left(\mathscr{C}',B \right),
\end{IEEEeqnarray}
where, the equality in~\eqref{Eq138} follows from \eqref{EqEqualType} and~\eqref{Eq136}.
This completes the proof.
\end{proof}
\subsection{Proof of Lemma~\ref{LemmaAlphaImmaterialStrong}} \label{AppC}
\begin{proof} 
Denote the $\ell$\ts{th} symbol in layer $c$ of $\mathscr{C}$ by $x_c^{(\ell)}$ and that of $\mathscr{C}'$ by $\bar{x}_c^{(\ell)}$. From~\eqref{EqLayerCircle}, it follows that
\begin{IEEEeqnarray}{rCl}
x_c^{\left( \ell \right)} &=& A_c \exp\left(\mathrm{i} \left(\frac{2\pi}{L_c} \ell+\alpha_c\right)\right) \\
&=& A_c \exp\left(\mathrm{i} \left(\frac{2\pi}{L_c} \ell+\alpha_c + \alpha_c' - \alpha_c'\right)\right) \\
&=& A_c \exp\left(\mathrm{i} \left(\frac{2\pi}{L_c} \ell+\alpha_c' \right)\right) \exp\left(\mathrm{i} \left(\alpha_c - \alpha_c'\right)\right) \\
&=& \exp\left(\mathrm{i} \left(\alpha_c - \alpha_c'\right)\right)  \bar{x}_c^{\left( \ell \right)} \\
&=& \exp\left(\mathrm{i} \omega_c \right)  \bar{x}_c^{\left( \ell \right)},
\end{IEEEeqnarray}
where, $\omega_c = \alpha_c - \alpha_c' \in [0,2 \pi]$.
This implies that for all $c \inCountK{C}$ and all $\ell \inCountK{L_c}$, it holds that
\begin{IEEEeqnarray}{rCl} \label{Eq194}
\left| x_c^{\left( \ell \right)} \right| &=& \left|  \bar{x}_c^{\left( \ell \right)} \right|.
\end{IEEEeqnarray}
The result in~\eqref{Eq173} then follows from~\eqref{Eq194} and Lemma~\ref{LemmaAlphaImmaterial}.

From~\eqref{Eq130}, the average DEP for $\mathscr{C}$ is given by
\begin{IEEEeqnarray}{rCl} 
    \gamma \left(\mathscr{C} \right) &=& 1 - \frac{1}{M} \sum_{i=1}^M \prod_{c=1}^C \prod_{\ell=1}^{L_c} \left( \int_{y: \left| y - x_c^{(\ell)} \right|^2 \leq r_c^2} \frac{1}{\pi \sigma^2}\exp \left( - \frac{\left| y - x_c^{(\ell)} \right|^2}{\sigma^2} \right) \mathrm{d}y   \right)^{n P_{\boldsymbol{u}(i)}(x_c^{(\ell)})}    \label{Eq165} \\
    &=& 1 - \frac{1}{M} \sum_{i=1}^M \prod_{c=1}^C \prod_{\ell=1}^{L_c} \left( \int_{y: \left| y - x_c^{(\ell)} \right|^2 \leq r_c^2} \frac{1}{\pi \sigma^2}\exp \left( - \frac{\left| y - x_c^{(\ell)} \right|^2}{\sigma^2} \right) \mathrm{d}y   \right)^{n P_{\boldsymbol{u}(i)}(\bar{x}_c^{(\ell)})},
    \label{Eq166}
\end{IEEEeqnarray}
where, the equality in~\eqref{Eq166} follows from~\eqref{Eq165} because $P_{\boldsymbol{u}(i)}(x_c^{(\ell)}) = P_{\boldsymbol{u}(i)}(\bar{x}_c^{(\ell)})$.
Using the change of variable $y = z + x_c^{(\ell)} - \bar{x}_c^{(\ell)}$ in~\eqref{Eq166} yields
\begin{IEEEeqnarray}{rCl} 
    \gamma \left(\mathscr{C} \right) &=& 1 - \frac{1}{M} \sum_{i=1}^M \prod_{c=1}^C \prod_{\ell=1}^{L_c} \left( \int_{z: \left| z - \bar{x}_c^{(\ell)} \right|^2 \leq r_c^2} \frac{1}{\pi \sigma^2}\exp \left( - \frac{\left| z - \bar{x}_c^{(\ell)} \right|^2}{\sigma^2} \right) \mathrm{d}z   \right)^{n P_{\boldsymbol{u}(i)}(\bar{x}_c^{(\ell)})} \\
    &=& \gamma \left(\mathscr{C}' \right),
\end{IEEEeqnarray}
which completes the proof of~\eqref{Eq159}.
\end{proof}
\subsection{Proof of Lemma~\ref{LemmaRateMax}} \label{AppD}
\begin{proof} 
For the $(n,M,\mathcal{X},P,\epsilon,B,\delta)$-code $\mathscr{C}'$ in the family ${\sf C} \left(C,\boldsymbol{A},\boldsymbol{L},\boldsymbol{\alpha},\boldsymbol{p},\boldsymbol{r} \right)$ with $p_c$ of the form in~\eqref{EqpcRmax}, from~\eqref{EqTypeCircle}, for all $c \inCountK{C}$ and all $\ell \inCountK{L_c}$, the type $P_{\mathscr{C}'}$ is given by the following:
\begin{IEEEeqnarray}{rCl}
    P_{\mathscr{C}'}(x_c^{(\ell)}) = \frac{p_c}{L_c} = \frac{1}{L}. \label{Eq210}
\end{IEEEeqnarray}

  For the $(n,M,\mathcal{X},P,\epsilon,B,\delta)$-code $\mathscr{C}'$ with the type $P_{\mathscr{C}'}$  of the form in~\eqref{Eq210}, the number of codewords that can be represented using $L$ symbols is given by $M = L^n$. From~\eqref{EqR}, the information transmission rate for code $\mathscr{C}'$ is given by
    \begin{IEEEeqnarray}{rCl}
    R \left(\mathscr{C}' \right) = \frac{\log_2 M}{n} = \log_2 L.
\end{IEEEeqnarray}
  For the $(n,M,\mathcal{X},P,\epsilon,B,\delta)$-code $\mathscr{C}$, from~\eqref{EqR} and Lemma~\ref{lemma:R_upperbound}, it follows that,
  \begin{IEEEeqnarray}{rCl}
    R \left(\mathscr{C} \right) \leq \log_2 L = R \left(\mathscr{C}' \right).
\end{IEEEeqnarray}
This completes the proof.
\end{proof}
\subsection{Proof of Lemma~\ref{LemmaMaxEnergy}} \label{AppE}
\begin{proof} 
Since $p_c = 0$ for all $c \in \left \lbrace 2,3, \ldots, C \right \rbrace$, from~\eqref{Eq115i} it follows that, for all $i \inCountK{M}$, all $c \in \left \lbrace 2,3, \ldots, C \right \rbrace$ and all $\ell \inCountK{L_c}$, the type 
\begin{IEEEeqnarray}{rCl} \label{Eq2344}
P_{\boldsymbol{u}(i)} \left( x_c^{(\ell)} \right) = 0.
\end{IEEEeqnarray}
Since $P_{\boldsymbol{u}(i)}$ is a pmf as defined in~\eqref{eq:u_measure}, it holds that, for all $i \inCountK{M}$,
\begin{IEEEeqnarray}{rCl} \label{Eq2355}
\sum_{c=1}^C \sum_{\ell = 1}^{L_c} P_{\boldsymbol{u}(i)} \left( x_c^{(\ell)} \right) = 1.
\end{IEEEeqnarray}
From~\eqref{Eq2344} and~\eqref{Eq2355}, it follows that,
\begin{IEEEeqnarray}{rCl} \label{Eq236}
\sum_{\ell = 1}^{L_1} P_{\boldsymbol{u}(i)} \left( x_1^{(\ell)} \right) = 1.
\end{IEEEeqnarray}

From~\eqref{Eq25} and~\eqref{Eq27c}, the EOP for the code $\mathscr{C}$ is given by
\begin{IEEEeqnarray}{rCl} 
 \theta(\mathscr{C},B) & = & \frac{1}{M} \sum_{i=1}^M \mathds{1}_{\left\lbrace k_1 \sum_{c=1}^C \sum_{\ell=1}^{L_c} n P_{\boldsymbol{u}(i)} \left( x_c^{(\ell)} \right) \left| x_c^{(\ell)} \right|^2 + k_2 \sum_{c=1}^C \sum_{\ell=1}^{L_c} n P_{\boldsymbol{u}(i)} \left( x_c^{(\ell)} \right) \left| x_c^{(\ell)} \right|^4 < B \right\rbrace} \\
 &=& \frac{1}{M} \sum_{i=1}^M \mathds{1}_{\left\lbrace k_1 \sum_{\ell=1}^{L_1} n P_{\boldsymbol{u}(i)} \left( x_1^{(\ell)} \right) A_1^2 + k_2  \sum_{\ell=1}^{L_1} n P_{\boldsymbol{u}(i)} \left( x_1^{(\ell)} \right) A_1^4 < B \right\rbrace} \label{Eq238} \\
 & = & \frac{1}{M} \sum_{i=1}^M \mathds{1}_{\left\lbrace k_1 n A_1^2 + k_2 n A_1^4 < B \right\rbrace}, \label{Eq219}
\end{IEEEeqnarray}
where, the equality in~\eqref{Eq238} follows from~\eqref{Eq2344} and~\eqref{Eq219} follows from~\eqref{Eq236}.
Similarly, the EOP for the code $\mathscr{C}'$ is given by
\begin{IEEEeqnarray}{rCl} 
 \theta(\mathscr{C}',B) & = & \frac{1}{M} \sum_{i=1}^M \mathds{1}_{\left\lbrace k_1 \sum_{c=1}^C \sum_{\ell=1}^{L_c} n P_{\boldsymbol{u}(i)} \left( x_c^{(\ell)} \right) \left| x_c^{(\ell)} \right|^2 + k_2 \sum_{c=1}^C \sum_{\ell=1}^{L_c} n P_{\boldsymbol{u}(i)} \left( x_c^{(\ell)} \right) \left| x_c^{(\ell)} \right|^4 < B \right\rbrace} \\
 &=& \frac{1}{M} \sum_{i=1}^M \mathds{1}_{\left\lbrace k_1 \sum_{c=1}^C \sum_{\ell=1}^{L_c} n P_{\boldsymbol{u}(i)} \left( x_c^{(\ell)} \right) A_c^2 + k_2 \sum_{c=1}^C \sum_{\ell=1}^{L_c} n P_{\boldsymbol{u}(i)} \left( x_c^{(\ell)} \right) A_c^4 < B \right\rbrace} \label{Eq220} \\
 &\geq& \frac{1}{M} \sum_{i=1}^M \mathds{1}_{\left\lbrace k_1 \sum_{c=1}^C \sum_{\ell=1}^{L_c} n P_{\boldsymbol{u}(i)} \left( x_c^{(\ell)} \right) A_1^2 + k_2 \sum_{c=1}^C \sum_{\ell=1}^{L_c} n P_{\boldsymbol{u}(i)} \left( x_c^{(\ell)} \right) A_1^4 < B \right\rbrace} \label{Eq221} \\
 & = & \frac{1}{M} \sum_{i=1}^M \mathds{1}_{\left\lbrace k_1 n A_1^2 + k_2 n A_1^4 < B \right\rbrace} \label{Eq222} \\
 &=& \theta(\mathscr{C},B), \label{Eq223}
\end{IEEEeqnarray}
where, the equality in~\eqref{Eq220} follows from~\eqref{Eq218}; the inequality in~\eqref{Eq221} follows from~\eqref{EqAmplitudesOrder}; the equality in~\eqref{Eq222} follows from~\eqref{Eq2355}; and~\eqref{Eq223} follows from~\eqref{Eq219}.
This completes the proof.
\end{proof}

%
%
%
%
%
%
%
%

\end{document}